\tikzset{
  pathBox/.style ={fill=red!18, draw=red!70, very thick},
  stepLabel/.style={font=\scriptsize\bfseries,
                    fill=red!80, text=white,
                    circle, inner sep=1.3pt}
}
\providecommand{\email}[1]{\texttt{#1}}
\title{Sparse Navigable Graphs for Nearest Neighbor Search: Algorithms and Hardness} 
\author{Sanjeev Khanna\thanks{University of Pennsylvania \{\email{sanjeev@cis.upenn.edu}, \email{apadaki@seas.upenn.edu}, \email{ewaingar@seas.upenn.edu}\}} \and Ashwin Padaki\footnotemark[1] \and Erik Waingarten\footnotemark[1]}
\begin{document} 

\maketitle 

\begin{abstract}

We initiate the study of approximation algorithms and computational barriers for constructing sparse $\alpha$-navigable graphs~\cite{IX23, DGM24}, a core primitive underlying recent advances in graph-based nearest neighbor search. Given an $n$-point dataset $P$ with an associated metric $\sfd$ and a parameter $\alpha \geq 1$, the goal is to efficiently build the sparsest graph $G=(P, E)$ that is $\alpha$-navigable: for every distinct $s, t \in P$, there exists an edge $(s, u) \in E$ with $\sfd(u, t) < \sfd(s, t)/\alpha$. We consider two natural sparsity objectives: minimizing the maximum out-degree and minimizing the total size (or equivalently, the average degree).

Our starting point is a strong negative result: the slow-preprocessing version of DiskANN (analyzed in~\cite{IX23} for low-doubling metrics) can yield solutions whose sparsity is $\widetilde{\Omega}(n)$ times larger than optimal, even on Euclidean instances. We then show a tight approximation-preserving equivalence between the Sparsest Navigable Graph problem and the classic Set Cover problem, obtaining an $O(n^3)$-time $(\ln n + 1)$-approximation algorithm, as well as establishing NP-hardness of achieving an $o(\ln n)$-approximation. Building on this equivalence, we develop faster $O(\ln n)$-approximation algorithms. The first runs in $\widetilde{O}(n \cdot \mathrm{OPT})$ time and is therefore much faster when the optimal solution is sparse. The second, based on fast matrix multiplication, is a bicriteria algorithm that computes an $O(\ln n)$-approximation to the sparsest $2\alpha$-navigable graph, running in $\widetilde{O}(n^{\omega})$ time.
Finally, we complement our upper bounds with a query complexity lower bound, showing that any $o(n)$-approximation requires examining $\Omega(n^2)$ distances. This result shows that in the regime where $\mathrm{OPT} = \widetilde{O}(n)$, our $\widetilde{O}(n \cdot \mathrm{OPT})$-time algorithm is essentially best possible.

Collectively, these results significantly advance our understanding of the computational complexity of computing sparse navigable graphs.

\end{abstract}

\newpage

\section{Introduction}\label{sec:intro}

This work is on graph-based methods for similarity search, an algorithmic approach which has emerged over the past decade and demonstrated impressive empirical performance~(see~\cite{BIGANN21,BIGANN23}). These techniques aim to solve the \emph{$c$-approximate nearest neighbor search} problem: for an underlying metric space $(X, \sfd)$, the task is to preprocess a dataset $P \subset X$ of $n$ points into a data structure which supports approximate nearest neighbor queries; a new (unseen) query point $q \in X$ is presented, and the data structure must output a dataset point whose distance to $q$ is at most $c$ times that of the closest dataset point. In a graph-based method, the data structure is a directed graph whose vertices correspond to dataset points, and queries are processed by executing a ``greedy-like'' search over the graph.\footnote{The specific implementation varies across heuristics; as in recent work of~\cite{IX23,DGM24}, we analyze the greedy search process (explained formally below).} Until relatively recently, these methods had not been studied from the theoretical computer science perspective, and this work aims to study these methods from the vantage point of \emph{approximation algorithms}. We begin by discussing two recent and related works of~\cite{IX23, DGM24}, which motivate the algorithmic questions at the heart of this paper.

\paragraph{Provable Guarantees for Graph-Based Methods via the Doubling Dimension~\cite{IX23}.}
A feature of graph-based methods is that they are often agnostic to the underlying metric. However, from the worst-case perspective, one may construct metric spaces in which nearest neighbor search requires a prohibitive $\Omega(n)$ query time for any approximation (e.g., a uniform metric~\cite{KL04}). As a result, the community has developed notions of intrinsic dimensionality, which seeks to capture the low-dimensional metric structure in a dataset (see the survey~\cite{C06,DK20}). The notion of intrinsic dimensionality most relevant to nearest neighbor search is the \emph{doubling dimension}~\cite{GKL03, KL04}: a set of points has doubling dimension $\lambda$ if every metric ball can be covered with $2^{\lambda}$ balls of half the radius. For datasets with low doubling dimension, efficient nearest neighbor search is possible. For example, cover trees ~\cite{BKL06} give an $O(n)$-space data structure whose query time is $2^{O(\lambda)} \log n$ for exact nearest neighbor search, and $2^{O(\lambda)} \log \Delta + (1/\eps)^{O(\lambda)}$ for $(1+\eps)$-approximate nearest neighbor search.\footnote{The parameter $\Delta$ is the aspect ratio of the input dataset $P$. It is the maximum pairwise distance between points divided by the minimum nonzero pairwise distance between points.}

Indyk and Xu~\cite{IX23} were the first to parametrize the performance of graph-based methods in terms of the doubling dimension. They study several popular algorithms that are used in practice---DiskANN~\cite{SDSKK19}, NSG~\cite{FXWC19}, and HNSW~\cite{MY18}---and construct datasets with constant doubling dimension (in particular, in $\R^2$) where all of these algorithms require reading $\sim$10\% of the dataset in order to produce an acceptable near neighbor. While seemingly bad news, they also show that DiskANN with \emph{slow-preprocessing time} (removing a heuristic from~\cite{SDSKK19}) performs provably well in metrics with low doubling dimension, with query time $(1/\eps)^{O(\lambda)} \log^2 \Delta$ for $(1+\eps)$-approximate nearest neighbor search. This was the first indication that a theoretical computer science perspective could shed light on the worst-case performance of graph-based methods.\footnote{Prior theoretical works, such as~\cite{L18, PS20}, study graph-based methods from an average-case complexity perspective, when the dataset is generated uniformly at random on a high-dimensional sphere.} Since then, multiple works have sought to prove guarantees and limitations on various aspects of these methods~\cite{DGM24, XSI24, AIKMRSX25, ADGMMS25, GKSW25}. The underlying structural property in~\cite{IX23} as well as the subsequent work~\cite{DGM24} is that of \emph{navigability}, which we discuss next.

\paragraph{Navigability Suffices for Nearest Neighbor Search.} We formalize graph-based methods following~\cite{IX23}. A graph-based method for a dataset $P$, a finite subset of a metric space $(X, \sfd)$, is specified by a directed graph $G$ whose vertices are the elements of $P$. A query $q \in X$ is processed by starting at an arbitrary vertex of $G$ and following a greedy search: at a current vertex $s$, we compute distances $\sfd(q, u)$ for all out-neighbors $u$ of $s$ in the graph $G$; if the minimum $\sfd(q, u)$ is strictly smaller than $\sfd(q, s)$, we update the current vertex to $u$ and repeat; otherwise, we output $s$. The total space of the data structure is the number of edges in $G$. The query time is at most the maximum out-degree of a vertex in $G$, multiplied by the length of the longest ``greedy path'' that may be encountered. The algorithmic challenge, then, is devising a low-degree graph that guarantees fast and accurate convergence of these greedy paths for potentially infinitely-many query points. The key to the analysis in~\cite{IX23} is the following definition, which gives a sufficient condition on the graph $G$ for fast and accurate search.\footnote{In their work,~\cite{IX23} term this notion ``$\alpha$-shortcut reachability.'' We adapt the terminology of~\cite{DGM24} and use the term $\alpha$-navigability.} 
\begin{definition}[$\alpha$-Navigability (Definition~3.1 in~\cite{IX23})]\label{def:navigable} For $\alpha\geq 1$, a directed graph $G = (P, E)$ is said to be \emph{$\alpha$-navigable} if for every $s, t \in P$ with $\sfd(s, t) > 0$, there is some $u \in P$ with $(s, u) \in E$ and $\sfd(u, t) < \sfd(s, t) / \alpha$.
\end{definition}

An important algorithmic contribution is Theorem~3.4 in~\cite{IX23}, which shows that in an $\alpha$-navigable graph $G$, a greedy search always returns an $(\frac{\alpha +1}{\alpha-1} + \eps)$-approximate nearest neighbor in at most $O(\log_{\alpha}(\frac{\Delta}{(\alpha-1) \eps}))$ steps. The final dependence of the query time on the doubling dimension arises from their analysis of the slow-preprocessing algorithm of DiskANN, which constructs an $\alpha$-navigable graph with out-degree at most $O(\alpha)^{\lambda} \log \Delta$. 

The subsequent work of~\cite{DGM24} studies navigability without the assumption of low doubling dimension. Focusing on the case $\alpha = 1$ (the weakest form of $\alpha$-navigability), they show that every $n$-point metric admits a $1$-navigable graph with average degree $O(\sqrt{n \log n})$. Furthermore, they construct subsets of $O(\log n)$-dimensional Euclidean space for which every $1$-navigable graph must have average degree at least $\Omega(n^{1/2 - \delta})$, for any constant $\delta > 0$.

\paragraph{Motivating Questions and Contributions.} Roughly speaking,~\cite{IX23} shows that low doubling dimension is a sufficient condition for the existence of sparse $\alpha$-navigable graphs, while~\cite{DGM24} proves an existential guarantee that every metric admits a nontrivial $1$-navigable graph. Motivated by these results, we propose a broader algorithmic question: given an arbitrary metric and $\alpha\geq 1$, what is the sparsest $\alpha$-navigable graph---and how efficiently can this graph be computed? 

More precisely, we ask:
\begin{itemize}
\item Is there an efficient (approximation) algorithm that, for any metric, constructs the sparsest $\alpha$-navigable graph (minimizing either the maximum out-degree or the total number of edges), thereby yielding instance-by-instance guarantees?
\item Since preprocessing time is a central concern in graph-based nearest neighbor search methods, can we design fast algorithms for building $\alpha$-navigable graphs which do not sacrifice on provable guarantees? 
\end{itemize}
Our work is the first to address these questions from the viewpoint of approximation algorithms. This perspective offers two key advantages: (i) it yields graphs that are (approximately) optimal in sparsity for use in nearest neighbor search, and (ii) when the output graph is dense, it serves as a certificate that no substantially sparser $\alpha$-navigable graphs exist for that instance.

\subsection{Overview of Results}

We provide a comprehensive set of results for computing the sparsest $\alpha$-navigable graph, from both the approximation algorithms and computational hardness perspectives.

\paragraph{Suboptimality of Existing Heuristics (Theorem~\ref{thm:diskann-worst-case}).}
We first show that the slow-preprocessing variant of DiskANN, previously analyzed for low-doubling-dimension metric spaces, can be suboptimal in (nearly) the strongest possible sense. Specifically, we construct explicit subsets of Euclidean space for which this algorithm produces $1$-navigable graphs with $\Omega(n^2)$ edges and maximum out-degree $\Omega(n)$, despite the existence of $1$-navigable graphs with only $O(n \log n)$ edges and maximum out-degree $O(\log n)$. Thus, the DiskANN algorithm incurs a worst-case approximation ratio of $\Omega(n/\log n)$ for both sparsity and degree, even in Euclidean space. Our result bridges a gap in the theoretical analysis of empirical heuristics for graph-based nearest neighbor search.

\paragraph{Equivalence to Set Cover: Approximation Algorithms and Hardness (Theorem~\ref{thm:nav-sc-bounds}).}
We show that the problem of constructing the sparsest $\alpha$-navigable graph is tightly connected to the classic Set Cover problem, by showing an approximation-preserving reduction in both directions. Specifically, for any $\alpha \geq 1$, we prove that:
\begin{itemize}
    \item Any polynomial-time $\rho(n)$-approximation for Set Cover immediately yields a $\rho(n)$-approximation for the sparsest $\alpha$-navigable graph (with respect to both the number of edges and maximum out-degree).
    \item Conversely, a $\rho(n)$-approximation algorithm for the sparsest $1$-navigable graph yields a $\rho(\mathrm{poly}(n))$-approximation for Set Cover.
\end{itemize}
As a consequence, we obtain a polynomial-time $(\ln n + 1)$-approximation algorithm for the sparsest $\alpha$-navigable graph, and show that obtaining a better-than $(c \ln n)$-approximation (for some absolute $c > 0$) is \nphard, even for $\alpha = 1$.

The equivalence above is established by formulating $\alpha$-navigability constraints as covering conditions: for each source vertex, the set of outgoing edges must collectively cover all ``navigation targets'' in accordance with the underlying metric. This enables a direct application of the greedy Set Cover algorithm and its hardness bounds to the navigability setting.

\paragraph{Fast Algorithms for Sparse Instances and Bicriteria Approximation for Dense Instances (Theorems~\ref{thm:fast-log-approx} and~\ref{thm:bicriteria-alg}).}
The preceding connection to the Set Cover problem results in an $O(n^3)$-time algorithm which is not well-suited for large instances. We next show that it is possible to get faster algorithms, both when the solution size is small, and when it is large, albeit with a slight relaxation in requirements:

\begin{itemize}
    \item For instances where the optimal $\alpha$-navigable graph has $\opt$ edges, we give a randomized algorithm that outputs an $O(\ln n)$-approximation in $\Ot(n \cdot \opt)$ time. The key is a {\em membership-query} implementation of the greedy Set Cover algorithm, allowing for an output-sensitive runtime that is sublinear when the minimum cover is small. The membership query model relies on queries of the form {\em ``is element $x$ contained in set $S$''} which, in our setting, corresponds to checking a single $\alpha$-navigability constraint and can be supported in $O(1)$ time.\footnote{To our knowledge, known sublinear-time algorithms for Set Cover~\cite{IMR18,KY14} operate in the adjacency-list model, which assumes sets are presented explicitly as lists of elements. In contrast, our setting is akin to an adjacency-matrix access model, and converting between the two models itself takes linear time.}
    
    \item For general instances, with possibly dense optimal $\alpha$-navigable graphs, we give an algorithm running in $\Ot(n^\omega \log\Delta / \eps)$ time ($\omega$ is the matrix multiplication exponent) that produces an $\alpha$-navigable graph with at most $O(\ln n)$ times the out-degree and number of edges of the sparsest $2\alpha(1+\varepsilon)$-navigable graph, for any $\varepsilon\in(0,1)$. A key insight underlying our algorithm is that we can use fast Boolean matrix multiplication to batch-verify navigability constraints.
\end{itemize}

\paragraph{Lower Bound on Query Complexity (Theorem~\ref{thm:query-lb}).}

We prove a strong lower bound in the black-box metric access model: any algorithm making $o(n^2)$ queries to the metric cannot guarantee even an $o(n)$-approximation to the sparsest $\alpha$-navigable graph problem (under both objectives) even when there is an optimal solution with maximum out-degree $3$. In other words, any non-trivial approximation necessitates examining essentially all pairwise distances in the metric. Thus in the setting of sparse navigable graphs, our bound obtained in Theorem~\ref{thm:fast-log-approx} is essentially best possible.

\paragraph{Very Recent Independent Work.} Very recently, in an independent work, Conway et al.~\cite{CDC25} also studied sparse navigability and its connection to the Set Cover problem. In particular, for the problem of computing an $O(\log n)$-approximation to the sparsest $\alpha$-navigable graph, they give an $\Ot(n^2)$-time algorithm when $\alpha=1$, and an $\Ot(\min\{n^{2.5}, n\cdot\opt\})$-time algorithm when $\alpha > 1$.

\subsection{Organization of the Paper}

Section~\ref{sec:prelims} defines the Sparsest Navigable Graph problem and introduces key preliminaries. Section~\ref{sec:slow-diskann} gives a worst-case approximation lower bound for the slow-preprocessing version of DiskANN. Section~\ref{sec:set-cover} establishes a connection to Set Cover, yielding an approximation algorithm and NP-hardness result. In Section~\ref{sec:upper-bounds}, we present faster approximation algorithms for both sparse and dense instances. Finally, Section~\ref{sec:lower-bounds} proves a strong query complexity lower bound.
\section{Preliminaries}\label{sec:prelims}

\subsection{The $\sparsenav$ Problem}\label{subsec:sparse-nav-graph}

For any directed graph $G = (P, E)$, let $\deg_G(s)$ denote the \emph{out-degree} of vertex $s \in P$.

In the $\sparsenav$ problem, we are given as input a finite metric space $(P,\sfd)$ and a parameter $\alpha\geq 1$. The goal is to output an $\alpha$-navigable graph $G = (P,E)$ that approximates the sparsest $\alpha$-navigable graph on $P$, with respect to either the maximum out-degree or the total number of edges. We formalize our notion of approximation below.

\begin{definition}[Approximate $\sparsenav$]\label{def:approx-nav}
Let $(P,\sfd)$ be a finite metric space and $\alpha\geq 1$. Let $G = (P,E)$ be an $\alpha$-navigable graph.\begin{itemize}
    \item $G$ is a $c$-approximation to the sparsest $\alpha$-navigable graph under the \maxdeg objective if $\max_{s\in P} \deg_G(s)$ is at most $c$ times the maximum out-degree of any $\alpha$-navigable graph on $P$.
    \item $G$ is a $c$-approximation to the sparsest $\alpha$-navigable graph under the \avgdeg objective if $|E|$ is at most $c$ times the number of edges in any $\alpha$-navigable graph on $P$.
\end{itemize}
\end{definition}

Recall that in the worst-case analysis of graph-based nearest neighbor search introduced by~\cite{IX23}, the running time to return an $(\frac{\alpha+1}{\alpha-1}+\eps)$-approximate nearest neighbor query on an $\alpha$-navigable graph $G$ is bounded by \smash{$O(\log_{\alpha}(\frac{\Delta}{(\alpha-1)\eps}))$} times the maximum out-degree of $G$. The space complexity of the corresponding data structure is proportional the average degree of $G$. Therefore, a $c$-approximation to the sparsest $\alpha$-navigable graph with respect to the $\maxdeg$ (resp. $\avgdeg$) objective incurs a $c$-factor overhead in query time (resp. space complexity) relative to the best possible such graph under this analysis.

\subsection{The Set Cover Problem}

\begin{definition}[\setcover]\label{def:set-cover}
    An instance of \setcover is given by a pair $(\calU, \calF)$, where $\calU$ is a set and $\calF$ is a collection of subsets of $\calU$. A cover is a subcollection $\calT\subset \calF$ whose union covers all elements in $\calU$, and the task is to output a cover of minimal size.
\end{definition}

\begin{lemma}[Theorem 4 of~\cite{Joh73}]\label{lem:greedy-set-cover}
    Let $(\calU,\calF)$ be an instance of \setcover with $|\calU| = n$ and $|\calF| = m$. There is an algorithm which returns a $(\ln n + 1)$-approximation to the optimal set cover in time \[O\del{\sum_{S\in\calF} |S|} = O(mn).\]
\end{lemma}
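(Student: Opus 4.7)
My plan is to analyze the textbook greedy algorithm, originally due to Johnson, Lov\'asz, and Chv\'atal. The algorithm maintains a set of uncovered elements $R \subseteq \mathcal{U}$, initialized to $R = \mathcal{U}$; at each step, it selects a set $S \in \mathcal{F}$ maximizing $|S \cap R|$, appends $S$ to the cover, and updates $R \leftarrow R \setminus S$, terminating when $R = \emptyset$. For the $(\ln n + 1)$-approximation ratio, I would use the standard charging argument: for each element $e$, define a charge $c(e) = 1/|S \cap R|$, where $S$ is the greedy set that first covers $e$ and $R$ is the uncovered set just before $S$ is chosen. The total number of selected sets equals $\sum_{e \in \mathcal{U}} c(e)$. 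For any set $T \in \mathcal{F}$, ordering its elements $e_1, \ldots, e_{|T|}$ by the time they become covered, the greedy rule implies $c(e_i) \leq 1/(|T| - i + 1)$, so $\sum_{e \in T} c(e) \leq H_{|T|} \leq \ln n + 1$. Applying this to each set in a fixed optimal cover of size $\mathrm{OPT}$ bounds the total charge, and hence the greedy cover size, by $(\ln n + 1) \cdot \mathrm{OPT}$.

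The nontrivial part is implementing the greedy in $O(\sum_{S \in \mathcal{F}} |S|)$ time, since a naive implementation recomputes $|S \cap R|$ from scratch each iteration. I would maintain, for each set $S$, its current residual size $r(S) = |S \cap R|$, bucketed by value: specifically, an array $B[0], B[1], \ldots, B[n]$ of doubly-linked lists where $B[k]$ stores all sets with $r(S) = k$, along with pointers from each set to its node in the list. Initialization places each $S$ in bucket $B[|S|]$ in total $O(\sum_S |S|)$ time. A pointer $k^{\star}$ tracks the largest nonempty bucket. At each greedy step, we take any $S \in B[k^{\star}]$, iterate through the elements $e \in S$ that are still uncovered (using a boolean marker array), and for each such $e$ iterate over all sets $T$ containing $e$ via the element-to-sets adjacency list obtained by a one-time transposition of the input, decrementing $r(T)$ and moving $T$ to $B[r(T) - 1]$ in $O(1)$. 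After processing, we advance $k^{\star}$ downward until the next nonempty bucket.

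The main obstacle is bounding the total work by $O(\sum_S |S|)$. The key observation is that each element-set incidence $(e, T)$ with $e \in T$ triggers $O(1)$ work exactly once, namely at the step when $e$ is first covered (decrementing $r(T)$ and relocating $T$). This accounts for all bucket updates. Iterating through the chosen sets' elements is similarly bounded by the total incidence count. Since $r$-values only decrease, $k^{\star}$ only moves downward, so its total movement is at most $n$ across the entire execution, which is dominated by $\sum_S |S|$. Finally, the transposition needed to construct the element-to-sets adjacency list is a single pass of size $\sum_S |S|$. Combining these bounds gives the claimed $O(\sum_S |S|) = O(mn)$ runtime, completing the lemma.
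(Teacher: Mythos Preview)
Your proof is correct and is precisely the classical analysis of the greedy set cover algorithm (the charging argument for the $H_n$-approximation and the bucket-based implementation for the linear-in-incidences runtime). However, the paper does not actually prove this lemma: it is stated as a citation to Johnson's original result (Theorem~4 of~\cite{Joh73}) and used as a black box, so there is no ``paper's own proof'' to compare against. Your write-up supplies exactly the standard argument one would expect for this cited fact.
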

\begin{lemma}[Corollary 4 of~\cite{DS14}]\label{lem:np-hard-set-cover}
    Let $\eps > 0$ be a constant, and let $(\calU,\calF)$ be an instance of \setcover with $|\calU| = n$ and $|\calF| = m = n^{1 + O(1/\eps)} = n^{O(1)}$. It is \nphard to output a $((1-\eps)\cdot\ln n)$-approximation of the size of the minimum set cover.

\end{lemma}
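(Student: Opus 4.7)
The plan is to reproduce the Dinur--Steurer reduction from gap Label Cover to Set Cover, since this lemma is cited essentially verbatim from~\cite{DS14}. First, I would invoke their strong \nphard gap Label Cover: a bipartite projection-constraint instance with alphabet size $k = k(\eps)$ such that it is \nphard to distinguish whether all constraints can be simultaneously satisfied from whether at most a $\gamma(\eps)$-fraction can be, with instance size polynomial in the original SAT instance (not quasi-polynomial). This is the central contribution of~\cite{DS14}, obtained via their analytical parallel repetition, and it strengthens Feige's $(1-\eps)\ln n$ Set Cover hardness from quasi-NP-hardness to NP-hardness.

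Second, I would compose this Label Cover instance with a classical \emph{partition system} (in the style of Lund--Yannakakis and Naor--Schulman--Srinivasan): a universe $U$ of size $M$ together with $R$ partitions of $U$ into $L$ blocks each, with the property that any subcollection of blocks covering $U$ but avoiding an entire partition has size $\Omega(\ln M)$. For each right-vertex $v$ of Label Cover, place a fresh copy $U_v$; for each left-vertex/label pair $(u, a)$, create one set that, in each $U_v$ such that $(u,v)$ is a constraint, covers the block indexed by the projection of $a$. A satisfying labeling then yields a cover with one set per left-vertex (completeness), while any cover of size $< (1-\eps)\ln n$ times the number of left-vertices forces ``consistent'' choices across left-vertices via the partition system and so yields a labeling beating the $\gamma$-soundness (soundness). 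Setting $M = n^{\Theta(1/\eps)}$ and balancing $k$, $L$, $R$ against the Label Cover parameters produces a universe of size $n$ and $m = n^{1 + O(1/\eps)}$ sets, matching the stated parameters.

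The hard part, by far, is the polynomial-size gap Label Cover itself: naive parallel repetition gives only inverse-polylog soundness decay for a polynomial number of repetitions, yielding quasi-polynomial instance blowup and hence only quasi-NP-hardness (which is what Feige's original argument achieves). Everything else in the reduction---the partition system, the set-construction gadget, and the completeness/soundness analysis---is by now standard, so the entire weight of the lemma rests on invoking the analytical parallel repetition theorem of~\cite{DS14} as a black box and feeding its output into the Feige-style gadget.
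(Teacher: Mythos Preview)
Your sketch is a faithful outline of the Dinur--Steurer argument and would, if carried through, establish the lemma. However, the paper does not prove this statement at all: it is stated as a black-box citation of Corollary~4 of~\cite{DS14}, with no accompanying proof or proof sketch. So there is nothing to compare against---you have simply gone further than the paper by unpacking the cited result, whereas the paper treats it as an imported fact.
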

\newcommand{\SlowDiskANN}{\textup{Slow-DiskANN}}

\section{An $\Omt(n)$-Approximation Lower Bound for $\SlowDiskANN$}\label{sec:slow-diskann}

In this section, we examine the slow-preprocessing variant of the DiskANN algorithm~\cite{SDSKK19} studied by Indyk and Xu~\cite{IX23}. This algorithm takes as input a finite metric space $(P,\sfd)$ and constructs an $\alpha$-navigable graph. We show that in the worst case, this particular algorithm outputs an $\Omt(n)$-approximation of the sparsest $\alpha$-navigable graph, under both the $\maxdeg$ and $\avgdeg$ objectives.

\subsection{The $\SlowDiskANN$ algorithm}

$\SlowDiskANN$ builds the out-neighborhood of each $s \in P$ using a greedy covering rule. It repeatedly adds an edge from $s$ to the nearest vertex in $P \setminus \{s\}$ whose $\alpha$-navigability constraint is not yet covered. After an edge is added, it prunes all newly covered vertices, and this process repeats until all vertices are covered.

\begin{algorithm}[H]\label{alg:slow-diskann}
\caption{$\SlowDiskANN$~\cite{IX23}}
\KwIn{Finite metric space $(P,\sfd)$, parameter $\alpha\ge 1$}
\KwOut{$\alpha$-navigable graph $G=(P,E)$}
\ForEach{$s\in P$}{
  Let $U\leftarrow P\setminus\{s\}$ and sort $U$ by increasing $\sfd(s,\cdot)$\;
  \ForEach{$u\in U$ in this order}{
    Update $E\gets E \cup \{(s,u)\}$\;
    Remove from $U$ any $t$ with $\sfd(u,t) < \sfd(s,t)/\alpha$\;
  }
}
\end{algorithm}

\begin{lemma}[Lemma 3.2 of~\cite{IX23}]\label{lem:diskann-navigable}
Algorithm~\ref{alg:slow-diskann} produces an $\alpha$-navigable graph $G = (P,E)$ and runs in $O(n\cdot |E|) = O(n^3)$ time.
\end{lemma}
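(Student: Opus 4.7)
The plan is to verify the two claims separately: that the output is $\alpha$-navigable, and that the runtime is $O(n \cdot |E|)$. Both parts are short, with the heart of the argument being a two-case analysis for navigability.

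For navigability, I would fix $s, t \in P$ with $\sfd(s, t) > 0$ and focus on the outer-loop iteration whose source is $s$. Initially $t \in U$, and because the inner loop only terminates when $U$ is empty, $t$ eventually leaves $U$. There are exactly two ways this can happen: either $t$ is itself selected as the next element $u$, or $t$ is pruned during the iteration that selects some earlier vertex $u \neq t$. In the first case, the edge $(s, t)$ is added to $E$, and we can take the witness to be $t$ itself, since $\sfd(t, t) = 0 < \sfd(s, t)/\alpha$. In the second case, the pruning rule gives exactly $\sfd(u, t) < \sfd(s, t)/\alpha$ for the earlier-selected $u$, so $(s, u)$ is the navigating edge. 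Either way, the $\alpha$-navigability constraint for the pair $(s, t)$ is witnessed.

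For the runtime, I would fix a source $s$ and account for work within its outer-loop iteration. Sorting $U$ by distance from $s$ takes $O(n \log n)$ time (or one may substitute repeated $O(n)$ find-min calls to avoid the log factor). Each edge insertion $(s, u)$ is followed by a pruning scan over $U$ costing $O(n)$. Since the number of edges added from $s$ equals $\deg_G(s)$, the total per-source cost is $O(n \cdot \deg_G(s))$, absorbing the one-time sort. Summing over all $s \in P$ yields total runtime $O(n \cdot |E|)$, which is $O(n^3)$ in the worst case since $|E| \le n^2$.

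There is no real obstacle here; the only subtlety worth flagging is recognizing that the degenerate case where $t$ itself is the selected vertex still produces a valid witness by allowing $u = t$ in Definition~\ref{def:navigable}, which is consistent since the hypothesis $\sfd(s, t) > 0$ already guarantees $\sfd(t, t) = 0 < \sfd(s,t)/\alpha$.
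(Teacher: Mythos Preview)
Your proposal is correct and follows essentially the same approach as the paper: the two-case analysis for navigability (either $(s,t)\in E$ or $t$ was pruned by some earlier $u$ with $(s,u)\in E$), and the observation that the pruning scan dominates the runtime. If anything, you are more careful than the paper in explicitly flagging the $u=t$ witness case and in noting that repeated $O(n)$ find-min calls avoid the $\log n$ factor that a one-time sort would otherwise introduce into the per-source bound.
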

\begin{proof}
    If $s\neq t$, then either $(s,t)\in E$ or there was some vertex $u\in P\setminus\{s\}$ responsible for pruning the edge $(s,t)$. In the latter case, this implies $(s,u)\in E$ and $d(u,t)\le d(s,t)/\alpha$, thus proving that $G$ is $\alpha$-navigable. The runtime of $G$ is bottlenecked by the pruning procedure, wherein the algorithm performs a linear scan of $U$ every time an edge is added. Hence, the runtime is $O(n\cdot|E|) = O(n^3)$.
\end{proof}

\cite{IX23} also gives a guarantee on the maximum degree of the graph produced by $\SlowDiskANN$, in terms of the \textit{doubling dimension}~\cite{GKL03} of the metric. Specifically, if an $n$-point metric has doubling dimension $\lambda$ and aspect ratio $\Delta$, they show that the $\alpha$-navigable graph returned by $\SlowDiskANN$ has maximum degree at most $O((4\alpha)^{\lambda} \cdot \log \Delta)$.

\subsection{Approximation lower bound for $\SlowDiskANN$}

Despite the above guarantee for $\SlowDiskANN$, we now show that in the worst case, the graph it constructs can be an $\Omt(n)$-approximation to the sparsest $\alpha$-navigable graph, for both the $\maxdeg$ and $\avgdeg$ objectives. Indeed, while~\cite{IX23} proves that metrics with low doubling dimension admit sparse $\alpha$-navigable graphs, we construct high-dimensional Euclidean metrics (which will have large doubling dimension) and admit sparse navigable graphs; yet, $\SlowDiskANN$ on these metrics outputs dense graphs.

\begin{theorem}\label{thm:diskann-worst-case}
$\SlowDiskANN$ (Algorithm~\ref{alg:slow-diskann}) incurs a worst-case approximation factor of $\Omega(n / \log n)$ to the sparsest $\alpha$-navigable graph, under both the $\maxdeg$ and $\avgdeg$ objectives.
\end{theorem}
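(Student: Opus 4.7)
The plan is to construct an explicit $n$-point subset of high-dimensional Euclidean space on which $\SlowDiskANN$ produces an $\alpha$-navigable graph with $\Omega(n^2)$ edges and maximum out-degree $\Omega(n)$, while an explicitly exhibited $\alpha$-navigable graph certifies that $\opt$ has only $O(n \log n)$ edges and maximum out-degree $O(\log n)$. For cleanliness I would focus on $\alpha = 1$, the most stringent setting; the general case then follows by a small perturbation of the pairwise distances in the construction.

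The underlying geometric intuition is that the distance-ordered greedy rule used by $\SlowDiskANN$ can be defeated by two properties holding simultaneously. First, the ``early'' candidate neighbors from a source $s$ (those closest to $s$) should have essentially no pruning power: $s$'s nearest unconsidered vertex never triggers removal of anything useful. Second, the genuine covering structure of the metric should be realized by a small collection of ``hubs'' that are placed slightly farther from $s$ than the vertices they cover, so that $\SlowDiskANN$ processes the hubs only after the targets are already in the output.

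Concretely, I would first build a one-source gadget. Place a distinguished $s$ at the origin and roughly $n - \Theta(\log n)$ ``target'' points approximately orthonormally on a sphere of radius $R$; standard concentration in dimension $d = \Theta(\log n)$ makes pairwise distances at least $R$, so when $\SlowDiskANN$ considers a target, it cannot prune any other target. Then place $O(\log n)$ ``hubs'' at distances just above $R$, each close to a carefully chosen subset of targets, and arranged hierarchically in a binary-tree fashion so that routing from $s$ to any target passes through only $O(\log n)$ hubs. The hubs supply a sparse $1$-navigable graph rooted at $s$ of out-degree $O(\log n)$; but in the $\SlowDiskANN$ execution the targets come first in distance order, get added in bulk, and the hubs are then themselves pruned by the newly-added targets, contributing nothing. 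This yields $s$ with $\Omega(n)$ out-degree in $\SlowDiskANN$'s output and $O(\log n)$ out-degree in the reference graph, giving the $\maxdeg$ lower bound directly.

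To upgrade to the $\avgdeg$ bound, I would make the gadget symmetric: either by gluing together disjoint translated/rotated copies of the one-source construction so that a constant fraction of vertices are the ``source'' of their own bad gadget, or by placing the entire vertex set on a common symmetric configuration (e.g.\ two concentric shells of near-orthonormal points at well-chosen radii) in which the above analysis can be repeated starting from every inner-shell vertex. Either way, $\Omega(n)$ sources each have $\Omega(n)$ out-degree in $\SlowDiskANN$'s output, giving $\Omega(n^2)$ edges, while the per-source hub trees can be combined (hub structure reused across sources) into a single $1$-navigable graph with $O(n \log n)$ edges and $O(\log n)$ maximum out-degree. The main obstacle I anticipate is the Euclidean realization in step two: we need simultaneously that (i) targets are near-orthogonal so they do not prune each other, (ii) hubs lie strictly farther from the source than the targets they cover, yet (iii) hubs satisfy $\sfd(h,t) < \sfd(s,t)$ for all covered $t$. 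Condition (ii) is the tightest constraint, since in Euclidean space a hub beyond its targets must lie almost in their direction from $s$; reconciling this with a symmetric multi-source setting without accidental pruning --- and verifying the full $\SlowDiskANN$ execution step by step --- is where I expect the bulk of the technical effort to lie.
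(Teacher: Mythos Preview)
Your high-level strategy --- exhibit a Euclidean point set on which $\SlowDiskANN$ outputs $\Omega(n^2)$ edges while an explicit $1$-navigable witness has only $O(n\log n)$ --- is exactly right and matches the paper. But the specific geometry you propose has a genuine obstruction, not just a technicality.

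Take your one-source gadget: $s$ at the origin, targets at distance $R$ and pairwise near-orthogonal, hubs at distance $R+\varepsilon$. For a hub $h$ to cover a target $t$ in the navigability sense you need $\|h-t\| < \|s-t\| = R$, which together with $\|h\|>R$ forces $\langle h,t\rangle > \|h\|^2/2 > R^2/2$. If the targets are (near-)orthogonal directions scaled by $R$, then summing over $k$ covered targets gives $\|h\|^2 \ge \sum_i \langle h, t_i/R\rangle^2 > k\,(\|h\|^2/(2R))^2$, hence $k < 4R^2/\|h\|^2 < 4$. So any hub placed \emph{beyond} the targets covers at most three near-orthogonal targets; $O(\log n)$ such hubs cannot cover $n$ of them. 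Weakening orthogonality enough to allow each hub a large cap of targets forces those targets within a cap to be close, at which point they begin pruning one another in the $\SlowDiskANN$ execution. The tension you flagged in condition (ii) is not a matter of careful placement --- it is a quantitative incompatibility between ``hubs farther than targets'' and ``$O(\log n)$ hubs suffice.''

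The paper's construction sidesteps this by reversing the ordering. All $2n-1$ points are unit vectors in $\R^n$: the leaves are $e_1,\dots,e_n$ and the internal ``hubs'' are normalized indicators of dyadic intervals, $x_{h,j}=2^{-h/2}\sum_{i\in I_{h,j}} e_i$. From a leaf $s=e_1$ the ancestors are \emph{closer} than the other leaves, so $\SlowDiskANN$ processes them first; but the very first ancestor $x_{1,1}$ immediately prunes \emph{all} remaining ancestors (each is strictly closer to $x_{1,1}$ than to $s$), leaving the $n-2$ remaining leaves, which are mutually equidistant from $s$ and therefore all get added. The construction is already symmetric in the leaves, so every leaf has out-degree $\Omega(n)$ and the $\avgdeg$ bound follows without gluing copies. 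The sparse witness is just the binary tree together with edges from each node to all its ancestors, giving out-degree $O(\log n)$ everywhere.
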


The proof of Theorem~\ref{thm:diskann-worst-case} relies on a binary tree structure on the unit sphere in $\R^{n}$, defined below. This pointset naturally admits a navigable graph of maximum degree $O(\log n)$, but $\SlowDiskANN$ produces a graph with $\Omega(n^2)$ edges.

\begin{definition}[Euclidean Binary Tree Structure]\label{def:hard-metric-diskann}
    Let $n$ be a power of $2$. We define a set of points $P \subset \mathbb{R}^n$ of size $2n - 1$, consisting of unit vectors under the $\ell_2$-metric, as follows (see also, the accompanying Figure~\ref{fig:nav-graph-bt}):
    \begin{itemize}
        \item For $h\in\{0,1,\ldots,\log_2 n\}$, partition $[n]$ into $n/2^h$ contiguous intervals of size $2^h$. Namely, for $j\in [n / 2^h]$, define \[I_{h,j} := \left((j-1)\cdot 2^h, j\cdot 2^h\right] \cap \Z\] The number of such intervals is $2n-1$.
        \item For each interval $I_{h,j}$, define the corresponding vector $x_{h,j}\in\R^n$ by \[x_{h,j} := \frac{1}{\sqrt{I_{h,j}}}\sum_{i\in I_{h,j}} e_i = \frac{1}{2^{h/2}} \sum_{i\in I_{h,j}} e_i,\] where $e_i$ denotes the $i$-th standard basis vector. Note that by the above normalization, $\norm{x_{h,j}}_2 = 1$.
    \end{itemize}
\end{definition}

Since $P$ consists only of unit vectors, pairwise distances in $P$ can be written in terms of inner products. The next claim characterizes the inner products between vectors in $P$.

\begin{claim}\label{claim:binary-tree-distances}
    For any $x_{h,j}, x_{h',j'}\in P$, we have \[\ip{x_{h,j}}{x_{h',j'}} = \begin{cases}
        0 &\text{if } I_{h,j}\cap I_{h',j'} = \emptyset \\
        2^{-|h-h'| / 2} &\text{otherwise}
    \end{cases}.\]
\end{claim}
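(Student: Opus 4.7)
The plan is to directly expand the definitions and reduce the claim to a basic fact about dyadic intervals. Writing out the inner product, the orthonormality of the standard basis gives
\[
\ip{x_{h,j}}{x_{h',j'}} \;=\; \frac{1}{2^{h/2}\cdot 2^{h'/2}} \sum_{i \in I_{h,j}} \sum_{i' \in I_{h',j'}} \ip{e_i}{e_{i'}} \;=\; \frac{|I_{h,j} \cap I_{h',j'}|}{2^{(h+h')/2}}.
\]
So the claim reduces entirely to computing $|I_{h,j}\cap I_{h',j'}|$.

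The key structural observation is that the intervals $I_{h,j}$ form a \emph{laminar family}: any two dyadic intervals from this construction are either disjoint or nested. This follows immediately because $I_{h,j}$ is the set of integers in $((j-1)2^h, j 2^h]$, and two such intervals at (possibly different) scales either sit in the same level-$\max(h,h')$ block or in different ones. The first case gives the disjoint branch: $|I_{h,j}\cap I_{h',j'}| = 0$, so the inner product vanishes. In the second case, assuming without loss of generality $h \le h'$, the smaller interval $I_{h,j}$ is contained in $I_{h',j'}$, so $|I_{h,j}\cap I_{h',j'}| = |I_{h,j}| = 2^h$.

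Plugging back in, the nonzero case yields
\[
\ip{x_{h,j}}{x_{h',j'}} \;=\; \frac{2^h}{2^{(h+h')/2}} \;=\; 2^{(h-h')/2} \;=\; 2^{-|h-h'|/2},
\]
where the last equality uses $h \le h'$; by symmetry the same formula holds when $h' \le h$. I do not anticipate any real obstacle here: the whole claim is essentially bookkeeping once the laminar/nested property of dyadic intervals is invoked, and the normalization $\|x_{h,j}\|_2 = 1$ has been baked into the $2^{-h/2}$ prefactors precisely so that this inner product formula comes out clean.
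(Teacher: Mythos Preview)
Your proposal is correct and follows essentially the same route as the paper: both arguments invoke the laminar (nested-or-disjoint) structure of the dyadic intervals, dispatch the disjoint case immediately, and in the nested case compute $\ip{x_{h,j}}{x_{h',j'}} = 2^{h}/2^{(h+h')/2} = 2^{-|h-h'|/2}$ assuming $h\le h'$. The only cosmetic difference is that you make the intermediate identity $\ip{x_{h,j}}{x_{h',j'}} = |I_{h,j}\cap I_{h',j'}|/2^{(h+h')/2}$ explicit, which the paper leaves implicit.
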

\begin{proof}
    Observe that if two intervals $I_{h,j}, I_{h',j'}$ in the above construction are not disjoint, then one is fully contained in the other. If $I_{h,j}\cap I_{h',j'} = \emptyset$, then it is clear that $\ip{x_{h,j}}{x_{h',j'}} = 0$. Otherwise, if $I_{h,j}\subseteq I_{h',j'}$, then $h\leq h'$ and \[\ip{x_{h,j}}{x_{h',j'}} = \frac{2^h}{2^{(h+h')/2}} = 2^{(h-h')/2}.\] The case that $I_{h,j}\supseteq I_{h',j'}$ follows similarly.
\end{proof}

\begin{lemma}\label{lem:exists-sparse-graph-bt}
    There exists a 1-navigable graph $H = (P,E)$ with maximum out-degree $O(\log n)$.
\end{lemma}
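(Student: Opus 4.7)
The plan is to identify $P$ with the nodes of a complete binary tree $T$ of depth $\log_2 n$: each $x_{h,j}$ sits at depth $\log_2 n - h$, and the ancestor-descendant relation in $T$ coincides exactly with containment among the intervals $I_{h,j}$. With this picture in mind, I would build $H$ by giving each $s = x_{h,j}$ edges to each of its (at most $\log_2 n - h$) proper tree-ancestors and to each of its (at most two) tree-children $x_{h-1, 2j-1}$ and $x_{h-1, 2j}$ (when $h \ge 1$). This immediately bounds the maximum out-degree of $H$ by $O(\log n)$.

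To verify $1$-navigability, I would consider arbitrary distinct $s = x_{h_s, j_s}$ and $t = x_{h_t, j_t}$ and split into three cases based on their relative position in $T$, exhibiting in each case a neighbor $u$ of $s$ in $H$ with $\sfd(u,t) < \sfd(s,t)$. All distance comparisons will follow from Claim~\ref{claim:binary-tree-distances} together with the monotonicity that $k \mapsto 2^{-k/2}$ is decreasing in $k$, recalling that $\sfd(x,y)^2 = 2 - 2\langle x, y\rangle$ for unit vectors.

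\textbf{(a) $t$ is a strict descendant of $s$ in $T$:} pick $u$ to be the unique child of $s$ whose interval contains $I_{h_t, j_t}$. Then $u$ lies on the same root-to-leaf path as $t$ at depth difference $h_s - 1 - h_t < h_s - h_t$, so $\langle x_u, x_t\rangle > \langle x_s, x_t\rangle$ and thus $\sfd(u,t) < \sfd(s,t)$. \textbf{(b) $t$ is a strict ancestor of $s$:} pick $u$ to be the parent of $s$, which is either equal to $t$ or a proper descendant of $t$; the same monotonicity argument closes the case. \textbf{(c) $I_{h_s, j_s}$ and $I_{h_t, j_t}$ are disjoint,} so $\sfd(s,t) = \sqrt{2}$: pick $u$ to be the least common ancestor of $s$ and $t$ in $T$. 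Since the intervals are disjoint, $u$ is a proper ancestor of $s$ (hence $(s, u)$ is an edge of $H$), and its interval contains $I_{h_t, j_t}$, so Claim~\ref{claim:binary-tree-distances} gives $\langle x_u, x_t\rangle > 0$ and hence $\sfd(u,t) < \sqrt{2} = \sfd(s,t)$.

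The step requiring the most care is really just the neighborhood design itself: ancestors alone fail Case (a) since they only move further from descendants, while children alone fail Case (c) since a child of $s$ is still typically in a subtree disjoint from that of $t$, leaving $\sfd$ stuck at $\sqrt{2}$. The ancestors-plus-children choice repairs both deficiencies simultaneously, after which every distance comparison reduces to a one-line application of Claim~\ref{claim:binary-tree-distances}.
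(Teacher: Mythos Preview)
Your proposal is correct and follows essentially the same approach as the paper: the graph $H$ is identical (edges from each node to all proper ancestors and to both children), and the three-case navigability analysis based on the relative position of $s$ and $t$ in the tree mirrors the paper's argument. The only cosmetic difference is that in Case~(b) you route through the parent of $s$, whereas the paper simply uses the direct edge $(s,t)$ (which exists since $t$ is an ancestor); both choices work.
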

\begin{proof}
    We construct a navigable graph $H=(P,E)$ as follows: \begin{itemize}
        \item Draw the (directed) binary tree rooted at $x_{\log_2 n, 1}$: for all $h\in\{1,2,\ldots,\log_2 n\}$ and $j\in[n/2^h]$, we include the edges $(x_{h,j}, x_{h-1, 2j-1})$ and $(x_{h,j}, x_{h-1, 2j})$.
        \item Draw an edge from every node to each of its ancestors in the above tree: for all $h,h'\in\{0\}\cup[\log_2 n]$ and indices $j\in[n/2^h]$ and $j'\in[n/2^{h'}]$ where $I_{h,j}\subset I_{h',j'}$, draw the edge $(x_{h,j}, x_{h',j'})$.
    \end{itemize}

    The graph $H$ is shown in Figure~\ref{fig:nav-graph-bt}. Since each $I_{h,j}$ is contained in $(\log_2 n) - h$ intervals $I_{h',j'}$, it follows that $H$ has maximum out-degree $\log_2 n + 1 = O(\log n)$.

    We now check navigability of $H$. For $s = x_{h, j}$ and $t = x_{h', j'}$, let $x_{h^\ast, j^\ast}$ denote the least common ancestor in the binary tree, i.e. $I_{h^\ast,j^\ast}$ is the smallest interval containing both $I_{h,j}$ and $I_{h',j'}$. \begin{itemize}
        \item If $x_{h',j'} = x_{h^\ast, j^\ast}$ then $I_{h, j}\subset I_{h',j'}$, which means $(x_{h,j},x_{h',j'})\in E$.
        \item If $x_{h,j} = x_{h^\ast, j^\ast}$, then for a child $u\in \cbr{x_{h^\ast -1, 2j^\ast - 1}, x_{h^\ast - 1, 2j^\ast}}$ of $x_{h^\ast,j^\ast}$, we have $(x_{h,j},u)\in E$ and $\norm{u-x_{h',j'}}_2 < \norm{x_{h,j}-x_{h',j'}}_2$ by Claim~\ref{claim:binary-tree-distances}.
        \item Otherwise, it must be that $I_{h, j}\cap I_{h', j'} = \emptyset$. For $u = x_{h^\ast, j^\ast}$, then, we have $(s,u)\in E$ and $\norm{u-t}_2 < \sqrt{2} = \norm{s-t}_2$.
    \end{itemize}
    Thus, it follows that $H$ is navigable.
\end{proof}

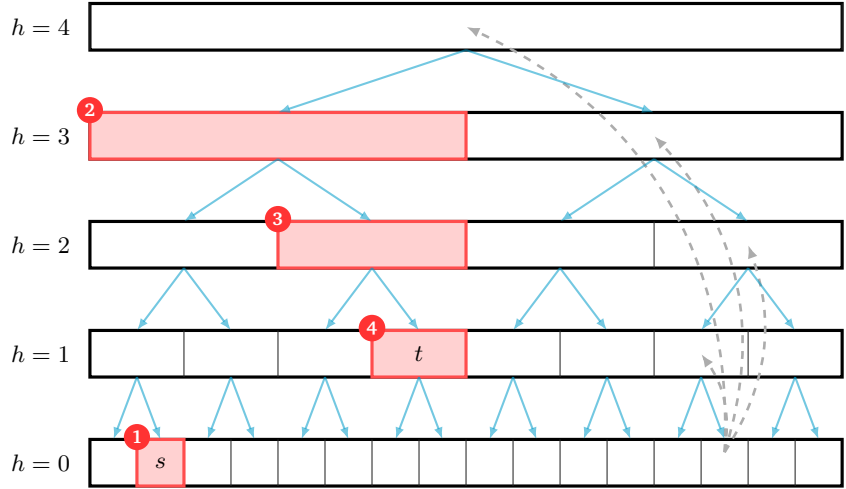
\begin{figure}
    \begin{center}
        
    \begin{tikzpicture}[
  >=Latex,
  levelRect/.style={draw,very thick,fill=white},
  segmentLine/.style={draw=black!85},
  pcEdge/.style={cyan!80!black,opacity=0.55,thick,-{Latex[length=1.6mm]}},
  ancEdge/.style={gray!65,dashed,line width=1pt,-{Latex[length=1.8mm]}},
  stEdge/.style={red!80!black,opacity=0.4,very thick,-{Latex[length=2.2mm]}},
  lbl/.style={font=\small},
  labelStyle/.style={font=\small\itshape}
]

\def\W{10}      
\def\H{0.62}    
\def\VS{1.45}   

\def\SegList{0/16,1/8,2/4,3/2,4/1}

\foreach \lv/\segs in \SegList {
  \pgfmathsetmacro\cw{\W/\segs}
  \coordinate (BL\lv) at (0,\lv*\VS);    
  \coordinate (TR\lv) at (\W,\lv*\VS+\H);
  \draw[levelRect] (BL\lv) rectangle (TR\lv);

  \pgfmathtruncatemacro\Last{\segs-1}
  \foreach \j in {1,...,\Last}{
      \draw[segmentLine] (\j*\cw,\lv*\VS) -- (\j*\cw,\lv*\VS+\H);
  }

  \node[labelStyle,left=4pt] at (0,\lv*\VS+\H/2) {$h=\lv$};

  \foreach \j in {0,...,\Last}{
      \coordinate (C\lv-\j) at ({(\j+0.5)*\cw}, {\lv*\VS+\H/2});
  }
}

\begin{scope}
  \foreach \lv/\j/\step in {
      0/1/1,   
      3/0/2,   
      2/1/3,   
      1/3/4    
  }{
    \pgfmathsetmacro{\cw}{\W/(2^(4-\lv))}

    \path[pathBox]
      ($(C\lv-\j)+(-0.5*\cw,-0.5*\H)$)
      rectangle
      ($(C\lv-\j)+( 0.5*\cw, 0.5*\H)$);

    \node[stepLabel]
      at ($(C\lv-\j)+(-0.5*\cw,0.55*\H)$) {\step};
  }
\end{scope}

\def\slevel{0}\def\sidx{1}    
\def\tlevel{1}\def\tidx{3}    
\def\rlevel{0}\def\ridx{13}   

\node[lbl] at (C\slevel-\sidx) {$s$};
\node[lbl] at (C\tlevel-\tidx) {$t$};

\foreach \lv in {1,2,3,4}{
  \pgfmathtruncatemacro\pow{int(2^\lv)}
  \pgfmathtruncatemacro\anc{int(\ridx/\pow)}
  \pgfmathsetmacro\bend{18 + 4*\lv}   
  \draw[ancEdge] ([yshift=4]C\rlevel-\ridx) to[bend right=\bend] (C\lv-\anc);
}

\foreach \lv/\segs in {4/1,3/2,2/4,1/8}{
  \pgfmathtruncatemacro\clvl{\lv-1}       
  \pgfmathtruncatemacro\Last{\segs-1}     
  \foreach \j in {0,...,\Last}{
    \pgfmathtruncatemacro\cA{2*\j}        
    \pgfmathtruncatemacro\cB{2*\j+1}      

    \draw[pcEdge] 
      ($(C\lv-\j) + (0,-0.5*\H)$) -- ($(C\clvl-\cA) + (0,0.5*\H)$);

    \draw[pcEdge] 
      ($(C\lv-\j) + (0,-0.5*\H)$) -- ($(C\clvl-\cB) + (0,0.5*\H)$);
  }
}

\end{tikzpicture}
    \end{center}

    \caption{
    Depiction of a sparse $1$-navigable graph on the pointset of Definition~\ref{def:hard-metric-diskann} with maximum degree $O(\log n)$, for $n=16$.
    The horizontal bands represent ``levels'' of the binary tree $h = 0$ to $h = 4$.
    Blue edges denote parent–child edges in the tree;
    grey dashed edges point from a node to its ancestors;
    the shaded red cells and their circled numbers depict the traversal of a navigable path from $s = x_{0,2}$ to $t = x_{1,4}$.
    }
    \label{fig:nav-graph-bt}
\end{figure}

\begin{lemma}\label{lem:diskann-dense-graph-bt}
    Running Algorithm~\ref{alg:slow-diskann} on $P$ with $\alpha=1$ produces a graph $G$ containing $\Omega(n^2)$ edges.
\end{lemma}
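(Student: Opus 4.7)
The plan is to lower-bound the out-degree of each leaf source $s = x_{0,j^\ast}$ separately and then sum over the $n$ leaves of $P$, showing $\deg_G(s)\geq n-2$ for every leaf and hence $|E|=\Omega(n^2)$.

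Fix such an $s$. By Claim~\ref{claim:binary-tree-distances}, the unique nearest neighbor of $s$ in $P\setminus\{s\}$ is its parent $p_1$ in the binary tree (squared distance $2-\sqrt{2}$); every other strict ancestor of $s$ at level $k\geq 2$ lies at squared distance $2-2\cdot 2^{-k/2}\in(2-\sqrt{2},\,2)$, and every non-ancestor lies at squared distance exactly $2$. The first iteration therefore adds the edge $(s,p_1)$, and a direct computation with Claim~\ref{claim:binary-tree-distances} shows that this edge prunes from $U$ exactly $s$'s sibling and all strict ancestors of $s$: for any other non-ancestor $t$, the intervals $I_{p_1}$ and $I_t$ are disjoint and $d(p_1,t)=d(s,t)=\sqrt{2}$, so $t$ is not pruned.

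After this first step, the residual $U$ contains all non-ancestors of $s$ other than the sibling, each at Euclidean distance exactly $\sqrt{2}$ from $s$. The key observation is that among this tied set, a vertex $u$ can prune a vertex $t$ only when the intervals $I_u$ and $I_t$ are nested (otherwise $d(u,t)=\sqrt{2}$); in particular, two distinct level-$0$ leaves always have disjoint intervals and so never prune each other. Thus in any execution that processes the $n-2$ non-ancestor, non-sibling leaves before any internal non-ancestor, each such leaf is still present in $U$ when its turn arrives and produces an out-edge from $s$, giving $\deg_G(s)\geq n-2$. Summing over the $n$ leaves then yields $|E|\geq n(n-2)=\Omega(n^2)$.

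The only delicate point I foresee is justifying the leaves-first ordering among the tied vertices at distance $\sqrt{2}$ from $s$. Under the natural implementation of a stable sort on the vertex list enumerated as in Definition~\ref{def:hard-metric-diskann} (where level-$0$ nodes appear before higher-level ones), this ordering is automatic. An implementation-independent alternative is to infinitesimally perturb the construction by appending a coordinate equal to $-\eps$ on every level-$0$ leaf and $0$ on every internal node for $\eps=1/\mathrm{poly}(n)$, which preserves all strict distance inequalities from the first iteration while strictly separating the leaf-to-leaf distances from the leaf-to-internal distances from any leaf source. This tie-breaking subtlety is the main potential obstacle, but either resolution yields a deterministic execution matching the claimed bound.
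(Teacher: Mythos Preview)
Your proposal is correct and follows essentially the same route as the paper: focus on a leaf source $s=x_{0,j^\ast}$, argue the first edge goes to the parent $p_1$ and prunes exactly the sibling and the strict ancestors, then observe that the remaining $n-2$ leaves are pairwise at distance $\sqrt{2}$ and hence never prune one another, so under a leaves-first tiebreak each becomes an out-edge. The paper likewise relies on an adversarial tie-breaking (with a perturbation remark to make it deterministic); your alternative perturbation of appending a $-\eps$ coordinate to the leaves also works, though note that with it $p_1$ in fact prunes \emph{all} non-ancestor internal nodes as well, which only simplifies the remaining phase.
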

\begin{proof}
    We will show that for all $i\in[n]$, we have $\deg_G(x_{0,i})$, encoding the size of the out-degree of $x_{0, i}$ in $G$ being $\Omega(n)$. Consider the execution of Algorithm~\ref{alg:slow-diskann} with $s = x_{0,1}$, corresponding to the interval $I_{0,1}=\{1\}$. Since all vectors in $P$ have unit norm, sorting by increasing distance to $x_{0,1}$ is equivalent to storing by decreasing inner product $\ip{x_{0,1}}{\cdot}$.  By Claim~\ref{claim:binary-tree-distances}, the first $\log_2 n$ entries of the sorted list $U$ will be ordered \[ x_{1,1}, x_{2,1}, \ldots, x_{\log_2 n, 1}.\] The remaining entries in $U$ are equidistant to $x_{0,1}$. Algorithm~\ref{alg:slow-diskann} is deterministic, and orders vertices solely based on distance; hence, consider the next $n-1$ entries being \[x_{0,2},\ldots,x_{0,n}.\] The algorithm will begin by adding the edge $(x_{0,1}, x_{1,1})$, which will immediately prune from $U$ all points $x_{h,1}$ for $h > 1$, since \[\ip{x_{0,1}}{x_{h,1}} = 2^{-h/2} < 2^{-(h-1)/2} = \ip{x_{1,1}}{x_{h,1}}.\] 

    Moreover, as $I_{1,1} = \{1,2\}$, the algorithm will prune $x_{0,2}$ from $U$, but will not prune any $x_{0,j}$ for $j>2$. After the first round of pruning, then, the next $n-2$ entries in $U$ will be  \[\cbr{x_{0,3},\ldots,x_{0,n}}.\] Since all distances in $\{x_{0,1},\ldots,x_{0,n}\}$ are equal, the algorithm will add all edges of the form $(x_{0,1}, x_{0,j})$ for $j>2$. Hence, $\deg_G(x_{0,1}) = n-2$, and by symmetry of $P$, it follows that $\deg_G(x_{0,i})=n-2$ for all $i\in[n]$. Hence, $G$ has $\Omega(n^2)$ edges.    
\end{proof}

We are now ready to prove Theorem~\ref{thm:diskann-worst-case}.

\begin{proof}[Proof of Theorem~\ref{thm:diskann-worst-case}]
    Consider the set $P\subset\R^n$ from Definition~\ref{def:hard-metric-diskann} under the Euclidean metric. By Lemma~\ref{lem:exists-sparse-graph-bt}, there exists a 1-navigable graph on $P$ with maximum out-degree $O(\log n)$ and $O(n\log n)$ edges. However, by Lemma~\ref{lem:diskann-dense-graph-bt}, the graph output by Algorithm~\ref{alg:slow-diskann} has $\Omega(n^2)$ edges and maximum out-degree $\Omega(n)$. It follows that Algorithm~\ref{alg:slow-diskann} returns an $\Omega(n/\log n)$-approximation to the sparsest navigable graph on $P$, for both the $\maxdeg$ and $\avgdeg$ objectives.
\end{proof}

\begin{remark}
    The $\Omega(n^2)$ lower bound in Lemma~\ref{lem:diskann-dense-graph-bt} relies on the assumption that the sorting of $U$ breaks ties in an adversarial manner. This behavior can be enforced deterministically by slightly perturbing the construction: for instance, by shrinking the bottom-level vectors $x_{0,j}$ by a factor of $(1 - \varepsilon)$ for sufficiently small $\varepsilon > 0$. It can be checked that this modification preserves Lemma~\ref{lem:exists-sparse-graph-bt}, while ensuring that Algorithm~\ref{alg:slow-diskann} produces a graph with $\Omega(n^2)$ edges regardless of its tiebreaking rule.
\end{remark}

\newcommand{\PUF}{P_{(\calU,\calF)}}

\section{Sparse Navigability is Equivalent to Set Cover}\label{sec:set-cover}

In this section, we show that the problem of constructing the sparsest $\alpha$-navigable graph on a given metric is equivalent, in terms of its polynomial-time {\em approximability threshold}, to the classic \setcover problem (Definition~\ref{def:set-cover}). This connection yields a polynomial-time $(\ln n + 1)$-approximation algorithm for constructing the sparsest $\alpha$-navigable graph and a matching hardness of approximation result (up to a constant factor).

\begin{theorem}\label{thm:nav-sc-bounds} Let $(P,\sfd)$ be an $n$-point metric, and let $\alpha\geq 1$.
\begin{itemize}
    \item There is a polynomial-time algorithm that produces a $(\ln n + 1)$-approximation to the sparsest $\alpha$-navigable graph on $P$ (under both the $\maxdeg$ and $\avgdeg$ objectives).
    \item There is a constant $c\in(0,1)$ such that it is $\nphard$ to produce a $(c\ln n)$-approximation to the sparsest $\alpha$-navigable graph on $P$ (under either the $\maxdeg$ or $\avgdeg$ objectives), even when $\alpha = 1$.
\end{itemize}
\end{theorem}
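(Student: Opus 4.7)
The plan is to prove both bullets by tying $\sparsenav$ to \setcover in each direction.

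\textbf{Upper bound.} The key observation is that once the source $s \in P$ is fixed, choosing the out-neighborhood of $s$ in an $\alpha$-navigable graph is exactly an instance of \setcover. Take the universe $\calU_s = P \setminus \{s\}$ and, for each $u \in \calU_s$, the set
\[
S_u^{(s)} \;=\; \{t \in \calU_s : \sfd(u,t) < \sfd(s,t)/\alpha\},
\]
which contains $u$ itself since $\sfd(u,u)=0$. A subset $N_s \subseteq \calU_s$ is a valid out-neighborhood of $s$ in some $\alpha$-navigable graph if and only if $\{S_u^{(s)} : u \in N_s\}$ covers $\calU_s$, so the minimum out-degree of $s$ equals the optimal set cover $\opt_s$ for this instance. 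Running the greedy algorithm of Lemma~\ref{lem:greedy-set-cover} at every source produces an out-neighborhood of size at most $(\ln n + 1)\,\opt_s$; comparing maxima (resp.\ sums) over $s$ yields a $(\ln n + 1)$-approximation for the $\maxdeg$ (resp.\ $\avgdeg$) objective, with total running time $O(n^3)$.

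\textbf{Lower bound.} The hardness will come from a reduction from the $(1-\eps)\ln n$-hard \setcover instances of Lemma~\ref{lem:np-hard-set-cover}. Given $(\calU,\calF)$ with $|\calU|=n$ and $|\calF|=n^{O(1)}$, construct a metric $(P, \sfd)$ on $N = 1 + |\calF| + |\calU| = n^{O(1)}$ points: a single source $s$, a set-vertex $v_S$ for each $S \in \calF$, and an element-vertex $v_e$ for each $e \in \calU$. Place all pairwise distances in $\{1,2\}$ so the triangle inequality is automatic: $\sfd(s, v) = 2$ for $v \neq s$, $\sfd(v_S, v_{S'}) = \sfd(v_e, v_{e'}) = 2$ for distinct indices, and $\sfd(v_S, v_e) = 1$ iff $e \in S$ (else $2$). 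Reusing the reformulation from the upper bound, any out-neighborhood of $s$ in a $1$-navigable graph is a ``bipartite'' cover of $\calF \cup \calU$ by sets of the form $\{v_S\} \cup \{v_e : e \in S\}$ or $\{v_e\} \cup \{v_S : e \in S\}$, and any such cover of size $k$ yields an honest set cover of $\calU$ of size at most $k$ by replacing each element-vertex edge with the corresponding singleton set.

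\textbf{Main obstacle.} The delicate step will be ensuring that the optimum of $\sparsenav$ (under either objective) is $\Theta(\opt_{SC})$, so that no non-source vertex drives up the max/avg out-degree. The approach is a short case analysis of the out-degrees at each $v_S$ and each $v_e$ in the above metric (possibly after augmenting the construction with $O(1)$ gadget vertices that trivialize one side of the bipartite cover); standard regularity properties of the Feige-style hardness instances will force the hitting-set and set-cover optima to coincide up to constants, so every vertex has optimal out-degree $\Theta(\opt_{SC})$. Given this, any polynomial-time $\rho$-approximation for $\sparsenav$ produces an out-neighborhood at $s$ of size $O(\rho \cdot \opt_{SC})$, from which the extraction above recovers a set cover of $\calU$ of the same size. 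Combined with Lemma~\ref{lem:np-hard-set-cover} and $\ln N = O(\ln n)$, this yields NP-hardness of $(c\ln n)$-approximation for $\sparsenav$ under both $\maxdeg$ and $\avgdeg$ for some absolute $c>0$, completing the second bullet.
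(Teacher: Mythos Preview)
Your upper bound argument is correct and essentially identical to the paper's (Lemma~\ref{lem:nav-sc-red} via Claim~\ref{claim:nav-sc-red}).

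The lower bound, however, has a genuine gap that your ``Main obstacle'' paragraph does not close. In your $\{1,2\}$-metric, whenever $\sfd(v_S,v_e)=1$ (i.e.\ $e\in S$), the navigability constraint \emph{from} $v_S$ \emph{to} $v_e$ demands an out-neighbor $u$ with $\sfd(u,v_e)<1$, and the only such $u$ is $v_e$ itself. Hence every set vertex $v_S$ must have out-degree at least $|S|$, and symmetrically every element vertex $v_e$ must have out-degree at least its frequency $|\{S:e\in S\}|$. In hard \setcover instances these quantities are typically $\Theta(n/\opt_{SC})$ (or larger), not $\Theta(\opt_{SC})$, so the sparsest $1$-navigable graph has max out-degree and average degree driven by the non-source vertices rather than by $\opt_{SC}$. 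Neither of your proposed fixes touches this: adding $O(1)$ gadget vertices near $s$ does nothing about the forced direct edges out of $v_S$ and $v_e$, and ``hitting-set $\approx$ set-cover'' regularity is irrelevant to the $|S|$-lower-bound on $\deg(v_S)$.

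The paper sidesteps this issue entirely rather than trying to make every vertex have degree $\Theta(\opt_{SC})$. It (i) uses edge weights $\{1,\,1-\gamma,\,2-\gamma\}$ so that set vertices are mutually close and the root-to-element distance is $2-\gamma$, and (ii) takes $L$ disjoint copies of the gadget together with $L$ root vertices, with $L=(m+n+1)^2$. Inside each gadget the navigable graph is simply the complete graph plus one edge to a root, giving gadget vertices out-degree $m+n$; the point is that each root must have out-degree at least $L\cdot\opt_{SC}$ (Claim~\ref{claim:small-sc-from-nav}), and with $L\ge m+n$ this dominates both the $\maxdeg$ and the $\avgdeg$ objectives (Claim~\ref{claim:small-nav-from-sc}). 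So the trick you are missing is amplification by replication, not per-vertex balancing.
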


The proof of Theorem~\ref{thm:nav-sc-bounds} relies on a two-way approximation-preserving reduction between $\sparsenav$ and $\setcover$. Lemma~\ref{lem:nav-sc-red} shows that any polynomial-time $\rho(n)$-approximation algorithm for $\setcover$ on a universe of size $n$ can be used to construct a $\rho(n)$-approximation to the sparsest $\alpha$-navigable graph on an $n$-point metric. Conversely, Lemma~\ref{lem:sc-nav-red} maps a $\setcover$ instance with $n$ elements and $m$ sets to a metric of size $N = (m+n)^{\Theta(1)}$, and shows that a $\rho(N)$-approximation to the sparsest $1$-navigable graph yields a $\Theta(\rho(N))$-approximation to the original $\setcover$ instance. Since the polynomial-time approximation threshold for $\setcover$ is $\Omega(\ln N) = \Omega(\ln (m+n))$, it follows that $\sparsenav$ inherits the same approximation threshold up to a constant factor.

\subsection{Reduction from Sparsest Navigable Graph to Set Cover}

We begin by expressing $\alpha$-navigability as a set covering condition.

\begin{definition}[Set Cover Formulation of $\alpha$-Navigability]\label{def:sc-instance-nav}
    Let $(P,\sfd)$ be a metric space and $\alpha \geq 1$. For each $s, u\in P$ with $u\neq s$, let $Z_{\alpha}(s,u)$ denote the set of $\alpha$-navigability constraints covered by the edge $(s,u)$. Formally, we define \[Z_{\alpha}(s,u) := \cbr{t\in P \mid \sfd(u,t) < \sfd(s,t) / \alpha} \subseteq P \setminus\{s\}.\]
    For each $s\in P$, we define an instance $(P\setminus\{s\}, \calF_s)$ of $\setcover$, where $\calF_s := \cbr{Z_{\alpha}(s,u) \mid u\in P \setminus\{s\}}$. We remark that the collection $\calF_s$ can be built in time $O(n^2)$.    
\end{definition}

\begin{claim}\label{claim:nav-sc-red}
    The following statements are equivalent: 
    \begin{itemize}
        \item $G = (P,E)$ is $\alpha$-navigable.
        \item For every $s\in P$, $\cbr{Z_{\alpha}(s,u) \mid u\in N^{\textup{out}}(s)}$ forms a valid set cover for the instance $(P\setminus\{s\}, \calF_s)$.
    \end{itemize}
\end{claim}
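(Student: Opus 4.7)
The claim is essentially a one-to-one translation between the definition of $\alpha$-navigability and the set covering property, with no real algorithmic content; the plan is to unpack both sides and check they coincide pointwise in the source vertex $s$ and the target vertex $t$.

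First, I would fix an arbitrary $s\in P$ and use the definition $Z_{\alpha}(s,u) = \{t\in P : \sfd(u,t) < \sfd(s,t)/\alpha\}$ to derive the key reformulation: a point $t \in P\setminus\{s\}$ is covered by the family $\{Z_{\alpha}(s,u) : u\in N^{\textup{out}}(s)\}$ if and only if there exists $u$ with $(s,u)\in E$ satisfying $\sfd(u,t) < \sfd(s,t)/\alpha$. This is exactly the witness condition appearing in Definition~\ref{def:navigable} for the pair $(s,t)$, recalling that in a metric $\sfd(s,t)>0$ iff $t\neq s$, so the universe of the relevant set cover instance is indeed $P\setminus\{s\}$.

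With this reformulation in hand, both directions of the biconditional are immediate. For the forward direction, if $G$ is $\alpha$-navigable then for every $s$ and every $t\in P\setminus\{s\}$ the witness $u$ guaranteed by navigability belongs to $N^{\textup{out}}(s)$ and places $t\in Z_{\alpha}(s,u)$, so the family covers $P\setminus\{s\}$. Conversely, if the family is a valid set cover for each $s$, then for any pair $s\neq t$ some $u\in N^{\textup{out}}(s)$ contains $t$ in $Z_{\alpha}(s,u)$, which serves precisely as the navigability witness required by Definition~\ref{def:navigable}.

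The only subtlety worth flagging---and certainly not a real obstacle---is confirming that restricting the universe to $P\setminus\{s\}$ (rather than $P$) is without loss: since $\sfd(u,s)<\sfd(s,s)/\alpha=0$ is impossible in a metric space, the point $s$ is never contained in any $Z_{\alpha}(s,u)$, which matches the range excluded by the $\sfd(s,t)>0$ clause in the navigability definition. No nontrivial step is anticipated; the argument is purely definitional, and the ``main obstacle'' is merely taking care to handle both directions of the biconditional cleanly and to keep the universe accounting consistent.
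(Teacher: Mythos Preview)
Your proposal is correct and follows essentially the same definitional unpacking as the paper's proof, which simply states the forward direction and then the contrapositive of the backward direction in one sentence each. Your added remark about why the universe is $P\setminus\{s\}$ is a harmless elaboration that the paper leaves implicit.
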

\begin{proof}
    If $G$ is $\alpha$-navigable, then for all $s\in P$ and $t\in P\setminus\{s\}$, there is some $u\in N^{\textup{out}}(s)$ for which $t\in Z_{\alpha}(s,u)$. If $G$ is not $\alpha$-navigable, then there is some $s\in P$ and $t\in P\setminus\{s\}$ such that $t\notin Z_{\alpha}(s,u)$ for all $u\in N^{\textup{out}}(s)$.
\end{proof}

The above equivalence allows us to reduce the task of producing an $\alpha$-navigable graph on $P$ to solving $|P|$ instances of $\setcover$. As a result, any approximation algorithm for $\setcover$ immediately yields an approximation algorithm for constructing sparse $\alpha$-navigable graphs, with the same approximation factor.

\begin{lemma}\label{lem:nav-sc-red}
Suppose there exists a polynomial-time $\rho(N)$-approximation algorithm for $\setcover$ on instances $(\calU, \calF)$ with $|\calU| \leq N$. Then, for any $\alpha \geq 1$, there exists a polynomial-time algorithm that computes a $\rho(n)$-approximation to the sparsest $\alpha$-navigable graph on any $n$-point metric space, under both the $\maxdeg$ and $\avgdeg$ objectives.
\end{lemma}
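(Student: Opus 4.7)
The plan is to run the $\rho(N)$-approximation for \setcover\ once for each vertex of $P$, taking the union of the resulting covers as the edge set of $G$. More precisely, for each $s\in P$, I would first construct the \setcover\ instance $(P\setminus\{s\},\calF_s)$ from Definition~\ref{def:sc-instance-nav} in $O(n^2)$ time (by computing all the sets $Z_\alpha(s,u)$ directly from the metric). Then I would invoke the assumed polynomial-time \setcover\ approximation on each instance to obtain a cover $\calT_s\subseteq\calF_s$, and set
\[
E \;:=\; \bigl\{(s,u)\in P\times P \,:\, s\neq u,\; Z_\alpha(s,u)\in\calT_s\bigr\},
\qquad G:=(P,E).
\]
Because each instance has universe of size $n-1\le n$, the approximation guarantee gives $|\calT_s|\le \rho(n)\cdot \mathrm{OPT}_s$, where $\mathrm{OPT}_s$ is the size of the minimum cover for $(P\setminus\{s\},\calF_s)$. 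Claim~\ref{claim:nav-sc-red} immediately implies that $G$ is $\alpha$-navigable, and the whole procedure clearly runs in polynomial time.

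To analyze approximation quality, I would fix an arbitrary optimal $\alpha$-navigable graph $G^\star=(P,E^\star)$ and use the other direction of Claim~\ref{claim:nav-sc-red}: for every $s\in P$, the out-neighborhood of $s$ in $G^\star$ induces a valid cover of $(P\setminus\{s\},\calF_s)$. Hence
\[
\mathrm{OPT}_s \;\le\; \deg_{G^\star}(s) \quad\text{for every } s\in P.
\]
For the \maxdeg\ objective, this yields $\deg_G(s)=|\calT_s|\le \rho(n)\cdot\deg_{G^\star}(s)\le \rho(n)\cdot\max_{s'}\deg_{G^\star}(s')$, so $\max_s\deg_G(s)\le \rho(n)\cdot\max_s\deg_{G^\star}(s)$. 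For the \avgdeg\ objective, summing over $s$ gives
\[
|E|\;=\;\sum_{s\in P}|\calT_s|\;\le\;\rho(n)\sum_{s\in P}\mathrm{OPT}_s\;\le\;\rho(n)\sum_{s\in P}\deg_{G^\star}(s)\;=\;\rho(n)\cdot|E^\star|.
\]
Since the same graph $G$ works for both bounds, a single run of the algorithm certifies both approximation guarantees simultaneously.

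There is not really a deep obstacle here once Claim~\ref{claim:nav-sc-red} is in hand; the only subtle point worth emphasizing in the write-up is that \emph{the same} graph $G$ achieves the $\rho(n)$ factor for both objectives, which follows because the bound $|\calT_s|\le\rho(n)\cdot\deg_{G^\star}(s)$ is per-vertex and can be either taken as a maximum or summed. I would also briefly note that the optimal navigable graphs achieving the two objectives can be different, but this does not affect the argument because $G^\star$ is chosen separately for each claim.
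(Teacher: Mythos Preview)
Your proof is essentially identical to the paper's and is correct, with one minor caveat: as written, your edge set $E := \{(s,u) : Z_\alpha(s,u) \in \calT_s\}$ may include several edges per set in $\calT_s$ (distinct $u$'s can yield the same set $Z_\alpha(s,u)$), so $\deg_G(s)$ need not equal $|\calT_s|$. The paper (and you should too) simply picks one representative vertex $u_i$ per set in the returned cover and adds only those edges; with that tweak the rest of your argument goes through verbatim.
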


\begin{proof}
    Let $\calA$ be a $\rho(N)$-approximation algorithm for solving $\setcover$ on universes of size at most $N$. Given an $n$-point metric space $P\subset(X,\sfd)$ and a parameter $\alpha\geq 1$, we describe the following algorithm for building an $\alpha$-navigable graph $G=(P,E)$. We begin with an initially empty graph $G$, and for each $s\in P$, we run the following process: \begin{enumerate}
        \item Compute the set system $\calF_s := \cbr{Z_{\alpha}(s,u) \mid u \in P\setminus\{s\}}$.
        \item Run $\calA$ on the $\setcover$ instance $(P\setminus\{s\}, \calF_s)$ to obtain a solution $\cbr{Z_{\alpha}(s,u_1), \ldots, Z_{\alpha}(s, u_{\hat{k}_s})}$.
        \item For $1\leq i\leq \hat{k}_s$, add the edge $(s, u_i)$ to $G$.
    \end{enumerate}
Since each $\calF_s$ can be built in time $O(n^2)$ and $\calA$ runs in polynomial time, the above algorithm runs in polynomial time. By Claim~\ref{claim:nav-sc-red}, the graph obtained by the above algorithm $G$ is $\alpha$-navigable. To finish, we now argue that $G$ is a $\rho(n)$-approximation to the sparsest $\alpha$-navigable graph on $P$, under both the $\maxdeg$ and $\avgdeg$ objectives. 

Take any $s \in P$ and let $H$ be an $\alpha$-navigable graph on $P$. By Claim~\ref{claim:nav-sc-red}, the out-neighborhood of $s$ in $H$ yields a solution to the $\setcover$ instance $(P \setminus \{s\}, \mathcal{F}_s)$ of size $\deg_H(s)$ (here, $\deg_{H}(s)$ and $\deg_G(s)$ denote the number of out-going edges of $s$ in $H$ and $G$, respectively). Since $|P\setminus\{s\}| < n$, the approximation guarantee of $\mathcal{A}$ implies that it returns a solution of size $\hat{k}_s\leq \rho(n)\cdot \deg_H(s)$, which yields \[\deg_G(s)\leq \rho(n)\cdot\deg_H(s).\] Since this bound holds for all $s\in P$, we have $\max_s\deg_G(s) \leq \rho(n)\cdot\max_{s}\deg_H(s)$, and $|E(G)|\leq \rho(n)\cdot |E(H)|$. Therefore, $G$ is a $\rho(n)$-approximation to the sparsest $\alpha$-navigable graph on $P$ under both the $\maxdeg$ and $\avgdeg$ objectives.
\end{proof}

\subsection{Reduction from Set Cover to Sparsest Navigable Graph}

Suppose we are given an instance $(\calU,\calF)$ of $\setcover$, where $\calU = \{x_1,\ldots,x_n\}$ and $\calF = \{S_1,\ldots,S_m\}$. We show how to construct, in polynomial time, a metric space $(\PUF,\sfd)$ such that approximations to the sparsest $\alpha$-navigable graph on $\PUF$ (under either the $\maxdeg$ or $\avgdeg$ objective) correspond to approximations to the optimal set cover on $(\calU,\calF)$.

\begin{definition}[Metric Construction for $(\calU,\calF)$]\label{def:sc-metric}
    We define a weighted graph and take $(\PUF,\sfd)$ to be the induced shortest-path metric. The construction will depend on two parameters: a small constant $\gamma > 0$, and an integer $L\geq m+n$. We build the graph as follows (depicted in Figure~\ref{fig:sc-metric}):

    \begin{itemize}
        \item Create $L$ distinct root vertices $r_1,\ldots,r_L$ and $L$ identical gadgets encoding the $\setcover$ instance $(\calU,\calF)$.
        \item For each $q\in[L]$, the $q$-th gadget $P^{(q)}$ will consist of $m+n$ vertices corresponding to each $S_i\in \calF$ and $x_j\in\calU$, labeled $S_i^{(q)}$ and $x_j^{(q)}$ respectively.
        \begin{itemize}
            \item[$\circ$] Every root vertex $r_\ell$ will have an edge $(r_\ell, S_i^{(q)})$ of weight $1$. Additionally, for each $i\neq i'$, we draw an edge $(S_i^{(q)}, S_{i'}^{(q)})$ of weight $1-\gamma$.
            \item[$\circ$] Each inclusion $x_j\in S_i$ will correspond to an edge $(S_i^{(q)}, x_j^{(q)})$ of weight $1$. Also, every root vertex $r_\ell$ will have a direct ``shortcut'' edge $(r_\ell, x_j^{(q)})$ of weight $2-\gamma$.
        \end{itemize}
    \end{itemize}
    
    The resulting metric space $(\PUF,\sfd)$ has size $|\PUF| = L + L\cdot |P^{(q)}|= L\cdot (m+n+1)$.
\end{definition}

\begin{figure}
    \centering
    \begin{tikzpicture}[>=Latex,
      root/.style   ={circle,draw,thick,fill=white,inner sep=2pt,font=\small},
      setV/.style   ={rounded corners = 1mm,draw=cyan!80!black,thick,
                      fill=white,minimum width=12mm,minimum height=5mm,font=\small},
      elem/.style   ={circle,draw=gray!65,thick,fill=white,inner sep=2pt,font=\small},
      rootEdge/.style   ={cyan!80!black,thick,-},
      membership/.style ={cyan!80!black,thick,-},
      setEdge/.style    ={gray!80!black,thick,-},
      nonedge/.style    ={gray!55,dashed,thin,opacity=.4},
      shortcut/.style   ={red!80!black,very thick,-},
      lbl/.style        ={font=\scriptsize}
    ]
    \node[root] (r) at (0,4) {$r$};
    \foreach \i/\x in {1/-4, 2/0, 3/4}{
      \node[setV] (S\i) at (\x,2) {$S_{\i}$};
      \draw[rootEdge] (r) -- (S\i);
    }
    \draw[setEdge, text opacity = 1] (S1) -- node[pos=.70,lbl,yshift=5pt] {$1-\gamma$} (S2);
    \draw[setEdge] (S2) -- (S3);
    
    \foreach \j/\x in {1/-6, 2/-2, 3/0, 4/2, 5/6}{
      \node[elem] (x\j) at (\x,0) {$x_{\j}$};
    }
    
    \draw[membership, text opacity = 1] (S1) -- node[pos=.55,lbl,above left=-2pt] {$1$} (x1);
    \draw[membership] (S1) -- (x2);
    \draw[membership] (S2) -- (x2);
    \draw[membership] (S2) -- (x3);
    \draw[membership] (S2) -- (x4);
    \draw[membership] (S3) -- (x4);
    \draw[membership] (S3) -- (x5);
    
    \draw[shortcut, opacity=0.55, text opacity = 1] (r) to[out=135,in=120,looseness=1.1] node[pos=.55,lbl,above left=-3pt] {$2-\gamma$} (x1);
    \draw[shortcut, opacity=0.55] (r) to[out=150,in=100,looseness=1.05] (x2);
    \draw[shortcut, opacity=0.55] (r) to[out=30,in=80,looseness=1.05]  (x4);
    \draw[shortcut, opacity=0.55] (r) to[out=45,in=60,looseness=1.1]   (x5);

    \end{tikzpicture}
    \caption{
    Weighted subgraph induced by a single gadget and one root
    vertex $r$ in the metric space of
    Definition~\ref{def:sc-metric} for a $\setcover$ instance with $S_1 = \{x_1,x_2\}$, $S_2 = \{x_2,x_3,x_4\}$, and $S_3 = \{x_4,x_5\}$.
    Blue edges (weight~1) connect $r$ to each set vertex and each set
    vertex to the universe elements it contains; red edges (weight $2-\gamma$) connect $r$ to every element vertex; and grey edges (weight $1-\gamma$) connect set vertices. For visual clarity, the edges $(S_1,S_3)$ and $(r,x_3)$ are omitted.}
    \label{fig:sc-metric}
\end{figure}
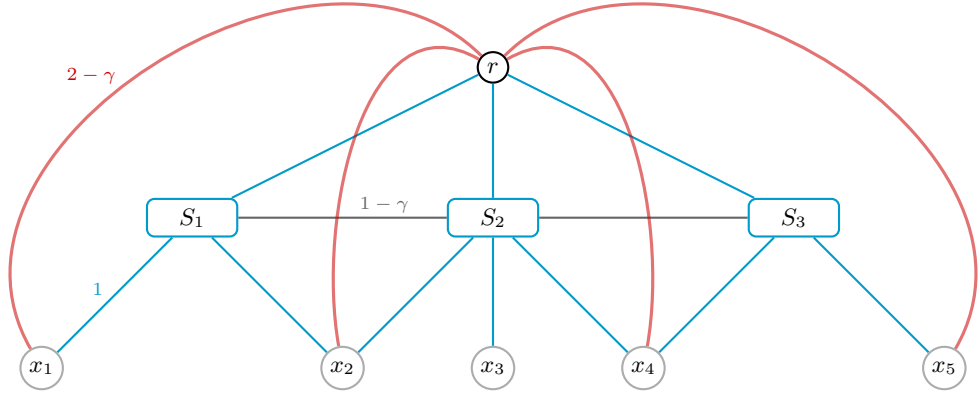

To see how the $\setcover$ instance is encoded in the above construction, consider a single gadget $P^{(q)}$ and the navigability constraint from a fixed root vertex $r_\ell$ to a vertex \smash{$x_j^{(q)}$}. Observe that \smash{$\sfd(S_i^{(q)}, x_j^{(q)}) < \sfd(r_\ell, x_j^{(q)}) = 2-\gamma$} if and only if $x_j \in S_i$. Using this property, one can show that satisfying navigability from the root vertex $r_\ell$ is roughly equivalent to constructing a valid set cover of $\calU$. By duplicating the root vertices and gadgets by a factor of $L$, we magnify the contribution of the optimal cover size of $(\calU,\calF)$ towards both the $\maxdeg$ and $\avgdeg$ of the sparsest navigable graph on $\PUF$. The formal argument appears below.

\begin{claim}\label{claim:small-nav-from-sc}
    If $(\calU,\calF)$ has a solution of size $\opt$, then the metric $(\PUF,\sfd)$ admits a $1$-navigable graph of maximum out-degree  $\max\{L\cdot\opt, m+n\}$ and with $L^2\cdot\opt+L(m+n)^2$ edges.
\end{claim}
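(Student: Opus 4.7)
I will construct an explicit $1$-navigable graph $G=(\PUF,E)$ from an optimal cover $\mathcal{C}\subseteq\calF$ of size $\opt$, verify $1$-navigability by case analysis on source-target pairs, and then bound the maximum out-degree and total edge count. Writing $V:=m+n$, I build $G$ with three edge families: (i) from each root $r_\ell$, edges to $\{S_i^{(q)}:S_i\in\mathcal{C},\,q\in[L]\}$, so that $r_\ell$ has out-degree $L\opt$; (ii) from each set vertex $S_i^{(q)}$, edges to every root $r_\ell$, every other set vertex $S_{i'}^{(q)}$ in the same gadget, and every element $x_j^{(q)}$ with $x_j\in S_i$; and (iii) from each element vertex $x_j^{(q)}$, edges to every set vertex in the same gadget.

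To verify navigability, I will first tabulate the shortest-path distances in $(\PUF,\sfd)$: $\sfd(r_\ell,S_i^{(q)})=1$, $\sfd(r_\ell,x_j^{(q)})=2-\gamma$, $\sfd(S_i^{(q)},S_{i'}^{(q)})=1-\gamma$ within a gadget and $2$ across gadgets, and $\sfd(S_i^{(q)},x_j^{(q)})=1$ if $x_j\in S_i$ else $2-\gamma$ (within a gadget), with cross-gadget analogues scaling predictably. Then for each $(s,t)$ pair I will exhibit an out-edge $(s,u)\in E$ with $\sfd(u,t)<\sfd(s,t)$. Two structural observations drive the argument: (a) root-to-element navigation relies on the cover property---for any $x_j^{(q)}$, some $S_i\in\mathcal{C}$ contains $x_j$, and the edge $r_\ell\to S_i^{(q)}$ yields $\sfd(S_i^{(q)},x_j^{(q)})=1<2-\gamma$; and (b) cross-gadget navigation from any non-root vertex is handled by the set-to-root edges in (ii), since $\sfd(r_\ell,\cdot)$ is strictly smaller than any cross-gadget distance from the source.

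For counting, the root out-degree is exactly $L\opt$, the set-vertex out-degree is $L+(m-1)+|S_i|\leq L+V-1$, and the element-vertex out-degree is $m$. With $L\geq V$ and $\opt\geq 2$, one has $L\opt\geq 2L\geq L+V-1$, so the maximum out-degree is $\max\{L\opt,V\}$; the corner case $\opt=1$ introduces only an additional additive $V$. Summing edges: roots contribute $L^2\opt$, set vertices contribute at most $Lm(L+V-1)\leq L^2 m + LmV$, and element vertices contribute $Lnm$, totalling $O(L^2\opt + LV^2)$. The main technical point---and the principal obstacle to matching the stated constants exactly---is the forced $L^2 m$ contribution from set-to-root edges: every $S_i^{(q)}$ must have a direct edge to every $r_\ell$, since only $r_\ell$ itself lies within distance $<1$ of $r_\ell$. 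This term absorbs cleanly into $L(m+n)^2$ when $L=\Theta(m+n)$, which is precisely the parameter regime used in the subsequent approximation-preserving reduction.
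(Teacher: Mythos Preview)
Your construction differs from the paper's. The paper takes the complete directed graph on each gadget $P^{(q)}$ together with a single edge from every gadget vertex to the distinguished root $r_1$; this gives each gadget vertex out-degree exactly $m+n$ and each root out-degree $L\cdot\opt$, matching the stated bounds exactly. You instead use sparser intra-gadget edges but have every set vertex $S_i^{(q)}$ point to \emph{all} $L$ roots.

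Your observation that set-to-root edges are forced is correct, and in fact it exposes a gap in the paper's own argument. Since $\sfd(S_i^{(q)}, r_\ell)=1$ while no vertex other than $r_\ell$ itself lies at distance strictly less than $1$ from $r_\ell$, the $1$-navigability constraint from $S_i^{(q)}$ to $r_\ell$ forces the direct edge $(S_i^{(q)}, r_\ell)$. The paper's graph omits these edges for $\ell\neq 1$, so it fails this constraint; the paper's case ``$s\in P^{(q)}$, $t\notin P^{(q)}$, $t\neq r_1$'' asserts $\sfd(s,t)\ge 2$, which is false precisely when $s$ is a set vertex and $t$ is a non-$r_1$ root.

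So you do not establish the edge count $L^2\opt + L(m+n)^2$ as literally stated---but neither does the paper, and your forced $L^2 m$ term shows this count is unachievable once $L\gg m+n$ (in particular for the choice $L=(m+n+1)^2$ used downstream). Your graph is genuinely $1$-navigable, and the bounds you actually obtain---roughly $L^2(\opt+m)+L(m+n)^2$ edges and max out-degree $\max\{L\cdot\opt,\,L+m+n\}$---still suffice for Lemma~\ref{lem:sc-nav-red}: the extra $L^2 m$ and additive $L$ terms appear symmetrically on the lower-bound side once one accounts for the forced set-to-root edges, so the reduction goes through with only a change of constants.
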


\begin{proof}
    Let $\{S_{i_1},\ldots,S_{i_\opt}\}$ be a solution to $(\calU,\calF)$. We define the a $1$-navigable graph $G = (\PUF,E)$. We will describe the subgraph induced by the root vertices $r_1,\ldots,r_L$ and a single gadget $P^{(q)}$, and the entire graph will be the union of these identical subgraphs over all gadgets. \begin{itemize}
        \item For $\ell\in[L]$ and each $b\in[\opt]$, add the edge $(r_\ell, S_{i_b}^{(q)})$.
        \item Include the complete graph on $P^{(q)}$, and add an edge from every vertex in $P^{(q)}$ to $r_1$ (this will ensure we satisfy the navigability constraints among vertices in different gadgets).
    \end{itemize}

    The out-degree of each root vertex $r_\ell$ is precisely $L\cdot \opt$, and the out-degree of each $S_i^{(q)}$ and $x_j^{(q)}$ is $m+n$. Therefore, $G$ has max out-degree $\max\{L\cdot\opt, m+n\}$ and $|E| = L^2\cdot \opt + L(m+n)^2$, as desired. It remains to argue that $G$ is 1-navigable. Let $s,t\in \PUF$, and divide into a small number of cases. \begin{itemize}
        \item If $s,t$ are contained in the same gadget $P^{(q)}$, then the direct edge $(s,t)\in E$ satisfies the constraint.
        \item If $s\in P^{(q)}$ and $t\notin P^{(q)}$ is not $r_1$, then $(s,r_1)\in E$ and $\sfd(r_1, t) \leq 2-\gamma < 2 \leq \sfd(s,t)$.
        \item If $s=r_\ell$ and $t=r_{\ell'}$, then $(s,S_{i_1}^{(1)})\in E$ and $\sfd(S_{i_1}^{(1)}, r_{\ell'}) = 1 < 2 = \sfd(r_\ell, r_{\ell'})$.
        \item If $s = r_\ell$ and $t = S_i^{(q)}$, then since $\opt\geq 1$, we have $(r_\ell, S_{i'}^{(q)})\in E$ for some $i'\in[m]$, and \[\sfd(S_{i'}^{(q)}, S_i^{(q)}) = 1-\gamma < 1 = \sfd(r_\ell, S_i^{(q)}).\]
        \item Finally, if $s=r_\ell$ and $t = x_j^{(q)}$, then there is some $b\in[\opt]$ where $x_j\in S_{i_b}$. Thus, $(r_\ell, S_{i_b}^{(q)})\in E$ and \[\sfd(S_{i_b}^{(q)}, x_j^{(q)}) = 1 <2-\gamma = \sfd(r_\ell, x_j^{(q)}).\]
    \end{itemize}
    Hence, $G$ is $1$-navigable.
\end{proof}

\begin{claim}\label{claim:small-sc-from-nav}
    Let $\opt$ be the size of the minimum set cover for $(\calU,\calF)$. Then, in any $1$-navigable graph $G$ on $(\PUF,\sfd)$, all root vertices $r_1,\ldots,r_L$ have out-degree at least $L\cdot \opt$.
\end{claim}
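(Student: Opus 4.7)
\textbf{Proof proposal for Claim~\ref{claim:small-sc-from-nav}.}

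The plan is to fix an arbitrary root $r_\ell$ and a gadget index $q \in [L]$, and then show that the out-neighbors of $r_\ell$ lying in $\{S_1^{(q)},\dots,S_m^{(q)}\}$ must encode a valid set cover of $(\calU,\calF)$. Since the $L$ gadgets are vertex-disjoint, this will force $\deg_G(r_\ell) \geq L\cdot\opt$ by summing over $q$.

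The first step is to pin down the distance $\sfd(r_\ell, x_j^{(q)})$ for every element-vertex. Inspection of the construction shows that the direct shortcut edge gives $\sfd(r_\ell, x_j^{(q)}) \leq 2-\gamma$, and any path routed through the set layer costs at least $1 + 1 = 2 > 2-\gamma$, so in fact $\sfd(r_\ell, x_j^{(q)}) = 2-\gamma$. The key step -- and the one requiring the most care -- is to identify exactly which vertices $u \in \PUF$ satisfy $\sfd(u, x_j^{(q)}) < 2-\gamma$. I will argue by case analysis on the type and gadget index of $u$:
\begin{itemize}
    \item if $u = r_{\ell'}$ (any root), then $\sfd(u, x_j^{(q)}) = 2-\gamma$ via the shortcut, not strictly smaller;
    \item if $u = S_i^{(q')}$ with $q' \neq q$, then any path must pass through a root, giving $\sfd(u, x_j^{(q)}) \geq 1 + (2-\gamma) > 2-\gamma$;
    \item if $u = x_{j'}^{(q')}$, then every path to $x_j^{(q)}$ must traverse at least two unit-weight edges, so $\sfd(u, x_j^{(q)}) \geq 2 > 2-\gamma$;
    \item if $u = S_i^{(q)}$ and $x_j \in S_i$, the direct edge gives $\sfd(u, x_j^{(q)}) = 1 < 2-\gamma$;
    \item if $u = S_i^{(q)}$ but $x_j \notin S_i$, then any route to $x_j^{(q)}$ goes either through another set vertex (cost $(1-\gamma)+1 = 2-\gamma$) or back through a root (cost $\geq 1 + (2-\gamma)$), so $\sfd(u, x_j^{(q)}) = 2-\gamma$, again not strictly smaller.
\end{itemize}
Thus the set of ``useful'' witnesses is exactly $\{S_i^{(q)} : x_j \in S_i\}$.

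Given $G$ is $1$-navigable, the navigability constraint from $r_\ell$ to $x_j^{(q)}$ forces $r_\ell$ to have an out-edge to some $S_i^{(q)}$ with $x_j \in S_i$. Define $T_q := \{i \in [m] : (r_\ell, S_i^{(q)}) \in E\}$. The above then implies that $\{S_i : i \in T_q\}$ covers every $x_j \in \calU$, i.e.\ it is a valid set cover of $(\calU,\calF)$, so $|T_q| \geq \opt$. Since the $T_q$'s for distinct $q$ correspond to disjoint sets of out-edges of $r_\ell$, we conclude
\[
\deg_G(r_\ell) \;\geq\; \sum_{q=1}^{L} |T_q| \;\geq\; L\cdot\opt,
\]
which is the desired bound.

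The only real obstacle is the case analysis bounding $\sfd(u, x_j^{(q)})$ from below by $2-\gamma$ for all $u$ outside the ``useful'' set, which is where the choice of weights $1$, $1-\gamma$, and $2-\gamma$ in Definition~\ref{def:sc-metric} is used crucially: the shortcut weight $2-\gamma$ makes navigation to $x_j^{(q)}$ easy enough that a root's out-degree is forced to mimic a set cover, while the inter-set weight $1-\gamma < 1$ prevents set vertices from helping to navigate toward elements they don't contain.
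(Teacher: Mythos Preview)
Your overall strategy matches the paper's, but there is a genuine gap in the case analysis. In the bullet ``if $u = x_{j'}^{(q')}$, then every path to $x_j^{(q)}$ must traverse at least two unit-weight edges, so $\sfd(u, x_j^{(q)}) \geq 2$,'' you have silently excluded the possibility $u = x_j^{(q)}$ itself, for which $\sfd(u, x_j^{(q)}) = 0 < 2-\gamma$. The $1$-navigability definition permits $u = t$, so the direct edge $(r_\ell, x_j^{(q)})$ is a perfectly valid witness for the constraint from $r_\ell$ to $x_j^{(q)}$. Consequently your set of ``useful'' witnesses is not $\{S_i^{(q)} : x_j \in S_i\}$ but $\{S_i^{(q)} : x_j \in S_i\} \cup \{x_j^{(q)}\}$, and your collection $\{S_i : i \in T_q\}$ need not cover $\calU$: any $x_j$ for which $r_\ell$ happens to have a direct edge to $x_j^{(q)}$ is potentially uncovered by $T_q$, so the inequality $|T_q| \geq \opt$ fails.

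The fix, which is exactly what the paper does, is to count \emph{all} out-neighbors of $r_\ell$ inside the gadget $P^{(q)}$, not just those among the set vertices. For each $x_j$, navigability forces an out-edge to either some $S_i^{(q)}$ with $x_j \in S_i$ or to $x_j^{(q)}$ itself; in the latter case assign to that edge any set $S_i \ni x_j$ (which exists since $(\calU,\calF)$ admits a cover). The image of this assignment is a set cover of size at most $|N^{\text{out}}(r_\ell) \cap P^{(q)}|$, giving $|N^{\text{out}}(r_\ell) \cap P^{(q)}| \geq \opt$ and hence $\deg_G(r_\ell) \geq L\cdot\opt$ after summing over $q$.
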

\begin{proof}
    Let $G = (\PUF,E)$ be $1$-navigable. We show that for every root vertex $r_\ell$ and every gadget $P^{(q)}$, the out-neighborhood of $r_{\ell}$, denoted $N^{\textup{out}}(r_{\ell})$ satisfies $|N^{\textup{out}}(r_{\ell})\cap P^{(q)}|\geq \opt$, which implies the claim. Fix $r_\ell$ and $P^{(q)}$, and consider any element vertex $x_j^{(q)}$. Since $G$ is $1$-navigable, the navigability constraint from $r_\ell$ to $x_j^{(q)}$ means there is an edge $(r_\ell, p)\in E$ such that $\sfd(p,x_j^{(q)}) < \sfd(r_\ell, x_j^{(q)}) =2-\gamma$. The metric is defined so that only two conditions may arise, and we build an assignment $\sigma \colon N^{\textup{out}}(r_{\ell}) \cap P^{(q)} \to \calF$ which gives a set cover for $(\calU, \calF)$ accordingly.
    \begin{itemize}
        \item Either $p = x_j^{(q)}$, in which case we let $\sigma(p) = S_i \in \calF$.
        \item Or, $p = S_i^{(q)}$ for some set $S_i\in\calF$ containing $x_j$. In this case, we let $\sigma(p) \in \calF$ be an arbitrary set containing $x_j$ (such a set must exist or $(\calU, \calF)$ does not admit any cover).
    \end{itemize}     
    The case that $p$ is another node can be ruled out. If \smash{$p' = x_{j'}^{(q)}$} for $j' \neq j$ or \smash{$p' = x_{j'}^{(q')}$} for $q' \neq q$, then \smash{$\sfd(p', x_{j}^{(q)}) \geq 2$}. Note, the resulting assignment from the image of $\sigma \subset \calF$ forms a cover of $\calU$ of size at most $|N^{\textup{out}}(r_{\ell}) \cap P^{(q)}|$, which implies $|N^{\textup{out}}(r_{\ell}) \cap P^{(q)}| \geq \opt$. Since $P^{(q)}$ and $P^{(q')}$ are disjoint for $q \neq q'$ and there are $L$ such indices, the out-degree of each root $r_{\ell}$ is of size at least $L \cdot \opt$.
\end{proof}

\begin{lemma}\label{lem:sc-nav-red}
    Suppose there exists a polynomial-time algorithm that computes a $\rho(N)$-approximation to the sparsest $1$-navigable graph on metric spaces of size at most $N$, under either the $\maxdeg$ or $\avgdeg$ objectives. Then, for any instance $(\calU, \calF)$ of $\setcover$ with $|\calU| = n$ and $|\calF| = m$, there exists a polynomial-time algorithm that computes a $2\rho((n + m + 1)^3)$-approximation to the size of the minimum set cover.
\end{lemma}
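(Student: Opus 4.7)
The plan is to instantiate Definition~\ref{def:sc-metric} with $\gamma\in(0,1)$ fixed and the blow-up parameter $L = (m+n+1)^2$, producing a metric of size $N = L\cdot(m+n+1) = (m+n+1)^3$. Let $\opt$ denote the size of an optimal set cover for $(\calU,\calF)$; we may assume $\opt \geq 1$. Since $L \geq m+n$, Claim~\ref{claim:small-nav-from-sc} guarantees a $1$-navigable graph on $\PUF$ with max out-degree $L\cdot\opt$ and at most $L^2\cdot\opt + L(m+n)^2 \leq 2L^2\cdot\opt$ edges, where the last inequality uses $L \geq (m+n)^2$ and $\opt \geq 1$. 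Thus the optimum of the $\maxdeg$ (resp.\ $\avgdeg$) objective on $\PUF$ is bounded by $L\cdot\opt$ (resp.\ $2L^2\cdot\opt$).

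Run the hypothesized $\rho(N)$-approximation on $\PUF$ to obtain a $1$-navigable graph $G$. Under the $\maxdeg$ objective, every vertex of $G$, in particular every root $r_\ell$, has out-degree at most $\rho(N)\cdot L\cdot\opt$. Under the $\avgdeg$ objective, $|E(G)| \leq 2\rho(N)\cdot L^2\cdot\opt$, and averaging over the $L$ roots yields some $r_\ell$ with $\deg_G(r_\ell) \leq |E(G)|/L \leq 2\rho(N)\cdot L\cdot\opt$. In either case, we identify a root $r_\ell$ with out-degree at most $2\rho(N)\cdot L\cdot\opt$.

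Because the out-neighbors of $r_\ell$ are distributed among the $L$ disjoint gadgets $P^{(1)},\dots,P^{(L)}$, a second averaging pass exhibits a gadget $P^{(q)}$ with $|N^{\textup{out}}(r_\ell)\cap P^{(q)}| \leq 2\rho(N)\cdot\opt$. Now apply exactly the assignment $\sigma\colon N^{\textup{out}}(r_\ell)\cap P^{(q)} \to \calF$ constructed in the proof of Claim~\ref{claim:small-sc-from-nav}: by $1$-navigability, the witness from $r_\ell$ to each element vertex $x_j^{(q)}$ must lie in $P^{(q)}$ (either $x_j^{(q)}$ itself or some $S_i^{(q)}$ with $x_j \in S_i$), and $\sigma$ sends each such witness to a set in $\calF$ containing $x_j$. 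Hence $\sigma\bigl(N^{\textup{out}}(r_\ell)\cap P^{(q)}\bigr) \subseteq \calF$ is a valid set cover of $\calU$ of size at most $2\rho(N)\cdot\opt = 2\rho((n+m+1)^3)\cdot\opt$. All steps (building $\PUF$, invoking the approximation, two averaging passes, and evaluating $\sigma$) run in polynomial time.

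The main obstacle, which dictates the cubic blow-up, is choosing $L$ so that \emph{both} lower-bound terms in Claim~\ref{claim:small-nav-from-sc} are dominated by the ``cover-contribution'' terms: we need $L\cdot\opt \geq m+n$ for the $\maxdeg$ bound to simplify, and we need $L^2\cdot\opt \geq L(m+n)^2$ so that the intra-gadget edges do not swamp the $L^2\cdot\opt$ cover-edges in the $\avgdeg$ bound. The choice $L = (m+n+1)^2$ is the smallest polynomial blow-up that achieves both, at the cost of a factor of $2$ in the final approximation and an $N = (m+n+1)^3$-point instance. Beyond this parameter-balancing issue, all of the combinatorial work -- in particular the cover-extraction map $\sigma$ -- has already been set up in Claim~\ref{claim:small-sc-from-nav}, so the remainder is just two sequential averaging arguments.
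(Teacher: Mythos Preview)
Your proof is correct and uses the same metric construction with the same parameter $L=(m+n+1)^2$, but it diverges from the paper in how the approximation is read off. The paper's proof is non-constructive: it simply returns the number $\deg_G(r_\ell)/L$ (in the $\maxdeg$ case) or $|E|/L^2$ (in the $\avgdeg$ case), and then invokes Claim~\ref{claim:small-sc-from-nav} only as a \emph{lower bound} ($\deg_G(r_\ell)\geq L\cdot\opt$ and $|E|\geq L^2\cdot\opt$) to certify that the returned number is sandwiched in $[\opt,\,2\rho(N)\cdot\opt]$. You instead carry out two averaging passes (over roots, then over gadgets) to locate a specific $(r_\ell,P^{(q)})$ with $|N^{\textup{out}}(r_\ell)\cap P^{(q)}|\leq 2\rho(N)\cdot\opt$, and then re-use the assignment $\sigma$ from the proof of Claim~\ref{claim:small-sc-from-nav} to extract an actual cover of that size. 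This is precisely the ``standard averaging argument'' the paper alludes to in the Remark following the lemma; the benefit is that your version directly yields an explicit approximate cover rather than just an estimate of $\opt$, at essentially no extra cost.
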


\begin{proof}
    Let $\calA$ be a polynomial-time $\rho(N)$-algorithm for computing the sparsest $1$-navigable graph (under the $\maxdeg$ or $\avgdeg$ objective) on metrics of size at most $N$. We describe an algorithm for approximating the size $\opt$ of the optimal solution of an instance $(\calU,\calF)$ of $\setcover$. \begin{enumerate}
        \item Build the metric $(\PUF,\sfd)$ from Definition~\ref{def:sc-metric} with $L = (m+n+1)^2$.
        \item Run $\calA$ on $(\PUF,\sfd)$ to obtain a $1$-navigable graph $G = (\PUF,E)$.
        \begin{enumerate}
            \item If $\calA$ approximates the $\maxdeg$ objective, choose an arbitrary root vertex $r_\ell$ and return $\deg_G(r_\ell)$, the out-degree of $r_{\ell}$ in $G$, divided by $L$.
            \item If $\calA$ approximates the $\avgdeg$ objective, return $|E| / L$.
        \end{enumerate}
    \end{enumerate}
    
    Since $(\PUF,\sfd)$ can be constructed in polynomial time and $\calA$ runs in polynomial time, this algorithm runs in polynomial time. We claim that the above algorithm returns a $2\rho((m+n+1)^3)$-approximation of $\opt$. First, suppose that $\calA$ approximates with respect to the $\maxdeg$ objective, and observe that by Claim~\ref{claim:small-nav-from-sc} and $L\geq m+n$, there is a $1$-navigable graph on $\PUF$ of max out-degree at most $L\cdot\opt$. Therefore, \[\deg_G(r_\ell)\leq \rho(|\PUF|)\cdot L\cdot\opt = \rho((m+n+1)^3)\cdot L\cdot\opt\]
    Also, by Claim~\ref{claim:small-sc-from-nav},  $1$-navigability of $G$ implies that $\deg_G(r_\ell) \geq L\cdot \opt$, and thus \[\frac{\deg_G(r_\ell)}{L} \in \left[\opt, \rho((m+n+1)^3)\cdot\opt\right].\]

    Next, suppose that $\calA$ approximates with respect to the $\avgdeg$ objective, and observe that by Claim~\ref{claim:small-nav-from-sc} and the setting of $L = (m+n+1)^2$, there is a navigable graph on $\PUF$ with at most $2\cdot(m+n+1)^4 \cdot\opt= 2L^2\cdot\opt$ edges. Therefore, \[|E| \leq \rho((m+n+1)^3)\cdot 2L^2\cdot \opt.\] By Claim~\ref{claim:small-sc-from-nav}, we must have $|E|\geq L^2\cdot\opt$, and thus \[\frac{|E|}{L^2}\in\left[\opt, 2 \rho((m+n+1)^3)\cdot\opt\right],\] as desired.
\end{proof}

\begin{remark}
    Although Lemma~\ref{lem:sc-nav-red} only establishes an approximation to the size of the optimal set cover, the reduction can be extended to recover an approximately optimal set cover itself using a standard averaging argument. 
\end{remark}

We are now ready to prove the main theorem of this section.

\begin{proof}[Proof of Theorem~\ref{thm:nav-sc-bounds}]
    We begin with the first statement. By Lemma~\ref{lem:greedy-set-cover}, there is a polynomial-time $(\ln N + 1)$-approximation algorithm for $\setcover$ on instances $(\calU,\calF)$ with $|\calU| = N$. Combining with Lemma~\ref{lem:nav-sc-red}, this implies that for any $n$-point metric $(\PUF,\sfd)$ and $\alpha\geq 1$, there is a polynomial-time algorithm from computing a $(\ln n + 1)$-approximation to the sparsest $\alpha$-navigable graph on $\PUF$, under both the $\maxdeg$ and $\avgdeg$ objectives.
    
    We now show that it is $\nphard$ to compute a $(c \ln N)$-approximation to the sparsest $\alpha$-navigable graph (under either the max out-degree or average-degree objectives) on metrics of size $N$, for $\alpha = 1$ and a constant $c \in (0,1)$ to be specified later. Suppose that there exists a polynomial-time algorithm achieving such an approximation. Then by Lemma~\ref{lem:sc-nav-red}, this would imply a polynomial-time algorithm for approximating the optimal set cover on instances $(\calU,\calF)$ within factor
    \[
    2c \cdot \ln(n + m + 1)^3 \le 6c \cdot \ln(n + m + 1),
    \]
    where $|\mathcal{U}| = n$ and $|\mathcal{F}| = m$. However, by Lemma~\ref{lem:np-hard-set-cover} (with $\varepsilon = 1/2$), it is $\nphard$ to approximate the minimum set cover to within factor $\frac{1}{2} \ln n$, even when $|\mathcal{U}| = n$ and $|\mathcal{F}| \le n^D$ for some constant $D>1$. Since $n + m + 1 \le 2n^D$, we have:
    \[
    \ln(n + m + 1) \le \ln(2n^D) = D \ln n + \ln 2 \le 2D \ln n,
    \]
    for sufficiently large $n$. Thus, the approximation factor $6c \ln(n + m + 1)$ is at most $12cD \ln n$. Thus, for the setting of $c = \frac{1}{24D}$, it is $\nphard$ to compute a $(c\ln N)$-approximation to the sparsest navigable graph, as desired.
\end{proof}

\begin{remark}\label{rem:n^3-bound}
    The $(\ln n + 1)$-approximation algorithm from the first part of Theorem~\ref{thm:nav-sc-bounds} runs in time $O(n^3)$. This is because the reduction in Lemma~\ref{lem:nav-sc-red} takes $O(n^2)$ time to construct each of the $n$ instances of $\setcover$, and the greedy algorithm from Lemma~\ref{lem:greedy-set-cover} solves each of these instances in time $O(n^2)$. The $O(n^3)$ runtime matches that of Slow-DiskANN (Lemma~\ref{lem:diskann-navigable}), but with a vastly better worst-case approximation guarantee: while Slow-DiskANN can incur an $\Omega(n / \log n)$-approximation in the worst case (Theorem~\ref{thm:diskann-worst-case}), the set-cover-based algorithm guarantees a $(\ln n + 1)$-approximation for all inputs.

\end{remark}
\newcommand{\OPT}{\mathrm{OPT}}
\newcommand{\contains}{\mathrm{contains}}
\newcommand{\Fail}{\mathtt{FAIL}}
\newcommand{\FindHeavySet}{\mathrm{FindHeavySet}}
\newcommand{\FastSetCover}{\mathrm{FastSetCover}}
\newcommand{\Ualive}{\calU^{\textup{alive}}}
\newcommand{\Unew}{\calU^{\textup{new}}}

\newcommand{\BuildNav}{\mathrm{BuildNav}}
\newcommand{\VerifyNav}{\mathrm{VerifyNav}}

\SetKw{Continue}{continue}

\section{Fast Approximation Algorithms for Sparse Navigability}\label{sec:upper-bounds}

Theorem~\ref{thm:nav-sc-bounds} gives a $(\ln n + 1)$-approximation algorithm for the sparsest $\alpha$-navigable graph, running in $O(n^3)$ time, and it shows that improving this approximation to $o(\ln n)$ is $\nphard$. A natural question is whether the $O(n^3)$ runtime can be improved for a similar approximation guarantee. In the first main result of this section, we show that if the sparsest $\alpha$-navigable graph on a given metric has $\OPT_{\textup{size}}$ edges, then we can compute an $O(\ln n)$-approximation in $\wt{O}(n\cdot\OPT_{\textup{size}})$ time. This yields up to a near-linear speedup for input metrics which admit sparse $\alpha$-navigable graphs.

\begin{theorem}\label{thm:fast-log-approx}
Let $(P,\sfd)$ be an $n$-point metric space and $\alpha\geq 1$. \begin{itemize}
    \item There exists a randomized algorithm which outputs an $O(\ln n)$-approximation to the sparsest $\alpha$-navigable graph on $P$ (under both the $\maxdeg$ and $\avgdeg$ objectives), with high probability.
    \item Letting $\OPT_{\textup{size}}$ denote the minimum number of edges in any $\alpha$-navigable graph on $P$, the algorithm runs in time $\Ot(n \cdot \OPT_{\textup{size}})$.
\end{itemize}
\end{theorem}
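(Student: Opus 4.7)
The plan is to implement the per-source Set Cover reduction of Lemma~\ref{lem:nav-sc-red} directly in a membership-query model, exploiting the fact that the query ``is $t \in Z_\alpha(s,u)$?'' can be answered in $O(1)$ time via the single comparison $\sfd(u,t) < \sfd(s,t)/\alpha$. For each source $s \in P$, I would target a per-source runtime of $\Ot(n \cdot k_s)$, where $k_s$ is the optimum cover size for the instance $(P \setminus \{s\}, \calF_s)$. Summed over all sources this gives $\Ot(n \sum_s k_s) = \Ot(n \cdot \OPT_{\textup{size}})$, and the $O(\ln n)$-approximation ratio will follow from the classical greedy analysis provided each step picks a set whose residual coverage is within a constant factor of the maximum.

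To realize this without ever materializing the sets $Z_\alpha(s,u)$ explicitly, I would maintain a uniform random sample $S \subseteq U$ of the currently-uncovered universe $U$ of size $\Theta(k \log n)$, where $k$ is a doubling guess of $k_s$ (initialized to $1$ and doubled on failure). For each $t \in S$ I precompute the inverted list $L_t := \{u : t \in Z_\alpha(s,u)\}$ using $n-1$ membership queries, and from these derive estimated coverages $f_u := |Z_\alpha(s,u) \cap S|$ stored in a max-heap. A Chernoff plus union bound argument should then show that so long as $|S| = \Omega(k_s \log n)$, the rescaled estimator $f_u \cdot |U|/|S|$ approximates $|Z_\alpha(s,u) \cap U|$ within a constant factor simultaneously for all $u$ and throughout all iterations, with high probability. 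Therefore popping the $u^*$ maximizing $f_u$ serves as a constant-factor oracle for the heaviest remaining set.

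The main loop pops $u^* = \arg\max_u f_u$, adds $(s,u^*)$ to the output, updates $U$ by querying $\sfd(u^*,t) < \sfd(s,t)/\alpha$ for each $t \in U$ at cost $O(|U|)$, and for every newly-covered $t \in S$ decrements $f_u$ for each $u \in L_t$. Whenever $|S \cap U|$ falls below $|S|/2$ the sample is refreshed from the current $U$ and the inverted lists are rebuilt, and whenever the number of selected edges for $s$ exceeds $Ck \log n$ the guess $k$ was too small: we double $k$ and restart for that source. Termination is when $U = \emptyset$. Budgeting the queries: update costs total $\sum_i O(|U_i|) = O(n k_s)$ by geometric decay of $|U|$; inverted-list maintenance charges $O(|L_t|) = O(n)$ per evicted sample, and the $O(\log n)$ full refreshes each cost $\Ot(n k)$; the doubling over $k$ adds only a constant factor by geometric summation.

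The main obstacle is establishing the uniform concentration underlying the approximation guarantee: because the sample $S$ is shared across iterations and $U$ shrinks adversarially based on earlier selections, the estimators $f_u$ are correlated across iterations and cannot be union-bounded naively against fresh samples. To resolve this I would couple the execution to an i.i.d.\ random permutation of $P \setminus \{s\}$ (or equivalently, argue via a martingale-type bound against an adversary restricted to the sigma-algebra at each step), so that per-iteration Chernoff bounds can be combined into a single uniform guarantee. Getting this coupling tight, so that the required sample size remains $\Theta(k_s \log n)$ and no extra $k_s$ factor creeps into the runtime, is the crux of the proof.
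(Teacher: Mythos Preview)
Your high-level plan---reduce to per-source \setcover and solve each instance in $\Ot(n k_s)$ membership queries, then sum $\sum_s k_s \le \OPT_{\textup{size}}$---matches the paper exactly, and the constant-time realization of $\contains(Z_\alpha(s,u),t)$ is the same observation the paper makes. Where you diverge is in the design of the membership-query \setcover subroutine, and the divergence leaves a real gap.

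You sample only \emph{elements} (a fixed sample $S$ of size $\Theta(k\log n)$), build inverted lists once, and then make many greedy selections against the same $S$ before refreshing. You correctly flag the resulting adaptivity as the crux: because the picked sets are chosen to maximize overlap with $S$, the surviving sample $S\cap U$ is no longer uniform in $U$, and later estimates $f_u$ can be systematically biased. Your proposed fixes do not obviously close this at the target sample size. A union bound over all length-$O(k\log n)$ selection sequences needs $\Theta(k^2\,\mathrm{polylog}\,n)$ samples; coupling to a random permutation still leaves the surviving prefix conditioned on earlier choices; and drawing a fresh sample before every selection costs $n\cdot|S|=\Theta(nk\log n)$ per step (since you scan all $n$ candidate sets against the sample), giving $\Ot(nk^2)$ overall. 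In each case the extra factor of $k$ you were worried about reappears.

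The paper's subroutine (Lemma~\ref{lem:fast-greedy-cover}, via $\FindHeavySet$) sidesteps adaptivity entirely by drawing \emph{fresh} samples at every call and, crucially, by randomly sampling \emph{sets} as well as elements, so that it never scans all $m$ sets. At each of $O(\log\hat k)$ levels $i$ it draws $R_i\approx (m\log n)/2^i$ random sets and $T_i\approx 2^i\log(mn)$ random elements, so $R_iT_i=\Ot(m)$ at every level and the whole call costs $\Ot(m)$. The structural fact justifying this level scheme is a harmonic-balancing claim (Claim~\ref{claim:harmonic-balancing}): in any size-$k$ cover of the current universe, some $\ell\le k$ has at least $\ell$ sets each of density $\Omega(1/(\ell\log n))$, so at the level $i\approx\log_2\ell$ the set sample hits one of them and the element sample is large enough to certify it. Because every sample is fresh, a single Chernoff plus union bound over the $\Ot(m)$ tested pairs suffices---no adaptive coupling is needed. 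The $\Ot(nk)$ term then comes separately, from explicitly updating $\Ualive\gets\Ualive\setminus S$ after each of the $O(k\log n)$ successful calls.
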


The key ingredient in the proof of Theorem~\ref{thm:fast-log-approx} is Lemma~\ref{lem:fast-greedy-cover} (stated below), which provides a $O(\ln n)$-approximation algorithm for $\setcover$ in the membership-query access model. The benefit (as opposed to directly utilizing Lemma~\ref{lem:greedy-set-cover}) is that the algorithm of Lemma~\ref{lem:fast-greedy-cover} finds a cover faster if the optimal cover is small.

In the second main result of this section, we show that potentially faster algorithms are possible, with running time $\tilde{O}(n^{\omega} \log \Delta / \eps)$ for any $\eps \in (0,1)$, even when $\OPT_{\textup{size}}$ is large. The speedup is achieved by allowing a bicriteria approximation (i.e., incorporating a relaxation of the navigability parameter $\alpha$). Specifically, we give an $\wt{O}(n^\omega \log \Delta / \eps)$-time algorithm that produces an $\alpha$-navigable graph whose maximum out-degree and number of edges are within an $O(\ln n)$-factor of those of the sparsest $2\alpha(1+\eps)$-navigable graph.

\begin{definition}[Bicriteria Approximation to the Sparsest Navigable Graph]\label{def:bi-criteria}
Let $(P,\sfd)$ be a finite metric space and $\beta\geq\alpha\geq 1$. We say that $G = (P,E)$ is a $(\alpha,\beta)$-\emph{bicriteria} $c$-approximation to the sparsest $\alpha$-navigable graph under the $\{\maxdeg, \avgdeg\}$ objective if $G$ is $\alpha$-navigable and:
\begin{itemize}
    \item ($\maxdeg$) $\max_{s\in P} \deg_G(s)$ is at most $c$ times the max out-degree of any $\beta$-navigable graph on $P$.
    \item ($\avgdeg$) $|E|$ is at most $c$ times the number of edges in any $\beta$-navigable graph on $P$.
    \end{itemize}
\end{definition}

\begin{remark}
    For $\beta \geq \alpha \geq 1$,\, $\beta$-navigability imposes tighter constraints than $\alpha$-navigability. Namely, any edge $(s, u)$ satisfying the $\beta$-navigability constraint from $s$ to $t$, i.e., $\sfd(u, t) < \sfd(s, u) / \beta$, also satisfies the corresponding $\alpha$-navigability constraint. Hence, the sparsest $\beta$-navigable graph will contain at least as many edges as the sparsest $\alpha$-navigable graph, meaning that Definition~\ref{def:bi-criteria} is indeed a relaxation of the Sparsest Navigable Graph problem.
\end{remark}

\begin{theorem}\label{thm:bicriteria-alg}
Let $(P, \sfd)$ be an $n$-point metric space with aspect ratio $\Delta$, and let $\alpha \geq 1$.

\begin{itemize}
    \item For any $\eps \in (0,1)$, there exists a randomized algorithm which outputs an $(\alpha,\, 2\alpha(1+\eps))$-bicriteria $O(\ln n)$-approximation to the sparsest navigable graph on $P$ (under both the $\maxdeg$ and $\avgdeg$ objectives), with high probability.
    \item This algorithm runs in time $\Ot(n^\omega \log \Delta / \eps)$.
\end{itemize}
\end{theorem}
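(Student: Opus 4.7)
The plan is to reformulate $\sparsenav$ as a collection of set-cover LPs, one per source, solve them in parallel by multiplicative weights with matrix-multiplication-based oracles, and then apply standard randomized rounding. The bicriteria slack $2\alpha(1+\eps)$ is what makes the oracle cheap: rounding every pairwise distance up to a nearby power of $(1+\eps)$ turns each oracle call into a sum of $O(\log\Delta/\eps)$ Boolean matrix products.

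\textbf{Step 1 (distance rounding and a relaxed cover condition).} Round every pairwise distance up to the nearest power of $(1+\eps)$ to obtain $\wt{\sfd}$, and define the relaxed cover set
\[
\wt{Z}(s,u) \;=\; \{\, t \in P\setminus\{s\} \,:\, 2\alpha\cdot \wt{\sfd}(u,t) \leq \wt{\sfd}(s,t) \,\}.
\]
Two short inequalities pin down the bicriteria guarantee. Soundness: if $t\in \wt{Z}(s,u)$, then
\[
\sfd(u,t) \;\leq\; \wt{\sfd}(u,t) \;\leq\; \frac{\wt{\sfd}(s,t)}{2\alpha} \;\leq\; \frac{(1+\eps)\sfd(s,t)}{2\alpha} \;<\; \frac{\sfd(s,t)}{\alpha},
\]
so any graph that $\wt{Z}$-covers every $(s,t)$ is $\alpha$-navigable. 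Completeness: if $G^*$ is the optimal $2\alpha(1+\eps)$-navigable graph and $(s,u^*)\in E(G^*)$ is an edge witnessing navigability to $t$, then $\wt{\sfd}(u^*,t) \leq (1+\eps)\sfd(u^*,t) < \sfd(s,t)/(2\alpha) \leq \wt{\sfd}(s,t)/(2\alpha)$, so $t\in\wt{Z}(s,u^*)$. Hence, for every $s$, the optimum $\wt{Z}$-cover size is at most $\deg_{G^*}(s)$, and the total $\wt{Z}$-cover size is at most $|E(G^*)|$.

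\textbf{Step 2 (solve all set-cover LPs in parallel).} For each $s$, write the LP relaxation of set cover on $(P\setminus\{s\}, \{\wt{Z}(s,u)\}_u)$; its optimum is at most $\deg_{G^*}(s)$ by Step 1. I would solve all $n$ such LPs to constant accuracy in $O(\log n)$ multiplicative-weights iterations. Each iteration reduces to a batched ``best-set'' oracle: given nonnegative weights $W[t,s]$, compute
\[
C[u,s] \;=\; \sum_{t} W[t,s]\cdot \mathbb{1}\bigl[t\in \wt{Z}(s,u)\bigr] \qquad \text{for every } (s,u).
\]
Because $\mathbb{1}[t\in\wt{Z}(s,u)]$ depends only on the scales of $\wt{\sfd}(s,t)$ and $\wt{\sfd}(u,t)$, setting $\delta = \lceil \log_{1+\eps}(2\alpha)\rceil$, $B^{(k)}[u,t] = \mathbb{1}[\wt{\sfd}(u,t)\leq (1+\eps)^{k-\delta}]$, and $W^{(k)}[t,s] = W[t,s]\cdot \mathbb{1}[\wt{\sfd}(s,t)=(1+\eps)^k]$, we get
\[
C \;=\; \sum_{k=0}^{L} B^{(k)}\,W^{(k)}, \qquad L = O(\log\Delta/\eps).
\]
Each of these $L+1$ Boolean matrix products costs $O(n^\omega)$, so one MWU iteration runs in $\Ot(n^\omega \log\Delta/\eps)$ time and the full LP solve in $\Ot(n^\omega\log\Delta/\eps)$.

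\textbf{Step 3 (randomized rounding and the main obstacle).} Given the (approximately optimal) fractional solution $x^*$, independently include each edge $(s,u)$ in $G$ with probability $\min(1,\, c\ln n\cdot x^*_{s,u})$ for a sufficiently large constant $c$. Standard set-cover rounding then yields: (i) each of the $n^2$ constraints $(s,t)$ is $\wt{Z}$-covered by $s$'s picks except with probability $1/\mathrm{poly}(n)$, so a union bound gives $\alpha$-navigability with high probability; (ii) Chernoff bounds give $\deg_G(s)\leq O(\ln n)\cdot \deg_{G^*}(s)$ for every $s$, and $|E(G)|\leq O(\ln n)\cdot |E(G^*)|$, with high probability. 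The principal technical obstacle is keeping the composition of the approximate MWU output with randomized rounding tight enough that the final approximation is $O(\ln n)$ and the failure probability is $1/\mathrm{poly}(n)$; this is handled by requesting only a constant-factor approximation from MWU and absorbing the loss into $c$, and by patching the few $(s,t)$ pairs accidentally uncovered after rounding with a single extra edge each (an $o(1)$-expectation event).
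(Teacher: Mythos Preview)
Your approach is correct in outline and genuinely different from the paper's. The paper does not go through an LP at all: it runs a purely combinatorial procedure ($\BuildNav$) that, for each source $s$, repeatedly samples $\hat{k}_s$ uniformly random uncovered targets, adds those edges, and uses one call to a matrix-multiplication-based verifier ($\VerifyNav$) to recompute all uncovered sets simultaneously. The budget $\hat{k}_s$ is doubled whenever $O(\log n)$ consecutive rounds fail to shrink $\calU_s$ by a constant factor. The analysis hinges on a short geometric lemma (if $z\in Z_{2\alpha}(s,x)$ and $\sfd(x,y)\le\sfd(x,z)$ then $z\in Z_\alpha(s,y)$), which shows that sampling a target from the ``close half'' of any $Z_{2\alpha}(s,x)$ piece covers its ``far half''; this is precisely where the factor $2$ in the bicriteria parameter enters, and it is essential to the paper's argument. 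Your route instead bakes the factor $2$ into the definition of $\wt Z$ and then solves the resulting set-cover LPs by MWU with a batched oracle, followed by standard randomized rounding. Both approaches share the same matrix-multiplication trick for evaluating all navigability constraints at all scales in $O(n^\omega\log\Delta/\eps)$ time per round; they differ in how rounds are organized and what drives the $O(\log n)$ approximation (greedy-style random coverage versus LP rounding). A pleasant byproduct of your route is that the factor $2$ is not intrinsically needed: the same argument with threshold $\beta=\alpha(1+\eps)^2$ would yield an $(\alpha,\alpha(1+O(\eps)))$-bicriteria guarantee, strictly stronger than the stated theorem. The one place you should tighten is Step~2: ``$O(\log n)$ MWU iterations'' is not literally what a width-independent covering-LP solver gives; you should name a specific solver (e.g., a parallel positive-LP algorithm) whose iteration count is $\mathrm{polylog}(n)$ for constant accuracy and whose per-iteration work is a matrix--vector product with the constraint matrix or its transpose, both of which your scale-decomposition handles in $O(n^\omega\log\Delta/\eps)$ time.
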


\subsection{Faster Sparse Navigability via Membership Set Cover}

Definition~\ref{def:sc-instance-nav} and Lemma~\ref{lem:nav-sc-red} reduce the construction of a sparse $\alpha$-navigable graph on an $n$-point metric to solving $n$ instances of $\setcover$, where each potential edge $(s,u)$ corresponds to a set $Z_\alpha(s,u)$ of navigability constraints it would satisfy. In the standard formulation of $\setcover$ (Definition~\ref{def:set-cover}), each set $S \in \calF$ is represented explicitly as a list of elements from the universe $\calU$. Under this representation, the runtime of the reduction is $O(n^3)$, since there are $\binom{n}{2}$ such sets to construct, each taking up to $O(n)$ time to build. However, each set $Z_{\alpha}(s, u)$ is implicitly defined by the distances in $\sfd$, so we design algorithms for \setcover which implicitly work with $Z_{\alpha}(s, u)$ via membership queries (defined below).

\begin{definition}[\setcover in the Membership-Query Model]
    An algorithm for \setcover in the \textit{membership-query} model accesses the input $(\calU,\calF)$ only through (constant-time) queries of the form \[\contains(S, x) := \begin{cases}
        1 & \text{if } x\in S \\
        0 & \text{if } x\notin S
    \end{cases}\] for $x\in \calU$ and $S\in\calF$.
\end{definition}

For $n=|\calU|$ and $m=|\calF|$, the following lemma gives an algorithm computing an $O(\ln n)$-approximation to the minimum set cover in the membership-query model that runs in time $\Ot(mk + nk)$, where $k$ is the size of the minimum cover. Note that for $k \ll n$, this runtime is much faster than $O(mn)$, the number of membership queries needed to explicitly construct all sets in $\calF$.

\begin{lemma}\label{lem:fast-greedy-cover}
There is a randomized algorithm in the membership-query model (Algorithm~\ref{alg:fast-set-cover}) that, given an instance $(\calU,\calF)$ of \setcover with $|\calU| = n$, $|\calF| = m$, and minimum cover size $k$, runs in time $\Ot(mk + nk)$ and computes an $O(\ln n)$-approximate set cover with probability at least $1-O(1/n^2)$.
\end{lemma}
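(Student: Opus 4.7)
The plan is to accelerate the greedy set-cover algorithm of Lemma~\ref{lem:greedy-set-cover} in the membership-query model by approximating each greedy step via a random sample of the alive universe, and amortizing the sampling cost across many rounds. Greedy maintains an alive universe $\Ualive$ and at each round picks a set $S^{\ast}$ covering at least $|\Ualive|/k$ alive elements; this yields a cover of size $O(k\log n)$. To find such a set without reading each $S$ explicitly, I would proceed in phases $p=1,\ldots,O(\log n)$, where phase $p$ ends once $|\Ualive|$ has halved. At the start of the phase, uniformly sample $T=\Theta(k\log(mn))$ elements to form $\calS_p\subseteq\Ualive$, and for every $S\in\calF$ spend $T$ membership queries to precompute $f_S:=S\cap\calS_p$ together with an inverted index on these sets. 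Inside the phase, each round picks $S^{\ast}=\arg\max_{S}|f_S\cap\calS_p^{\mathrm{alive}}|$ (requiring no further queries), adds it to the cover, and prunes $\Ualive$ using $|\Ualive|$ membership queries $\contains(S^{\ast},\cdot)$; covered sample elements are removed from $\calS_p^{\mathrm{alive}}$ via the inverted index in $O(1)$ amortized time per removal.

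For correctness, at every round some set has true density $\geq 1/k$ in $\Ualive$ (by pigeonhole, since the optimal cover has $k$ sets), and a multiplicative Chernoff bound applied to the uniform sample of size $T=\Theta(k\log(mn))$ shows, with probability $1-1/\mathrm{poly}(n)$, that the sampled $\arg\max$ has true density $\geq 1/(2k)$. Hence $|\Ualive|$ shrinks by a $(1-\Omega(1/k))$ factor each round, the algorithm terminates in $O(k\log n)$ rounds, and the returned cover is $O(\log n)$-approximate. For the runtime, the per-phase precomputation costs $O(mT)=\Ot(mk)$ membership queries---and $\Ot(mk)$ over all $O(\log n)$ phases---while pruning costs $|\Ualive_r|$ queries at round $r$ and sums geometrically to $O(nk)$. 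Together these give $\Ot(mk+nk)$ queries and, with the inverted-index bookkeeping, the same running time. If $k$ is not supplied, a geometric doubling $k=1,2,4,\ldots$ with a final verification pass adds only a constant-factor overhead.

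The main obstacle is the \emph{adaptive} dependence between the sample $\calS_p$ and the sub-universes $\Ualive_r$ reached within a phase: since the picks $S^{\ast}_1,\ldots,S^{\ast}_{r-1}$ are sample-dependent, so is $\Ualive_r$, and a naive union bound over all possible $\Ualive_r$ blows up exponentially in $k$. I would resolve this by re-sampling $\calS_p$ from the current $\Ualive$ whenever $\calS_p^{\mathrm{alive}}$ drops below $T/2$---an event that an averaging argument confines to $O(\log n)$ occurrences over the run of the algorithm, preserving the $\Ot(mk)$ query budget---so that the adaptive analysis inside each ``chunk'' between re-samplings need only handle $O(k)$ rounds and $m$ candidate sets, for which $T=\Theta(k\log(mn))$ comfortably suffices via a union bound over the subset-density comparisons actually issued.
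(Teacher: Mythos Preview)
Your high-level plan differs from the paper's and is plausible, but the adaptivity fix you propose does not work as written. You claim that within each chunk a ``union bound over the subset-density comparisons actually issued'' (at most $O(mk)$ of them) suffices to conclude that the sampled $\arg\max$ has true density at least $1/(2k)$ in the current $\Ualive_r$. But each such comparison is a statement about $|S\cap\calS_p\cap\Ualive_r|$, and $\Ualive_r$ is itself determined by $\calS_p$ through the earlier picks $S^\ast_1,\dots,S^\ast_{r-1}$; you cannot apply a Chernoff bound while treating $\Ualive_r$ as fixed, and union-bounding only over the $O(mk)$ comparisons that \emph{actually occur} is circular because which comparisons occur is sample-dependent. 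So the per-round guarantee, and with it the claimed $(1-\Omega(1/k))$ shrinkage of $|\Ualive|$ per round, is unjustified.

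Your scheme can be rescued, but by a coarser argument at the chunk level rather than per round. The set $C$ covered during a chunk is always a union of at most $k$ sets from $\calF$: the sample-alive count drops by a $(1-1/k)$ factor each round \emph{deterministically} (pigeonhole on the optimal cover applied to $\calS_p^{\mathrm{alive}}$), so a chunk ends within $k$ rounds. A union bound over the at most $m^{k}$ such unions, each with Chernoff failure $e^{-\Omega(T)}$, does go through with $T=\Theta(k\log(mn))$ and yields $|C|\ge |\Ualive|/4$ whenever $|C\cap\calS_p|\ge T/2$; this suffices for the $O(k\log n)$ round count and the $\Ot(mk+nk)$ runtime. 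The paper sidesteps adaptivity by a completely different mechanism: $\FindHeavySet$ draws a \emph{fresh} element sample on every call and, to keep each call at $\Ot(m)$ rather than $\Ot(mk)$, it also subsamples \emph{sets}. At scale $i$ it draws $R_i\approx m/2^i$ random sets and $T_i\approx 2^i$ random elements so that $R_iT_i=\Ot(m)$ regardless of $i$, and Claim~\ref{claim:harmonic-balancing} (harmonic balancing) guarantees that at some scale there are enough heavy sets to be hit by the set sample and heavy enough to be detected by the element sample.
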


The proof of the lemma is deferred to Subsection~\ref{subsec:sc-lem}. We now prove Theorem~\ref{thm:fast-log-approx}, assuming Lemma~\ref{lem:fast-greedy-cover}.

\begin{proof}[Proof of Theorem~\ref{thm:fast-log-approx}]
    Let $(P,\sfd)$ be an $n$-point metric and $\alpha\geq 1$, and let $\calA$ be the membership-query algorithm for $\setcover$ from Lemma~\ref{lem:fast-greedy-cover}. For each $s \in P$, define the set system $\calF_s := \{ Z_{\alpha}(s,u) \mid u \in P \setminus \{s\} \}$, as in Definition~\ref{def:sc-instance-nav}. By Claim~\ref{claim:nav-sc-red}, a graph is $\alpha$-navigable if and only if for each $s \in P$, the out-neighbors $N^{\text{out}}(s)$ form a set cover for $(P\setminus\{s\}, \calF_s)$.

    The algorithm for building an $\alpha$-navigable graph proceeds as follows: run $\calA$ on each instance $(P\setminus\{s\}, \calF_s)$ using only membership queries, and construct the graph corresponding to the returned solutions to these instances. For each $u,t \in P\setminus\{s\}$, the function $\contains(Z_{\alpha}(s,u), t)$ tests whether $\sfd(u,t) < \sfd(s,t)/\alpha$, which takes constant time. Each instance succeeds with probability $1-O(1/n^2)$, so all $n$ instances succeed with probability $1-o(1)$, which we will assume for the remainder of the proof. We now prove the two approximability guarantees.

    For each $s\in P$, let $k_s$ denote the minimum out-degree of $s$ in any $\alpha$-navigable graph on $P$. It is clear that $\sum_s k_s \leq \OPT_{\textup{size}}$. By Claim~\ref{claim:nav-sc-red}, $k_s$ is also the size of the minimum set cover for the instance $(P\setminus\{s\},\calF_s)$. By Lemma~\ref{lem:fast-greedy-cover}, the algorithm $\calA$ runs in $\Ot(n k_s)$ time for each $s$, returning an $O(\ln n)$-approximate set cover of size $O(k_s \ln n)$. Thus, in time \[\sum_{s\in P} \Ot(nk_s) = \Ot(n\cdot \OPT_{\textup{size}}),\] we produce an $\alpha$-navigable graph with maximum degree \[\max_{s\in P} O(k_s\ln n) = O(\ln n)\cdot \max_{s\in P} k_s\] and size \[\sum_{s\in P} O(k_s\ln n) = O(\ln n)\cdot \OPT_{\textup{size}}.\] Therefore, the output of the algorithm is an $O(\ln n)$-approximation to the sparsest $\alpha$-navigable graph on $P$, under both the $\maxdeg$ and $\avgdeg$ objectives.
\end{proof}

\subsubsection{Proof of Lemma~\ref{lem:fast-greedy-cover}}\label{subsec:sc-lem}

\paragraph{Algorithm Overview.} Let $(\calU,\calF)$ be an instance of $\setcover$ with $n=|\calU|$ and $m=|\calF|$, and let $k$ denote the size of the minimum set cover. The classic greedy algorithm (Lemma~\ref{lem:greedy-set-cover}) proceeds in rounds, and selects at each round a set in $\calF$ that covers at least a $1/k$-fraction of the uncovered elements (such a set is guaranteed to exist because there is always cover of size at most $k$). After $r$ rounds, there are at most $(1-1/k)^r n$ uncovered elements, guaranteeing a cover once $r > k\ln n$. Our aim is to simulate the above greedy approach via random sampling as efficiently as possible. 

We design a subroutine $\FindHeavySet$ (Algorithm~\ref{alg:find-heavy-set}) that simulates a round of the greedy set cover algorithm. The subroutine takes as input a parameter $\hat{k}$, which serves as a guess for the size of the optimal cover $k$, and uses it to determine the number of randomly drawn sets and elements needed. Specifically, it samples random collections of sets from $\calF$ and estimates their sizes by performing membership queries against a random sample of elements from $\Ualive$, the set of currently uncovered elements. The sampling parameters are chosen so that $\FindHeavySet$ makes only $\Ot(m)$ membership queries and, if $\hat{k} \geq k$, then with high probability it returns a set covering an $\Omega(1/k)$-fraction of the remaining uncovered elements, which is sufficient for the desired $O(\ln n)$-approximation.

The main algorithm, $\FastSetCover$ (Algorithm~\ref{alg:fast-set-cover}), maintains the set $\Ualive$ of uncovered elements, which is initialized to $\calU$. In each iteration, it attempts to find a set covering a sufficiently large fraction of $\Ualive$ by calling $\FindHeavySet$ with a current guess $\hat{k}$ for the optimal cover size $k$. If a set $\bS\in\calF$ is found, it updates $\Ualive$ to $\Ualive\setminus\bS$. If no set is found, $\hat{k}$ is doubled and the process repeats. Once $\Ualive$ is empty, the algorithm returns the selected sets.

\begin{algorithm}[H]\label{alg:fast-set-cover}
    \caption{$\FastSetCover(\calU,\calF)$}
    \KwIn{a \setcover instance $(\calU,\calF)$ in the membership-query model}
    \KwOut{a set cover $\calC$ of $(\calU,\calF)$}
    \BlankLine
    Initialize $\calC = \{\}$ and $\Ualive = \calU$\;
    Initialize $\hat{k} = 1$\;
    \While{$\calU\neq\emptyset$}{
        $\bS\gets\FindHeavySet(\Ualive, \calF, \hat{k}, |\calU|)$\;
        \If{$\bS = \Fail$}{
            Update $\hat{k} \gets 2\cdot \hat{k}$\;
            \Continue
        }
        Update $\calC \gets \calC\cup\{\bS\}$\;
        Update $\Ualive \gets \Ualive\setminus \bS$\;
    }
    \Return $\calC$\;
        
\end{algorithm}

\begin{algorithm}[H]\label{alg:find-heavy-set}
    \caption{$\FindHeavySet(\Ualive,\calF, \hat{k}, n)$}
    \KwIn{set of uncovered elements $\Ualive\subseteq\calU$, collection of subsets $\calF\subset 2^{\calU}$, parameter $\hat{k}$ }
    \KwOut{a set $\bS\in\calF$ with large intersection $|\bS\cap \Ualive|$}
    \BlankLine

    $m = |\calF|$\;
    \For{$i=1$ to $\ceil{\log_2 \hat{k}}$}{
        Sample $R = \ceil{m\log n \cdot 2^{-(i-2)}}$ sets $\bS_1,\ldots,\bS_R\simiid \Unif(\calF)$\;
        Sample $T = 48\log (mn) \cdot \min\cbr{2^{i+3}\log n, 2\hat{k}}$ elements $\bx_1,\ldots,\bx_T\simiid\Unif(\Ualive)$\;
        \For{$r=1$ to $R$}{
            \If{$\#\{t\mid \contains(\bS_r,\bx_t) = 1\}\geq 24\log(mn)$}{
                \Return $\bS_r$\;
            }
        }
    }

    \Return $\Fail$\;
\end{algorithm}

We start by analyzing the performance of $\FindHeavySet$. We begin with an elementary fact about sequences with bounded sum, which will help us reason about the distribution of set sizes in an optimal cover.

\begin{restatable}{claim}{harmonicbalancing}
\label{claim:harmonic-balancing}
    Let $\alpha_1\geq \alpha_2\geq\cdots\geq\alpha_L\geq 0$ be non-negative numbers with $\sum_{j=1}^{L}\alpha_j = A$.
    Then, there exists an index $\ell\in [L]$ such that $\ell\cdot \alpha_{\ell} \geq A / H_L$, where $H_L$ denotes the $L$-th harmonic number.

\end{restatable}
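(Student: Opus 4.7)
The statement has a natural proof by contradiction that directly leverages the definition $H_L = \sum_{\ell=1}^{L} \frac{1}{\ell}$. I will assume toward contradiction that $\ell \cdot \alpha_\ell < A/H_L$ holds \emph{simultaneously} for every $\ell \in [L]$, which is equivalent to $\alpha_\ell < \frac{A}{\ell \cdot H_L}$. The plan is to then sum this inequality over all $\ell$ and derive $\sum_\ell \alpha_\ell < A$, contradicting the hypothesis $\sum_{j=1}^{L} \alpha_j = A$.

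Concretely, first I would observe that the monotonicity assumption $\alpha_1 \geq \alpha_2 \geq \cdots \geq \alpha_L \geq 0$ is not strictly needed for this particular one-line argument, but it is natural in the intended application (bounding the size of the $\ell$-th largest set in an optimal cover), so I will state the proof without invoking monotonicity and simply remark that it holds for arbitrary nonnegative $\alpha_j$'s. Second, I would write the summation step explicitly: under the contradiction hypothesis,
\[
A \;=\; \sum_{\ell=1}^{L} \alpha_\ell \;<\; \sum_{\ell=1}^{L} \frac{A}{\ell \cdot H_L} \;=\; \frac{A}{H_L} \sum_{\ell=1}^{L} \frac{1}{\ell} \;=\; A,
\]
which is the desired contradiction. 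Hence there must exist some $\ell \in [L]$ with $\ell \cdot \alpha_\ell \geq A/H_L$.

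\textbf{Main obstacle.} There is essentially no obstacle here; the argument is a standard averaging/pigeonhole over the harmonic weights, and the choice of threshold $A/H_L$ is precisely calibrated so that the sum $\sum_\ell \frac{1}{\ell}$ telescopes against $H_L$. The only thing worth flagging in the write-up is that the claim is tight: taking $\alpha_\ell = \frac{A}{\ell \cdot H_L}$ (normalized so the sum equals $A$) achieves equality for every $\ell$ simultaneously, which explains why the factor of $H_L = \Theta(\log L)$ in the bound cannot be removed and why this lemma produces the $\Theta(\log n)$ approximation factor expected in the downstream analysis of $\FindHeavySet$.
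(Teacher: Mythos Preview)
Your proof is correct and is essentially the same averaging argument as the paper's: the paper phrases it directly by letting $\ell$ be the maximizer of $\ell\cdot\alpha_\ell$ and summing the inequalities $\alpha_j \le (\ell\alpha_\ell)/j$, which is the contrapositive of your contradiction step. Your observation that the monotonicity hypothesis is not actually used is also borne out by the paper's proof.
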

\begin{proof}
    Deferred to Appendix~\ref{appendix:missing-proofs-ub}.
\end{proof}

Next, we quantify the number of uniform samples needed to estimate the size of a given set.

\begin{restatable}{claim}{sizetesting}
\label{claim:size-testing}
Let $S \subset [N]$ be fixed, and let $\bX := \{x_1, \dots, x_T\}$ where $x_i \simiid \Unif([N])$. Then for all $\alpha \in [0,1]$, the following bounds hold with probability at least $1 - \exp(-\alpha T/12)$:
\begin{enumerate}[label=(\roman*)]
    \item If $|S| \geq \alpha N$, then $|S \cap \bX| \geq \frac{\alpha T}{2}$.
    \item If $|S| \leq \frac{\alpha N}{4}$, then $|S \cap \bX| < \frac{\alpha T}{2}$.
\end{enumerate}
\end{restatable}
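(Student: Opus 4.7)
The plan is to recognize $|S \cap \mathbf{X}|$ as a sum of independent Bernoulli indicators and invoke multiplicative Chernoff bounds---lower-tail for part (i), upper-tail for part (ii). For each $t \in [T]$, let $Y_t := \mathbf{1}[\mathbf{x}_t \in S]$; since the $\mathbf{x}_t$ are i.i.d.\ uniform on $[N]$, the $Y_t$ are i.i.d.\ $\mathrm{Bernoulli}(p)$ with $p = |S|/N$. Their sum $Y := \sum_{t=1}^T Y_t = |S \cap \mathbf{X}|$ (counted with multiplicity, matching the usage $\#\{t : \contains(\bS_r,\bx_t)=1\}$ inside Algorithm~\ref{alg:find-heavy-set}) is therefore $\mathrm{Bin}(T,p)$ with mean $\mu = Tp$.

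For part (i), the hypothesis $|S| \geq \alpha N$ yields $\mu \geq \alpha T$, so $\{Y < \alpha T/2\} \subseteq \{Y \leq \mu/2\}$. The standard lower-tail Chernoff bound $\Pr[Y \leq (1-\delta)\mu] \leq \exp(-\delta^2 \mu / 2)$ applied with $\delta = 1/2$ gives an upper bound of $\exp(-\mu/8) \leq \exp(-\alpha T/8)$, which is comfortably within the claimed $\exp(-\alpha T/12)$.

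For part (ii), the hypothesis $|S| \leq \alpha N / 4$ yields $\mu \leq \alpha T/4$, so writing $\alpha T/2 = (1+\delta)\mu$ forces $\delta \geq 1$. In this regime I would use the large-deviation form $\Pr[Y \geq (1+\delta)\mu] \leq \exp(-\delta \mu / 3)$ (valid for $\delta \geq 1$), which rearranges to $\Pr[Y \geq \alpha T/2] \leq \exp\bigl(-(\alpha T/2 - \mu)/3\bigr) \leq \exp(-\alpha T/12)$, where the last step uses $\alpha T/2 - \mu \geq \alpha T/4$.

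There is no real obstacle beyond bookkeeping: the calculation is a textbook two-sided Chernoff argument, and the only care point is verifying that the same exponent $-\alpha T/12$ emerges from both tails. The binding case is (ii) with $\mu$ saturating $\alpha T/4$, which pins down the constant, after which (i) holds a fortiori. No complication arises from sampling with replacement because the $\mathbf{x}_t$ are drawn independently, so the $Y_t$ are genuinely i.i.d.\ and the Chernoff bounds apply unchanged.
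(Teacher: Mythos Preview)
Your proposal is correct and essentially identical to the paper's own proof: both arguments express $|S\cap\bX|$ as a sum of i.i.d.\ Bernoulli indicators and apply a multiplicative lower-tail Chernoff bound for (i) (yielding $\exp(-\alpha T/8)$) and an upper-tail bound for (ii) at the extremal value $\mu=\alpha T/4$ (yielding $\exp(-\alpha T/12)$). The only cosmetic difference is that the paper reduces (ii) to the boundary case $|S|=\alpha N/4$ by monotonicity, whereas you carry the general $\mu\le\alpha T/4$ through the Chernoff exponent directly; both routes arrive at the same constant.
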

\begin{proof}
    Deferred to Appendix~\ref{appendix:missing-proofs-ub}.
\end{proof}

Using Claims~\ref{claim:harmonic-balancing} and~\ref{claim:size-testing}, we prove a desired guarantee on the performance of $\FindHeavySet$.

\begin{lemma}\label{lem:heavy-set-properties}
    Let $(\calU,\calF)$ be an instance of $\setcover$ with $|\calU| = n$, $|\calF| = m$, and with minimum cover size $k$. For $\Ualive\subseteq\calU$, The algorithm $\FindHeavySet(\Ualive,\calF, \hat{k}, n)$ runs in time $\Ot(m)$\footnote{Omitting $\log$-factors in both $m$ and $n$} and, with probability at least $1-O(1/n^3)$, has the following properties.
    \begin{itemize}
        \item If $\hat{k}\geq k$, then $\FindHeavySet$ does not return $\Fail$
        \item If $\FindHeavySet$ does not return $\Fail$, then it returns a set $\bS\in\calF$ satisfying \[|\bS\cap \Ualive| \geq |\Ualive|/(8\hat{k}).\] 
    \end{itemize}
\end{lemma}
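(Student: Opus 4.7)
The plan is to verify three properties of $\FindHeavySet$ separately: the $\Ot(m)$ runtime, the output size guarantee conditional on not returning $\Fail$, and the non-$\Fail$ guarantee when $\hat k \geq k$. All three rely on Claim~\ref{claim:size-testing} applied with parameter $\alpha_i := 1/\min\{2^{i+3}\log n,\, 2\hat k\}$ for iteration $i$, chosen so that the algorithm's acceptance threshold of $24\log(mn)$ hits coincides exactly with $\alpha_i T/2$.

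For the runtime, each iteration performs $R \cdot T$ membership queries, which simplifies to $R\cdot T = O\big(m\log(mn)\log n \cdot \min\{\log n,\,\hat k/2^i\}\big)$. When the $\min$ equals $\log n$ (case A, $2^{i+3}\log n \leq 2\hat k$) the per-iteration cost is $O(m\log^2 n\log(mn))$, and summing over $O(\log\hat k)$ iterations gives $\polylog$ overhead; when the $\min$ equals $\hat k/2^i$ (case B), the costs form a geometric series in $2^{-i}$ summing to $O(\log n)$, so the total is again $\Ot(m)$. For the output guarantee, suppose $\bS$ is returned in iteration $i$. If $|\bS\cap\Ualive| < \alpha_i |\Ualive|/4$, then by Claim~\ref{claim:size-testing}(ii) the algorithm would reject $\bS$ with probability at least $1 - \exp(-\alpha_i T/12) = 1-(mn)^{-4}$, and a union bound over the $\Ot(m)$ sampled sets rules out this bad event. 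Substituting $\alpha_i$, a direct computation gives $\alpha_i|\Ualive|/4 \geq |\Ualive|/(8\hat k)$ in both cases of the $\min$, using the defining inequality $2^{i+2}\log n\leq \hat k$ in case A and directly in case B.

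The heart of the proof is the non-$\Fail$ argument when $\hat k \geq k$. Let $\alpha_1 \geq \cdots \geq \alpha_k$ denote the sorted intersections of the optimal cover with $\Ualive$, summing to $A \geq |\Ualive|$. By Claim~\ref{claim:harmonic-balancing} there is $\ell^* \in [k]$ with $\ell^*\alpha_{\ell^*} \geq A/H_k \geq |\Ualive|/H_k$. I set $i^* := \lceil \log_2 \ell^*\rceil \leq \lceil \log_2 \hat k\rceil$, so iteration $i^*$ is indeed reached, and verify: (i) sampling $R_{i^*}$ uniform sets hits one of the top $\ell^*$ sets of $\calF$ with probability at least $1 - \exp(-R_{i^*}\ell^*/m) \geq 1-1/n^2$, via $R_{i^*}\ell^*/m \geq 2\log n$; and (ii) such a top-$\ell^*$ set, with intersection size at least $\alpha_{\ell^*} \geq |\Ualive|/(\ell^* H_k)$, passes the detection threshold $\alpha_{i^*}|\Ualive|$ with probability at least $1-(mn)^{-4}$ via Claim~\ref{claim:size-testing}(i). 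For (ii) in case A, the chain $\alpha_{i^*}|\Ualive| = |\Ualive|/(2^{i^*+3}\log n) \leq |\Ualive|/(8\ell^*\log n) \leq |\Ualive|/(\ell^* H_k)$ (using $H_k \leq 8\log n$) closes the argument. A union bound over all iterations and sampled sets yields overall failure probability $O(1/n^3)$.

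I anticipate that the main obstacle will be the case B part of sub-claim (ii), where $\alpha_{i^*}|\Ualive| = |\Ualive|/(2\hat k)$ and the naive bound $\alpha_{\ell^*} \geq |\Ualive|/(\ell^* H_k)$ need not exceed this threshold whenever $\ell^* H_k > 2\hat k$. I would resolve this either by strengthening the choice of $\ell^*$ from the set of indices satisfying Claim~\ref{claim:harmonic-balancing} (leveraging the observation that the balanced cover is the extremal case for the smallest valid $\ell^*$), or by case-splitting on whether the number of cover sets of size at least $|\Ualive|/(2\hat k)$ is already sufficient to trigger success at some case B iteration with a smaller value of $2^{i-1}$.
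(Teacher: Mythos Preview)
Your overall plan matches the paper's: same three sub-claims, same use of Claim~\ref{claim:size-testing} with $\alpha_i T/2$ equal to the acceptance threshold, and same appeal to Claim~\ref{claim:harmonic-balancing} to locate a level $i^*$ at which a good set is both sampled and detected. The runtime and output-size arguments are essentially the paper's, so those parts are fine.

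The gap you flag in case B of the non-$\Fail$ argument is genuine, and neither of your proposed fixes is quite the right move. The paper resolves it not by case-splitting on the iteration or by strengthening the choice of $\ell^*$ after the fact, but by \emph{truncating before applying Claim~\ref{claim:harmonic-balancing}}. Concretely: with $|\wt S_1|\ge\cdots\ge|\wt S_k|$ summing to at least $N$, let $L$ be the largest index with $|\wt S_L|\ge N/(2k)$. The discarded tail sums to at most $(k-L)\cdot N/(2k)<N/2$, so $|\wt S_1|+\cdots+|\wt S_L|\ge N/2$. Now apply Claim~\ref{claim:harmonic-balancing} to the first $L$ terms to get $\ell\le L$ with $\ell\,|\wt S_\ell|\ge N/(2H_L)>N/(4\log n)$. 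Because $\ell\le L$, you automatically also have $|\wt S_\ell|\ge N/(2k)\ge N/(2\hat k)$, so the single index $\ell$ simultaneously certifies the case A bound $|\wt S_\ell|\ge N/(4\ell\log n)$ and the case B bound $|\wt S_\ell|\ge N/(2\hat k)$. With this two-sided lower bound in hand, the detection step (Claim~\ref{claim:size-testing}(i)) goes through at iteration $i^*$ regardless of which branch of the $\min$ defines $T$, and the rest of your argument (sampling one of the top $\ell$ sets, union bounds) closes exactly as you wrote it.

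Your option~2 is pointed in the right direction but is more complicated than needed: the paper never runs a separate case analysis on the number of large cover sets---it just feeds the large ones into the harmonic-balancing claim and reads off both bounds at once.
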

\begin{proof}
    We first bound the runtime of $\FindHeavySet(\Ualive, \calF, \hat{k}, n)$. The outer loop of $\FindHeavySet$ runs over at most $\log_2 k \leq O(\log n)$ iterations. In each iteration, we make $R+T=O(RT)$ random samples, each of which take $O(\log n)$ time by Claim~\ref{claim:fast-uncovered-sampling}; and we make $RT$ queries to $\contains(\cdot,\cdot)$. Since $RT = O(m\log^2 n \log(mn))$, the total runtime is $O(RT\log^2 n) = O(m\log^3 n \log(mn)) = \Ot(m)$.

    We now show the desired properties. For each set $S\in\calF$, denote $\wt{S} := S\cap \Ualive$, and write $N = |\Ualive|$. By assumption, there is a set cover for $(\Ualive, \calF)$ of size $k$, meaning there are $k$ sets $S_1,\ldots,S_k\in \calF$ with $|\wt{S}_1| \geq \dots \geq |\wt{S}_k|$ such that \[|\wt{S}_1| + \dots + |\wt{S}_k| \geq N.\] 
    
    Let $L$ be the largest index in $[k]$ for which $|\wt{S}_L| \geq N/(2k)$. Then, we have \[|\wt{S}_1| + \dots + |\wt{S}_L|\geq N/2.\] By Claim~\ref{claim:harmonic-balancing}, there is some $\ell\in [L]$ for which $\ell\cdot |\wt{S}_\ell| \geq N / (2H_L) > N / (4\log n)$, where we have used that $2H_L\leq 2(\log L + 1) < 4\log n$. This implies the following key property: 
    \begin{align}
        \text{For some $\ell\leq k$, there are $\ell$ sets $S\in\calF$ with $|\wt{S}| \geq N\cdot \max\cbr{\frac{1}{4\ell\log n}, \frac{1}{2k}}$} \tag{$\ast$} .\label{eq:heavy-level}
    \end{align}
    
    We now prove the first bullet point. When $k\leq \hat{k}$, there is an iteration $i\leq \ceil{\log_2 \hat{k}}$ of the outer loop of $\FindHeavySet$ for which $\ell\in [2^i, 2^{i+1}]$. We argue that with high probability, if the algorithm reaches iteration $i$, it will terminate during this iteration and avoid returning $\Fail$. By~(\ref{eq:heavy-level}), there are at least $2^i$ sets $S\in\calF$ with $|\wt{S}| \geq \max\cbr{\frac{N}{2^{i+3}\log n}, \frac{N}{2k}}$. By the choice of $R \geq \frac{3m\log n}{2^i}$, the probability that at least one of these sets $S$ are sampled in $\cbr{\bS_1,\ldots,\bS_R}$ is \[1-(1-2^i / m)^{R} \leq 1-\exp(-2^i R / m) \geq 1-1/n^3.\] We condition on this event that $\bS_r = S$ for some $r\in[R]$. Suppose we have not terminated when the inner loop is at iteration $r$, and let $\bX = \cbr{\bx_1,\ldots,\bx_T}\subset \Ualive$ denote the random sample during this iteration. We observe that $|\wt{S}|\geq \alpha N$ for \[\alpha = \max\cbr{\frac{1}{2^{i+3}\log n}, \frac{1}{2k}} = \frac{48\log (mn)}{T}.\] By the first statement of Claim~\ref{claim:size-testing} we conclude that, with probability $1-\exp(-\frac{\alpha T}{12}) = 1-1/(mn)^4$, we have $|S\cap\bX| = |\wt{S}\cap\bX|\geq 18\log n$, and we will return $S$. In total, then, $\FindHeavySet$ avoids returning $\Fail$ with probability at least $1-1/(mn)^4-1/n^3\geq 1-2/n^3$.
    
    We will now show the second bullet point. Suppose that $\FindHeavySet$ does not return $\Fail$, and instead the outer loop of $\FindHeavySet$ terminates at some iteration $i\leq\log_2 \hat{k}$. Recall that we only return a set $S\in\calF$ if $|S\cap \bX|=|\wt{S}\cap \bX|\geq 18\log n = \frac{\alpha T}{2}$, which holds for the setting of $\alpha = \frac{36\log(mn)}{T} \geq \frac{1}{2\hat{k}}$. If $|\wt{S}| < \alpha N / 4$, then by the second statement of Claim~\ref{claim:size-testing} implies that $|\wt{S}\cap\bX| < \frac{\alpha T}{2}$ with probability $1-\exp(-\alpha T /12) = 1-1/(mn)^4$. Union bounding over the $R\log_2 \hat{k} \ll mn$ sets sampled throughout the execution of the algorithm, we conclude that, with probability $1-1/(mn)^3$, the set $\bS$ returned by $\FindHeavySet$ satisfies \[|\wt{\bS}| \geq \frac{\alpha N}{4}  \geq \frac{N}{8\hat{k}}\]
    
    Overall, we conclude that the properties in both bullet points hold with probability $1-2/n^3 - 1/(mn)^3 = 1 - O(1/n^3)$, as desired.
\end{proof}

Finally, we show how in the execution of $\FastSetCover$ we may efficiently maintain the set of uncovered elements $\Ualive$.

\begin{restatable}{claim}{fastuncoveredsampling}\label{claim:fast-uncovered-sampling}
Let $\calU$ be a set of size $n$. There is a data structure that can be built in time $O(n)$ which maintains a subset $\Ualive \subseteq \calU$ under deletions, and supports:
\begin{itemize}
    \item checking whether $\Ualive=\emptyset$ in $O(1)$ time,
    \item deletion of any element from $\Ualive$ in $O(\log n)$ time, and
    \item uniform random sampling from $\Ualive$ in $O(\log n)$ time
\end{itemize}
\end{restatable}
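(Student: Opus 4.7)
The plan is to implement the data structure as a dynamic array together with a lookup table, using the standard ``swap-to-end'' technique for deletions. I would maintain an array $A[1 \ldots n]$ containing the elements of $\calU$ in an arbitrary order, a counter $N$ tracking the current size of $\Ualive$ (initialized to $n$), and a lookup table $\pi \colon \calU \to [n]$ mapping each element to its current position in $A$. The key invariant is
\[
\Ualive \;=\; \{A[1], A[2], \ldots, A[N]\}.
\]
Implementing $\pi$ as a balanced binary search tree keyed on $\calU$, built bottom-up from a sorted enumeration of $\calU$, the full initialization runs in $O(n)$ time.

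Given this setup, each operation is straightforward. Checking whether $\Ualive = \emptyset$ reduces to testing $N = 0$, which is $O(1)$. To sample uniformly from $\Ualive$, I would draw a uniform index $r \in [N]$ (which requires $O(\log n)$ random bits) and return $A[r]$; the invariant guarantees uniformity. To delete an element $x \in \Ualive$, I would look up $i = \pi(x)$ in the BST, copy $A[N]$ into slot $i$, update $\pi(A[N]) \leftarrow i$, remove $x$ from $\pi$, and decrement $N$. Each BST operation costs $O(\log n)$, so deletion costs $O(\log n)$.

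The only non-trivial aspect is maintaining bidirectional consistency between $A$ and $\pi$ after a swap: when the element in position $N$ moves to position $i$, its entry in $\pi$ must be rewritten, and the entry for the deleted element must be removed. This is routine bookkeeping, and I do not anticipate a conceptual obstacle. If the elements of $\calU$ happen to be drawn from $[\mathrm{poly}(n)]$, then $\pi$ can be realized as a plain array and each operation becomes $O(1)$; I retain the stated $O(\log n)$ bound to cover the general case, which is already sufficient for the applications in Section~\ref{sec:upper-bounds}.
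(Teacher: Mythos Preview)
Your approach is correct and takes a genuinely different route from the paper's. The paper builds a complete binary tree whose leaves are the elements of $\calU$, storing at each internal node the count of alive leaves in its subtree; emptiness is read off the root, deletion decrements counts along a root-to-leaf path, and sampling performs a top-down walk choosing each child with probability proportional to its count. Your array-plus-lookup construction with swap-to-end deletion is the more elementary and arguably more natural solution here: it gives $O(1)$ sampling (modulo random-bit generation) and pushes the $O(\log n)$ cost entirely into the position lookup, which, as you note, drops to $O(1)$ whenever $\calU$ can be identified with $[n]$ --- exactly the situation in the paper's application. The paper's counter-tree buys easier generalization to weighted sampling or rank queries, but neither is needed for this claim, so your approach is at least as good for the stated purpose.
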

\begin{proof}
    Deferred to Appendix~\ref{appendix:missing-proofs-ub}.
\end{proof}

We are now ready to prove the main guarantee for $\FastSetCover$ stated in Lemma~\ref{lem:fast-greedy-cover}.

\begin{proof}[Proof of Lemma~\ref{lem:fast-greedy-cover}]
    We condition on the event that every call to $\FindHeavySet$ satisfies the two properties of Lemma~\ref{lem:heavy-set-properties}. The first property implies that in the main loop of $\FastSetCover$, at most $1+\log_2 k$ calls to $\FindHeavySet$ will return $\Fail$, and $\hat{k}\leq 2k$ throughout the execution of the algorithm. The second property implies that whenever $\FindHeavySet$ does not return $\Fail$, it will return a set $S$ with \[|S\cap\Ualive| \geq \frac{|\Ualive|}{8\hat{k}} \geq \frac{|\Ualive|}{16k}.\] Therefore, at every iteration of the main loop, we have \[|\Ualive|\leq n \cdot \del{1-\frac{1}{16k}}^{|\calC|} \leq n\cdot \exp(-|\calC| / 16k)\] and thus upon termination, $|\calC|\leq 1 + 16k\ln n = O(k\ln n)$, proving the desired $O(\ln n)$-approximation guarantee. 
    
    To bound the runtime, we observe that $\FastSetCover$ makes at most $1 + \log_2 k + O(k\log n) = O(k\log n)$ calls to $\FindHeavySet(\Ualive, \calF, \hat{k}, n)$, each with parameter $\hat{k}\leq 2k$. By the runtime guarantee of Lemma~\ref{lem:heavy-set-properties}, the time needed to execute these calls is $\Ot(mk)$. The only other nontrivial computation in $\FastSetCover$ is the maintenance of the data structure for $\Ualive$. By Claim~\ref{claim:fast-uncovered-sampling}, this data structure can be built in time $O(n\log n)$. Moreover, each of the $O(k\log n)$ updates $\Ualive \gets \Ualive \setminus S$ takes time $O(n\log n)$, as $S$ can be computed in $n$ calls to $\contains(S, \cdot)$ and each deletion can be performed in $O(\log n)$ time. Hence, the total time needed to perform all updates to $\Ualive$ is at most $O(nk\log^2 n) = \Ot(nk)$. Therefore, the overall runtime of $\FastSetCover$ is at most $\Ot(mk + nk)$.

    Finally, we verify the desired guarantee on success probability. Each call to $\FindHeavySet$ succeeds with probability $1-1/n^3$, and conditioned on their success, the total number of calls is at most $1+\log_2 k + n = O(n)$. By a union bound, we conclude that $\FastSetCover$ satisfies the desired runtime and approximation guarantees with probability $1-O(1/n^2)$. Moreover, by early termination of $\FastSetCover$, we can ensure that the above bound on the runtime holds deterministically.
\end{proof}

\begin{remark}
    While Lemma~\ref{lem:fast-greedy-cover} yields an $O(\ln n)$-approximation to the minimum set cover, the algorithm can easily be modified to achieve, for any $\eps\in(0,1)$, a $(1+\eps)\ln n$-approximation in time $\widetilde{O}\left(\frac{mk}{\eps^{O(1)}} + nk\right)$.
\end{remark}

\subsection{An $\Ot(n^{\omega} \log \Delta / \eps)$-time Bicriteria Approximation}

The algorithm from Theorem~\ref{thm:fast-log-approx} is efficient when $\OPT_{\textup{size}}$ is small, but its runtime can grow up to $O(n^3)$ in the worst case when $\OPT_{\textup{size}}$ is large. In this subsection, we present a faster algorithm for constructing sparse $\alpha$-navigable graphs that always runs in subcubic time. Specifically, leveraging fast matrix multiplication, we design an $\Ot(n^\omega \log \Delta / \eps)$-time algorithm that returns an $(\alpha, 2\alpha(1+\eps))$-bicriteria $O(\ln n)$-approximation, for any $\eps > 0$.

\paragraph{Algorithm Overview.}Throughout this subsection, we assume $\sfd$ is a metric over $[n]$ with aspect ratio $\Delta$, and without loss of generality, that all distances lie in $[1, \Delta]$. For $\eps>0$, we say that a metric $\sfd$ is \emph{$\eps$-discretized} if all pairwise distances are integer powers of $(1+\eps)$. Our algorithm is motivated by the following observation. Suppose we are given a graph $G = ([n], E)$ and want to verify whether $G$ is $\alpha$-navigable with respect to $\sfd$. A naive approach would check all $\Theta(n^2)$ $\alpha$-navigability constraints separately, which takes $\Theta(n^3)$ time. However, if the metric $\sfd$ is $\eps$-discretized, it turns out that this check can be performed in $O(n^{\omega} \cdot (\log\Delta)/\eps)$ time using fast matrix multiplication.

The key subroutine in our algorithm, $\VerifyNav$ (Algorithm~\ref{alg:verify-nav}), does exactly this. For each of the $ O(\log\Delta/\eps)$ distance scales $(1+\eps)^i$, it creates an $n\times n$ Boolean matrix $B^{(i)}$ indicating pairwise distances that are strictly less than $(1+\eps)^i/\alpha$. It then computes $C^{(i)} = A \cdot B^{(i)}$, where $A$ is the adjacency matrix of the current graph $G$. The matrix $C^{(i)}$ satisfies the following property: for any pair $(s,t)$ such that $\sfd(s,t) = (1+\eps)^i$, the entry $C^{(i)}_{s,t}$ is nonzero if and only if the $\alpha$-navigability constraint from $s$ to $t$ is covered in $G$.

Assume for now that the metric $\sfd$ is $\eps$-discretized; this introduces an additional $(1+\eps)$ factor in the bicriteria guarantee. The main algorithm, $\BuildNav$ (Algorithm~\ref{alg:build-nav}), incrementally constructs an $\alpha$-navigable graph $\bG$ on $\sfd$ by repeatedly invoking $\VerifyNav$ to identify and eliminate uncovered $\alpha$-navigability constraints. For each source node $s \in [n]$, the algorithm maintains: a working set $\calU_s$ of uncovered targets, a sampling budget $\hat{k}_s$, and a counter $c_s$.

In each round, the algorithm \emph{simultaneously} updates the out-neighborhoods of all sources in parallel. For each source node $s$, it samples $\hat{k}_s$ targets uniformly at random from $\calU_s$, adds edges from $s$ to each sampled target, and then calls $\VerifyNav$ to compute all updated uncovered sets $\Unew_s$. If $\Unew_s$ remains nonempty, the counter $c_s$ is incremented. Once $c_s$ reaches $\Omega(\log n)$, the sampling budget $\hat{k}_s$ is doubled and $c_s$ is reset to zero. The key lemma of this section is Lemma~\ref{lemma:coverage-reduction}, which shows that when $\hat{k}_s$ exceeds the minimum degree of $s$ in any $2\alpha$-navigable graph on $\sfd$, the uncovered set $\calU_s$ shrinks by a constant factor with constant probability. Therefore, it takes only $O(\log n)$ rounds for $\calU_s$ to become empty. Since each round takes $O(n^\omega \cdot \log \Delta / \eps)$ time and involves adding at most $\hat{k}_s$ edges per node, we obtain the $O(\ln n)$-approximation guarantee in the desired runtime.

\begin{algorithm}[H]\label{alg:build-nav}
\caption{$\BuildNav(\sfd, n, \alpha,\eps)$}
\KwIn{a metric $\sfd$ on $[n]$, parameters $\alpha\ge1$, $\eps>0$}
\KwOut{an $\alpha$-navigable graph $\bG=([n],\bE)$}
\BlankLine
Round all distances down to $\tilde{\sfd}(s,t) := (1+\eps)^{\floor*{\log_{(1+\eps)}\sfd(s,t)}}$\;
Initialize $\bE = \emptyset$ and set budgets $\hat{k}_s = 1$ and counters $c_s = 0$ for all $s\in[n]$\;
Initialize $\calU_s = \cbr{t \mid t\neq s}$ for all $s$\;
\Repeat{$\bigcup_s \calU_s = \emptyset$}{
    \ForEach{$s\in[n]$ with $\calU_s\neq \emptyset$}{
        Sample $\bu_1,\dots,\bu_{\hat{k}_s}\simiid\Unif(\calU_s)$\;
        Update $\bE\gets \bE\cup \{(s,\bu_i) \mid 1\leq i \leq \hat{k}_s\}$\;
    }
    $\cbr{\Unew_s} \gets \VerifyNav(\tilde{\sfd}, \bG, n, \alpha(1+\eps))$\;
  
    \ForEach{$s\in[n]$ with $\Unew_s\neq \emptyset$}{
    \If{$c_s \geq 120\cdot \log_{16/15} n$}
    {
        Update $\hat{k}_s\gets 2\cdot \hat{k}_s$\;
        Reset $c_s = 0$\;
    }
    \Else{
        Update $c_s\gets c_s + 1$\;
    }
  }
    $\cbr{\calU_s}\gets\cbr{\Unew_s}$\;
}
\Return $\bG$\;
\end{algorithm}

\begin{algorithm}[H]\label{alg:verify-nav}
\caption{$\VerifyNav(\tilde{\sfd}, G, n, \alpha)$}
\KwIn{$\eps$-discretized metric $\tilde{\sfd}$ on $[n]$ with aspect ratio $\Delta$, parameter $\alpha\geq 1$, and graph $G = ([n], E)$}

\KwOut{set of all \(\alpha\)-navigability constraints $(s,t)$ unsatisfied by \(G\)}
\BlankLine
Let $A$ be the adjacency matrix of $G$\;
Initialize $\calU_s=\emptyset$ for all $s\in[n]$\;
\ForEach{$i\in\mathcal [\log_{1+\eps}\Delta]$}{
  Build matrix $B^{(i)}$ with $B^{(i)}_{s,t} = \one\{\tilde{\sfd}(s,t) < (1+\eps)^i / \alpha\}$\;
  Compute $C^{(i)} \leftarrow A\cdot B^{(i)}$\;
}
\ForEach{$s\in[n],\, t\in [n]\setminus \{s\}$} {
    Let $i = \log_{(1+\eps)} \tilde{\sfd}(s,t)$\;
    \If{$C^{(i)}_{s,t} = 0$}{
        Update $\calU_s \gets \calU_s \cup \{(s,t)\}$\;
    }
  }
\Return $\cbr{\calU_s \mid s \in [n]}$\;
\end{algorithm}

We first prove correctness of the verification algorithm $\VerifyNav$.

\begin{claim}\label{claim:fmm-verify}
    If $\tilde{\sfd}$ is an $\eps$-discretized metric on $[n]$ with aspect ratio $\Delta$, then $\VerifyNav(\tilde{\sfd},G, n, \alpha)$ returns the set of $\alpha$-navigability constraints unsatisfied by $G$ in time $O(n^{\omega} \cdot \log\Delta / \eps)$.
\end{claim}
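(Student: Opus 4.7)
}

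The plan is to prove correctness and runtime separately, with the correctness argument resting on a direct interpretation of the entries of the matrix products $C^{(i)} = A \cdot B^{(i)}$.

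For correctness, I would first unpack the definition of the product. Since $A_{s,u} = \one\{(s,u) \in E\}$ and $B^{(i)}_{u,t} = \one\{\tilde{\sfd}(u,t) < (1+\eps)^i/\alpha\}$, the entry
\[
C^{(i)}_{s,t} = \sum_{u \in [n]} A_{s,u}\, B^{(i)}_{u,t}
\]
precisely counts the out-neighbors $u$ of $s$ in $G$ that lie strictly within distance $(1+\eps)^i/\alpha$ of $t$. Since $\tilde{\sfd}$ is $\eps$-discretized, for every pair $(s,t)$ with $s \neq t$ there is a unique integer $i = \log_{1+\eps} \tilde{\sfd}(s,t)$ with $(1+\eps)^i = \tilde{\sfd}(s,t)$. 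For this specific $i$, the quantity $C^{(i)}_{s,t}$ is nonzero if and only if there exists $u$ with $(s,u) \in E$ and $\tilde{\sfd}(u,t) < \tilde{\sfd}(s,t)/\alpha$, which is exactly the $\alpha$-navigability constraint from $s$ to $t$. Hence the pair $(s,t)$ is added to $\calU_s$ in the final loop if and only if this constraint is violated, establishing that the algorithm returns precisely the set of unsatisfied $\alpha$-navigability constraints.

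For the runtime, I would bound the three contributions. Because $\tilde{\sfd}$ is $\eps$-discretized and takes values in $[1,\Delta]$, the number of relevant exponents is
\[
\lceil \log_{1+\eps}\Delta \rceil = O(\log\Delta/\eps),
\]
using $\ln(1+\eps) = \Theta(\eps)$ for $\eps \in (0,1)$. For each $i$, the matrix $B^{(i)}$ can be constructed in $O(n^2)$ time by a single scan over pairs, and the product $A \cdot B^{(i)}$ is computed in $O(n^{\omega})$ time by fast matrix multiplication. The final scan over all pairs $(s,t)$ to assemble the sets $\calU_s$ costs only $O(n^2)$ time, since for each pair we do a single lookup into the appropriate $C^{(i)}$. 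Summing over the $O(\log\Delta/\eps)$ scales gives total time $O(n^{\omega} \log\Delta/\eps)$, which dominates all other costs.

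I do not expect a real obstacle here: the claim is essentially a correctness-plus-accounting statement that formalizes the ``batch verification via matrix multiplication'' idea highlighted in the algorithm overview. The only subtle point is ensuring that the discretization guarantees a unique exponent $i$ for each pair, which is exactly why the rounding step in $\BuildNav$ preprocesses $\sfd$ into an $\eps$-discretized metric before invoking $\VerifyNav$.
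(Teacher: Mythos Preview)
Your proposal is correct and follows essentially the same approach as the paper: interpret $C^{(i)}_{s,t}$ as the count of out-neighbors $u$ of $s$ with $\tilde{\sfd}(u,t) < (1+\eps)^i/\alpha$, observe that for the unique $i$ with $\tilde{\sfd}(s,t) = (1+\eps)^i$ this is exactly the $\alpha$-navigability test, and then bound the runtime by $O(\log\Delta/\eps)$ matrix multiplications plus an $O(n^2)$ scan. The only cosmetic difference is that the paper proves the runtime bound first and correctness second.
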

\begin{proof}
    We first check the runtime of $\VerifyNav$. The algorithm has two phases. First, it computes $\log_{1+\eps} \Delta = O(\log\Delta/\eps)$-many multiplications of matrices in $\{0,1\}^{n\times n}$, each of which takes time $O(n^\omega)$. Next, it constructs the lists $\calU_s$ by iterating over all pairs $(s,t)$, which takes time $O(n^2)$. Therefore, the total runtime of $\VerifyNav$ is bounded by $O(n^\omega \cdot (\log\Delta)/\eps)$.

    To check correctness, we fix a pair $(s,t)$ at distance $\sfd(s,t) = (1+\eps)^i$ and verify that $C^{(i)}_{s,t} = 1$ if and only if $G = ([n],E)$ satisfies the $\alpha$-navigability from $s$ to $t$. If indeed $G$ satisfies the constraint, there is some edge $(s,u)\in E$ for which $\sfd(u,t) < \sfd(s,t) / \alpha$, meaning $\sfd(u,t) < (1+\eps)^{i} / \alpha$. Therefore, \[C^{(i)}_{s,t} = \sum_{v=1}^n A_{s,v}\cdot B^{(i)}_{v,t} \geq A_{s,u}\cdot B^{(i)}_{u,t} = 1\] Next, suppose that $G$ does not satisfy the constraint. Then, for any edge $(s,v)\in E$ we must have $\sfd(v,t)\geq(1+\eps)^i / \alpha$, and thus \[C^{(i)}_{s,t} = \sum_{v=1}^n A_{s,v}\cdot B^{i}_{v,t} = 0,\] as desired.
\end{proof}

Next, we show $\eps$-discretizing a metric only affects navigability up to a factor $(1+\eps)$ in the navigability parameter $\alpha$.

\begin{claim}\label{claim:rounding-stability}
Let $\tilde\sfd$ be obtained by rounding down every distance in $\sfd$ to the closest power of $1+\eps$ as in Algorithm~\ref{alg:build-nav}. If $G$ is $\alpha$-navigable on $\tilde\sfd$, then it is $\alpha / (1+\eps)$-navigable on $\sfd$. Moreover, if $H$ is $\alpha$-navigable on $\sfd$, then it is $\alpha/(1+\eps)$-navigable on $\tilde{\sfd}$.
\end{claim}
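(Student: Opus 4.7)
The plan is to extract the single quantitative fact about the rounding operation and then apply it mechanically in both directions. Concretely, by the definition $\tilde{\sfd}(x,y) = (1+\eps)^{\lfloor\log_{1+\eps}\sfd(x,y)\rfloor}$, one immediately obtains the two-sided sandwich
\[
\tilde{\sfd}(x,y) \;\le\; \sfd(x,y) \;<\; (1+\eps)\,\tilde{\sfd}(x,y)
\]
for every pair $(x,y)$. I would state this as a short preliminary observation, since it is the only property of the rounding that is used.

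For the first direction, I would fix an arbitrary pair $s \ne t$ and invoke the $\alpha$-navigability of $G$ under $\tilde{\sfd}$ to obtain an out-neighbor $u$ of $s$ with $\tilde{\sfd}(u,t) < \tilde{\sfd}(s,t)/\alpha$. Chaining the sandwich yields
\[
\sfd(u,t) \;<\; (1+\eps)\,\tilde{\sfd}(u,t) \;<\; \frac{(1+\eps)\,\tilde{\sfd}(s,t)}{\alpha} \;\le\; \frac{(1+\eps)\,\sfd(s,t)}{\alpha} \;=\; \frac{\sfd(s,t)}{\alpha/(1+\eps)},
\]
which is precisely the $\alpha/(1+\eps)$-navigability condition for $G$ on $\sfd$.

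For the second direction, I would argue symmetrically: given an $H$-edge $(s,u)$ witnessing $\sfd(u,t) < \sfd(s,t)/\alpha$, the sandwich gives
\[
\tilde{\sfd}(u,t) \;\le\; \sfd(u,t) \;<\; \frac{\sfd(s,t)}{\alpha} \;<\; \frac{(1+\eps)\,\tilde{\sfd}(s,t)}{\alpha} \;=\; \frac{\tilde{\sfd}(s,t)}{\alpha/(1+\eps)},
\]
so $H$ is $\alpha/(1+\eps)$-navigable on $\tilde{\sfd}$.

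There is essentially no genuine obstacle here; the only point worth flagging is that the navigability condition is a strict inequality, and I would make sure the strict inequality in the sandwich (the right half $\sfd(x,y) < (1+\eps)\tilde{\sfd}(x,y)$) is the one that carries the strictness through the chain in each direction, while the weak inequality $\tilde{\sfd}(x,y) \le \sfd(x,y)$ is invoked only where a non-strict step is acceptable.
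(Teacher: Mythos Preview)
Your proposal is correct and essentially identical to the paper's proof: both extract the sandwich $\tilde{\sfd}(x,y) \le \sfd(x,y) < (1+\eps)\,\tilde{\sfd}(x,y)$ and chain it with the navigability inequality in each direction. Your explicit attention to which inequality supplies the strictness is a nice touch that the paper leaves implicit.
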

\begin{proof}
Suppose $G$ is $\alpha$-navigable on $\tilde{\sfd}$. Then, for any $s,t,u\in[n]$ where $\tilde{\sfd}(u,t) < \tilde{\sfd}(s,t) / \alpha$, we have \[\sfd(u,t) < (1+\eps)\cdot \tilde{\sfd}(u,t) < \frac{1+\eps}{\alpha}\cdot \tilde{\sfd}(s,t) \leq \frac{1+\eps}{\alpha}\cdot \sfd(s,t),\] which proves that $G$ is $\alpha/(1+\eps)$-navigable on $\sfd$.

Next, suppose that $H$ is $\alpha$-navigable on $\sfd$. Then, for any $s,t,u\in[n]$ where $\sfd(u,t)<\sfd(s,t)/\alpha$, we have \[\tilde{\sfd}(u,t) \leq \sfd(u,t) < \frac{\sfd(s,t)}{\alpha} < \frac{1+\eps}{\alpha}\cdot \tilde{\sfd}(s,t),\] which proves that $H$ is $\alpha/(1+\eps)$-navigable on $\tilde{\sfd}$.
\end{proof}

Recall from Definition~\ref{def:sc-instance-nav} that $Z_\alpha(s,u) := \{t \mid \sfd(u,t) < \sfd(s,t)/\alpha\}$. The following claim relates $Z_{\alpha}(s,\cdot)$ with $Z_{2\alpha}(s,\cdot)$, and is the key ingredient in our bicriteria approximation analysis.

\begin{claim}\label{claim:reverse-covering}
    For any metric space $([n],\sfd)$ and $\alpha\geq 1$, suppose that $s,x,y,z\in[n]$ are such that $z\in Z_{2\alpha}(s,x)$ and $\sfd(x,y)\leq\sfd(x,z)$. Then, it follows that $z\in Z_\alpha(s,y)$.
\end{claim}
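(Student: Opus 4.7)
The plan is to unpack the definitions and reduce the statement to a one-line triangle inequality computation. Recall that by Definition~\ref{def:sc-instance-nav}, the hypothesis $z \in Z_{2\alpha}(s, x)$ translates to
\[
\sfd(x, z) \;<\; \frac{\sfd(s, z)}{2\alpha},
\]
and the goal $z \in Z_\alpha(s, y)$ is equivalent to establishing
\[
\sfd(y, z) \;<\; \frac{\sfd(s, z)}{\alpha}.
\]
So the task reduces to bounding $\sfd(y, z)$ in terms of $\sfd(x, z)$ using only the given hypothesis $\sfd(x, y) \leq \sfd(x, z)$ and the metric axioms.

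First, I would apply the triangle inequality to route $y \to z$ through $x$, obtaining $\sfd(y, z) \leq \sfd(y, x) + \sfd(x, z)$. Then I would invoke the hypothesis $\sfd(x, y) \leq \sfd(x, z)$ (using symmetry of the metric) to collapse the right-hand side into $2\sfd(x, z)$. Finally, I would substitute the bound from $z \in Z_{2\alpha}(s, x)$ to conclude
\[
\sfd(y, z) \;\leq\; 2\sfd(x, z) \;<\; 2 \cdot \frac{\sfd(s, z)}{2\alpha} \;=\; \frac{\sfd(s, z)}{\alpha},
\]
which is exactly the condition for $z \in Z_\alpha(s, y)$.

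There is no real obstacle here — the claim is essentially a statement that if $x$ is within the stricter $2\alpha$-navigability ball of $z$ from $s$, then any point $y$ at least as close to $x$ as $z$ automatically lies within the looser $\alpha$-navigability ball. The factor-of-two slack in the navigability parameter is precisely what the triangle inequality consumes. The only subtlety worth flagging is that the claim is used downstream in the bicriteria analysis to argue that sampling a close-enough neighbor (within $\sfd(x, z)$ of $x$) suffices to cover the $\alpha$-navigability constraint for $z$, which is why the doubling of $\alpha$ to $2\alpha$ in Theorem~\ref{thm:bicriteria-alg} is the natural price to pay.
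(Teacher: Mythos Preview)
Your proposal is correct and follows essentially the same argument as the paper: apply the triangle inequality to get $\sfd(y,z)\leq\sfd(x,y)+\sfd(x,z)\leq 2\sfd(x,z)$, then use $z\in Z_{2\alpha}(s,x)$ to conclude $\sfd(y,z)<\sfd(s,z)/\alpha$.
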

\begin{proof}
    Since $\sfd(x,y)\leq\sfd(x,z)$ we have \[\sfd(y,z)\leq \sfd(x,y) + \sfd(x,z) \leq 2\cdot\sfd(x,z).\] 
    Then, since $z\in Z_{2\alpha}(s,x)$, we know \[\sfd(s,z) > 2\alpha \cdot \sfd(x,z) \geq \alpha \cdot \sfd(y,z)\] which implies $z\in Z_\alpha(s,y)$.
\end{proof}

\begin{lemma}\label{lemma:coverage-reduction}
For any iteration of the main loop in $\BuildNav$, let $s\in[n]$ be a source whose current uncovered set is $\calU_s$ with $N=|\calU_s|$.  Let $k_s$ be the minimum degree of $s$ in any $2\alpha(1+\eps)$‑navigable graph on $\tilde{\sfd}$. If $\hat{k}_s \ge k_s$, then after updating $\Unew_s = \VerifyNav(\tilde{\sfd}, \bG, n, \alpha(1+\eps))$, we have: \[\Pr\left[|\Unew_s| \leq \frac{15}{16}\cdot |\calU_s|\right] \geq 1/15.\]
\end{lemma}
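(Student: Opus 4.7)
The plan is to leverage the hypothetical optimum $2\alpha(1+\eps)$-navigable graph witnessing $k_s$ to impose a bucket structure on $\calU_s$, apply Claim~\ref{claim:reverse-covering} to show that a uniformly random element of $\calU_s$ $\alpha(1+\eps)$-covers many elements in expectation, and convert the resulting expectation bound into a constant-probability tail via a reverse Markov inequality. Concretely, set $N = |\calU_s|$, let $H$ be a $2\alpha(1+\eps)$-navigable graph on $\tilde{\sfd}$ with $\deg_H(s) = k_s$, and write $u_1,\dots,u_{k_s}$ for the out-neighbors of $s$ in $H$. Navigability of $H$ lets me partition $\calU_s = B_1 \sqcup \cdots \sqcup B_{k_s}$ with $B_i \subseteq Z_{2\alpha(1+\eps)}(s,u_i)$. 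Within each bucket, I order elements $z_1,\dots,z_{|B_i|}$ by increasing $\tilde{\sfd}(u_i,\cdot)$. Applying Claim~\ref{claim:reverse-covering} (with the $(1+\eps)$ slack between $\sfd$ and $\tilde{\sfd}$ absorbed by rescaling $\eps$ by a constant factor), for any $y \in \{z_1,\dots,z_j\}$ one obtains $z_j \in Z_{\alpha(1+\eps)}(s,y)$ with respect to $\tilde{\sfd}$. Hence a uniform sample $\bu \sim \Unif(\calU_s)$ satisfies $\Pr[z_j \in Z_{\alpha(1+\eps)}(s,\bu)] \geq j/N$.

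Let $X$ count the elements of $\calU_s$ covered by at least one of the $\hat{k}_s$ sampled edges, so that $|\Unew_s| = N - X$ by Claim~\ref{claim:fmm-verify}. Independence of samples yields
\[
\mathbb{E}[X] \;\geq\; \sum_{i=1}^{k_s}\sum_{j=1}^{|B_i|}\Bigl[1 - (1 - j/N)^{\hat{k}_s}\Bigr].
\]
Next I would apply the elementary bound $1 - (1-x)^k \geq \min(xk/2,\, 1/2)$ and split each bucket-sum at the threshold $j = N/\hat{k}_s$: a large bucket ($|B_i| > N/\hat{k}_s$) contributes at least $|B_i|/2 - N/(4\hat{k}_s)$, while a small bucket contributes at least $\hat{k}_s|B_i|^2/(4N)$. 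Writing $N_L$ for the total mass in large buckets and $|L|$ for their count, the inequality $|L| \cdot N/\hat{k}_s \leq N_L$ controls the large-side penalty, and Cauchy--Schwarz combined with $\hat{k}_s \geq k_s$ gives $\sum_{i \in S} \hat{k}_s|B_i|^2/(4N) \geq (N - N_L)^2/(4N)$ on the small side. Setting $t = N_L/N \in [0,1]$, these estimates combine to $\mathbb{E}[X]/N \geq t/4 + (1-t)^2/4$, which is minimized at $t = 1/2$ with value $3/16$.

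Since $X \leq N$ deterministically, reverse Markov then yields
\[
\Pr[X \geq N/16] \;\geq\; \frac{\mathbb{E}[X] - N/16}{N - N/16} \;\geq\; \frac{(3/16)N - N/16}{(15/16)N} \;=\; \frac{2}{15} \;\geq\; \frac{1}{15},
\]
and $\{X \geq N/16\}$ coincides with $\{|\Unew_s| \leq (15/16)|\calU_s|\}$, closing the proof. The main obstacle will be obtaining the expected-coverage bound $\mathbb{E}[X] \geq (3/16)N$: pushing the constant strictly above the $N/8$ threshold demanded by reverse Markov forces the case analysis above, since letting either the large-bucket penalty or the small-bucket Cauchy--Schwarz term dominate collapses the bound at extreme values of $t$. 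A secondary subtlety, routine but worth tracking, is the $(1+\eps)$ slack between $\sfd$ and $\tilde{\sfd}$, which is precisely what motivates stating the hypothesis in terms of $2\alpha(1+\eps)$-navigability rather than $2\alpha$-navigability.
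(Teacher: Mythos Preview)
Your argument is correct, but it works considerably harder than necessary. The paper uses the same bucket decomposition $\calU_s = Y_1 \sqcup \cdots \sqcup Y_{k_s}$ with $Y_i \subseteq Z_{2\alpha(1+\eps)}(s,x_i)$, but instead of ordering each bucket fully, it simply splits each $Y_i$ into its closer half $C_i$ and farther half $F_i$ relative to $x_i$. Claim~\ref{claim:reverse-covering} then gives the single fact needed: any sample landing in $C_i$ covers all of $F_i$. Setting $\bA = \sum_i |C_i| \cdot \one\{C_i \text{ hit}\}$, linearity plus $t e^{-t} \leq 1/e$ yields $\E[\bA] \geq N/2 - N/e > N/8$ in two lines, after which the same reverse Markov you use (with the trivial bound $\bA \leq N$) gives $\Pr[\bA \geq N/16] \geq 1/15$. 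Your finer ordering and large/small Cauchy--Schwarz case split recovers a slightly better expectation ($3N/16$ versus $N/2 - N/e \approx 0.132 N$), and hence a better probability $2/15$, but this extra precision is not used anywhere downstream. So the two proofs are the same in structure (bucket via the optimal $2\alpha(1+\eps)$-cover, apply Claim~\ref{claim:reverse-covering}, bound expected coverage, reverse Markov), and differ only in how much work goes into the expectation step.

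One small point: your parenthetical about ``absorbing the $(1+\eps)$ slack between $\sfd$ and $\tilde{\sfd}$ by rescaling $\eps$'' is not needed at the level of this lemma. The statement is phrased entirely over $\tilde{\sfd}$, with hypothesis $2\alpha(1+\eps)$-navigability and conclusion $\alpha(1+\eps)$-navigability both on $\tilde{\sfd}$; Claim~\ref{claim:reverse-covering} is applied directly with parameter $\alpha(1+\eps)$ in place of $\alpha$, exactly as the paper does. The passage between $\sfd$ and $\tilde{\sfd}$ happens later, in the proof of Theorem~\ref{thm:bicriteria-alg}, via Claim~\ref{claim:rounding-stability}.
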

\begin{proof}
$\BuildNav$ samples $\bu_1,\ldots,\bu_{\hat{k}_s}\simiid \calU_s$ and adds all edges $(s,\bu_i)$ to $\bG$, and updates \[\Unew_s = \VerifyNav(\tilde{\sfd}, \bG, n, \alpha(1+\eps)).\] To analyze the size of $\Unew$, observe that since there is a $2\alpha(1+\eps)$-navigable graph on $\tilde{\sfd}$ where $s$ has degree $k_s$, we can find $x_1,\ldots,x_{k_s}\in[n]$ with $\calU_s =\bigcup_{i=1}^{k_s}Z_{2\alpha(1+\eps)}(s,x_i)$. 

Partition $\calU_s = Y_1\sqcup\cdots\sqcup Y_{k_s}$ so each $Y_i\subseteq Z_{2\alpha}(s,x_i)$ is disjoint. For every $i\in[k_s]$, further split $Y_i = C_i\sqcup F_i$ where:
\[
C_i = \text{the } \ceil*{\frac{|Y_i|}{2}}\text{ closest points to }x_i,
\qquad
F_i = Y_i \setminus C_i.
\]
By Claim~\ref{claim:reverse-covering}, whenever $u\in C_i$, the entire set $F_i \cup \{u\}$ is contained in
$Z_{\alpha(1+\eps)}(s,u)$.

Now, define \[\bX_i = \begin{cases}
    1 & \text{if } C_i\cap \{\bu_1,\ldots,\bu_{\hat{k}_s}\} \neq \emptyset \\
    0 & \text{otherwise}
\end{cases}\] and observe that \[|\calU_s| - |\Unew_s| \geq \sum_{i=1}^{k_s} (|F_i|+1)\cdot \bX_i \geq \sum_{i=1}^{k_s} |C_i|\cdot \bX_i := \bA.\] Let $N := |\calU_s|$. We will show that $\Pr[\bA\geq N/16]\geq 1/15$. First, the expectation of each $\bX_i$ is given by: \[\E[\bX_i] = \del{1 - (1-|C_i|/N)^{\hat{k}_s}}\geq \del{1 - \exp(-\hat{k}_s |C_i| / N)}.\] By linearity of expectation, \[\E[\bA] = \sum_{i=1}^{k_s} |C_i|\cdot\E[\bX_i] \geq \sum_{i=1}^{k_s} |C_i| \cdot \del{1 - \exp(-\hat{k}_s |C_i| / N)} \geq \frac{N}{2} - \sum_{i=1}^{k_s} |C_i|\cdot\exp(-k_s |C_i| / N)\] where we have used that $|C_i|\geq |Y_i| / 2$ and $\hat{k}_s \geq k_s$. To analyze the right-most sum, observe that \[|C_i|\cdot \exp(-k_s|C_i|/N) \leq \frac{N}{k_s}\cdot \max_{t\geq 0}\del{te^{-t}} = \frac{N}{ek_s}.\]

This immediately gives \[\E[\bA] \geq \frac{N}{2} - \frac{N}{e} > \frac{N}{8}.\]

By Markov's inequality, we conclude that $\Pr[\bA \geq N/16] \geq 1/15$, as desired.
\end{proof}

\begin{lemma}\label{lemma:budget-upper-bound}
Consider any source $s\in[n]$ in the execution of $\BuildNav$. Let $k_s$ be the minimum degree of $s$ in any $2\alpha(1+\eps)$-navigable graph on $\tilde{\sfd}$. Then, with probability at least $1 - 1/n^2$, the set $\calU_s$ becomes empty while $\hat{k}_s \leq 2k_s$.
\end{lemma}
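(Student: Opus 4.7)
The plan is to isolate the first phase during which $\hat{k}_s$ is large enough to invoke the per-iteration shrinkage guarantee of Lemma~\ref{lemma:coverage-reduction}, and to show that within this single phase, $\calU_s$ empties with probability at least $1-1/n^2$. Let $j^\ast$ be the smallest non-negative integer with $2^{j^\ast}\ge k_s$, so that $2^{j^\ast}\in[k_s,2k_s)$. Since $\hat{k}_s$ starts at $1$ and is doubled only once the counter $c_s$ reaches $T := 120\log_{16/15}n$ in an iteration with $\Unew_s\ne\emptyset$, the value of $\hat{k}_s$ evolves through the sequence $1,2,4,\dots,2^{j^\ast},\dots$. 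Either $\calU_s$ empties at some point while $\hat{k}_s<2^{j^\ast}\le 2k_s$, in which case the conclusion is trivial, or else $\hat{k}_s$ reaches $2^{j^\ast}$ and must remain at this value for at least $T$ further iterations with $\Unew_s\ne\emptyset$ before it can double to $2^{j^\ast+1}$. It therefore suffices to prove that, once $\hat{k}_s=2^{j^\ast}$, the set $\calU_s$ becomes empty within the next $T$ iterations with probability at least $1-1/n^2$; this forces the terminal value of $\hat{k}_s$ to stay at most $2^{j^\ast}<2k_s$.

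Throughout this phase, $\hat{k}_s\ge k_s$, so Lemma~\ref{lemma:coverage-reduction} applies at each iteration. Let $\bX_i$ be the indicator of the event $|\Unew_s|\le\tfrac{15}{16}|\calU_s|$ in the $i$-th iteration of the phase. Because the samples $\bu_1,\ldots,\bu_{\hat{k}_s}$ drawn by $\BuildNav$ in each iteration are fresh and independent of the past, the lemma gives $\Pr[\bX_i=1\mid\text{history through iteration }i-1]\ge 1/15$, and so by a standard coupling argument $\sum_{i=1}^{T}\bX_i$ stochastically dominates a $\mathrm{Bin}(T,1/15)$ random variable. At the same time, $|\calU_s|$ is a non-negative integer bounded initially by $n$, and each occurrence of $\bX_i=1$ multiplies it by at most $15/16$; hence strictly more than $\log_{16/15}n$ successes among the $T$ trials force $|\calU_s|<1$, i.e., $\calU_s=\emptyset$.

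It then remains to bound $\Pr[\mathrm{Bin}(T,1/15)\le\log_{16/15}n]$. Since the mean of this Binomial variable equals $T/15=8\log_{16/15}n$, a standard multiplicative Chernoff bound with $\delta=7/8$ yields a failure probability of $\exp(-\Omega(\log n))$, and the constant $120$ in the definition of $T$ is chosen so that this Chernoff tail is comfortably at most $1/n^2$ (using that $1/\ln(16/15)\approx 15.5$ makes the exponent of order $\approx 47\ln n$). The argument is conceptually routine given Lemma~\ref{lemma:coverage-reduction}; the only step requiring numerical care is verifying that the constant $120$ indeed suffices to drive the Chernoff tail below $1/n^2$, and keeping track of the fact that the relevant $T$ iterations are precisely those in which $\hat{k}_s$ equals $2^{j^\ast}$ rather than any earlier or later power of two.
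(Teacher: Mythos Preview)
Your proposal is correct and follows essentially the same approach as the paper: both isolate the first phase in which $\hat{k}_s\ge k_s$ (hence $\hat{k}_s\le 2k_s$), invoke Lemma~\ref{lemma:coverage-reduction} in each of the $T=120\log_{16/15}n$ rounds of that phase, and use a Chernoff bound on the number of ``good'' rounds to conclude that $\calU_s$ empties before $\hat{k}_s$ can double again. Your treatment is arguably cleaner on one point: you explicitly phrase the dependence structure via stochastic domination by $\mathrm{Bin}(T,1/15)$, whereas the paper asserts the $\bY_i$ satisfy $\Pr[\bY_i=1]\ge 1/15$ ``independently,'' which is not literally true but is justified precisely by the coupling you describe.
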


\begin{proof}
Fix a source $s \in [n]$, and suppose that $\calU_s$ is still nonempty at the first moment that $\hat{k}_s \geq k_s$. At this point, $\hat{k}_s \leq 2k_s$, and we will show that with high probability, $\calU_s$ will become empty without updating $\hat{k}_s$ further. 

For $R = 120\cdot \log_{16/15} n$, the execution of $\BuildNav$ involves at most $R$ rounds of adding $\hat{k}_s$ random edges from $s$. For each round $i \in [R]$, define an indicator variable $\bY_i$ that is equal to $1$ if $|\Unew_s| \leq (15/16)\cdot|\calU_s|$ at the end of the $i$-th round, and is equal to $0$ otherwise. By Lemma~\ref{lemma:coverage-reduction}, we have $\Pr[\bY_i = 1] \geq 1/15$, independently for each round.

We say that round $i$ is \emph{good} if $\bY_i = 1$. Let $\bY := \sum_{i=1}^{R} \bY_i$ be the number of good rounds. Then $\E[\bY] \geq R/15 = 8\log_{16/15} n$. By a standard Chernoff bound, we have
\[
\Pr\left[\bY \leq\frac{\E[\bY]}{2}\right] \leq \exp\left(-\frac{\E[\bY]}{8}\right) \leq \exp(-2\log n) \leq \frac{1}{n^2}.
\]
Thus, with probability at least $1 - 1/n^2$, there are strictly more than $\log_{16/15} n$ good rounds. Since $|\calU_s| \leq n$ at the beginning, and each good round reduces the size of $\calU_s$ by a factor of $15/16$, then after more than $\log_{16/15} n$ such rounds, we will have $|\calU_s| = 0$. Therefore, with probability at least $1 - 1/n^2$, we will have $\calU_s = \emptyset$ when $\hat{k}_s \leq 2k_s$.
\end{proof}

We are now ready to prove Theorem~\ref{thm:bicriteria-alg}.

\begin{proof}[Proof of Theorem~\ref{thm:bicriteria-alg}]
Let $(P = [n], \sfd)$ be an $n$-point metric with aspect ratio $\Delta$, and fix $\alpha \geq 1$ and $\eps > 0$. Let $\tilde{\sfd}$ be the $\eps$-discretized version of $\sfd$ obtained by rounding each distance down to the nearest power of $1+\eps$, as done in $\BuildNav$. We will show that with probability $1 - o(1)$, $\BuildNav$ outputs an $(\alpha(1+\eps), 2\alpha(1+\eps))$-bicriteria $O(\ln n)$-approximation to the sparsest navigable graph on $\tilde{\sfd}$, in time $\Ot(n^\omega \cdot \log \Delta / \eps)$. By Claim~\ref{claim:rounding-stability}, this output is also an $(\alpha, 2\alpha(1+\eps)^2)$-bicriteria $O(\ln n)$-approximation to the sparsest navigable graph on $\sfd$, and a rescaling of $\eps$ gives the desired guarantee.

Correctness follows from the construction: by Claim~\ref{claim:fmm-verify}, the graph $\bG$ returned upon termination of $\BuildNav$ is guaranteed to be $\alpha(1+\eps)$-navigable on $\tilde{\sfd}$. We will now analyze the sparsity of the graph $\bG$.

For each $s \in [n]$, let $k_s$ denote the minimum degree of $s$ in any $2\alpha(1+\eps)$-navigable graph on $\tilde{\sfd}$. By Lemma~\ref{lemma:budget-upper-bound} and a union bound over all $s$, we have that with probability at least $1 - 1/n$, the budget $\hat{k}_s$ remains at most $2k_s$ throughout the execution. In the remainder of the proof, we will condition on this event.

As the budget $\hat{k}_s$ is held fixed for $O(\log n)$ rounds, and each round adds at most $\hat{k}_s$ edges from $s$, the number of edges added from $s$ at the budget level $\hat{k}$ is $O(\hat{k}_s \cdot \log n)$. Since $\hat{k}_s$ is scaled geometrically up to at most $2k_s$, we conclude that $\deg_{\bG} s = O(k_s \log n)$. Summing over all sources $s \in [n]$, the total number of edges in the final graph is at most
\[
O(\log n) \cdot \sum_{s=1}^n k_s = O(\log n) \cdot \OPT_{2\alpha(1+\eps)},
\]
where $\OPT_{2\alpha(1+\eps)}$ denotes the edge count of the sparsest $2\alpha(1+\eps)$-navigable graph on $\tilde{\sfd}$.

Similarly, the maximum degree is at most $O(\log n) \cdot \max_s k_s$, which gives the desired bicriteria guarantee under both the $\maxdeg$ and $\avgdeg$ objectives.

It remains to bound the runtime, which is given by the number of calls to $\VerifyNav$. Since each source $s\in[n]$ takes $O(\log k_s \cdot \log n) = O(\log^2 n)$ rounds of adding edges before $\calU_s$ becomes empty, the total number of calls to $\VerifyNav$ is at most $O(\log^2 n)$. By the runtime guarantee of Claim~\ref{claim:fmm-verify}, we conclude that $\BuildNav$ runs in time
\[
O(n^\omega \cdot \log^2 n \cdot \log\Delta / \eps) = \Ot(n^\omega \cdot \log \Delta / \eps).
\]

The above bound on the runtime occurs with probability $1-1/n$, but it can be enforced deterministically by early termination of $\BuildNav$.
\end{proof}
\newcommand{\bsfd}{\boldsymbol{\sfd}}
\newcommand{\dpath}{\sfd_{\textup{path}}}

\section{An $\Omega(n^2)$ Query Lower Bound}\label{sec:lower-bounds}

Theorem~\ref{thm:fast-log-approx} presents an algorithm for producing an $O(\ln n)$-approximation to the sparsest $\alpha$-navigable graph, running in time $O(n\cdot \OPT_{\textup{size}})$. Since every $\alpha$-navigable graph must be connected and thus contain $\Omega(n)$ edges, this bound on the runtime can be no better than $O(n^2)$. In this section, we show that such a quadratic overhead is unavoidable: any algorithm achieving even an $o(n)$-approximation to the sparsest $\alpha$-navigable graph must make $\Omega(n^2)$ distance queries.

\begin{theorem}\label{thm:query-lb}
    Let $\calA$ be any randomized algorithm with query access to a metric $\sfd$ on $n$ points that (with constant probability) outputs an $o(n)$-approximation to the sparsest navigable graph on $\sfd$, under either the $\maxdeg$ or $\avgdeg$ objective. Then, $\calA$ must make $\Omega(n^2)$ queries to $\sfd(\cdot,\cdot)$.
\end{theorem}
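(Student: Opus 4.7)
The plan is to invoke Yao's minimax principle: it suffices to exhibit a distribution $\mathcal{D}$ over $n$-point metric spaces such that every instance admits a $1$-navigable graph of maximum out-degree $3$ with $3n$ edges, yet any deterministic algorithm making only $q = o(n^2)$ distance queries produces, with constant probability over $\mathcal{D}$, a graph that is not an $o(n)$-approximation under either objective.

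For the hard distribution I would take $G^* = (V,E^*)$ to be a uniformly random $3$-regular graph on $V=[n]$ (via the configuration model conditioned on simplicity, for $n$ chosen so this is well-defined), and let $\sfd^*$ be the shortest-path metric induced by $G^*$. The bidirected $G^*$ is exactly the unique sparsest $1$-navigable graph on $(V,\sfd^*)$: for any $s \neq t$, a shortest $s\to t$ path begins with some $(s,u)\in E^*$ with $\sfd^*(u,t) = \sfd^*(s,t)-1$, proving $1$-navigability; conversely, for every $\{s,t\}\in E^*$ we have $\sfd^*(s,t)=1$, so any $1$-navigable graph must contain an out-neighbor $u$ of $s$ with $\sfd^*(u,t)<1$, forcing $u=t$. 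Hence any $1$-navigable $G$ contains both orientations of every edge of $G^*$, and $\mathrm{OPT}$ has max out-degree $3$ and $3n$ edges.

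Next, fix a deterministic $\mathcal{A}$ with $q = o(n^2)$ queries. For its output $G$ to be an $o(n)$-approximation under either objective, we must have $|E(G)| = o(n^2)$ and $G$ must contain the bidirected $G^*$. I would couple $\mathcal{A}$'s execution with a \emph{lazy} realization of $G^*$ via the configuration model, exposing half-edge pairings only when forced by a query. A query $(u,v)$ returning distance $1$ certifies $\{u,v\}\in E^*$, while a larger return value only rules out $\{u,v\}$ from $E^*$; unexplored half-edges remain permutation-symmetric across the untouched vertices. Consequently, the marginal probability that any unqueried pair is an edge of $G^*$ stays $\Theta(1/n)$ throughout the run, and the expected number of $G^*$-edges directly observed by $\mathcal{A}$ is at most $O(q/n) = o(n)$. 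Thus with probability $1 - o(1)$ over $G^*$, at least $\Omega(n)$ edges of $G^*$ lie among the $\Theta(n^2)$ unqueried pairs. Since $\mathcal{A}$'s choice of the $o(n^2)$ edges of $G$ among the unqueried pairs is essentially blind, a first-moment/Chernoff argument (each unrevealed edge lies in $E(G)$ with posterior probability $o(1)$, and these events are approximately independent after deferring the remaining configuration model pairings) shows that with probability $1 - o(1)$ at least one edge of $G^*$ is missed, so $G$ is not $1$-navigable.

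The main obstacle is making the ``marginal stays $\Theta(1/n)$'' claim rigorous under \emph{adaptive} queries, since the global $3$-regularity constraint couples edges across the graph and in principle adaptive queries could aggregate into indirect edge certifications. I plan to handle this by tracking the configuration-model state directly: after $o(n^2)$ queries the set of touched half-edges has size $o(n^2)$, so the untouched half-edges remain uniformly exchangeable and the desired marginal bound follows. As a cleaner fallback, I would instead plant a uniformly random perfect matching on top of a fixed base metric designed so that $\mathrm{OPT}$ already has max out-degree $3$ and the matching edges are exactly the ``distance-$1$ hidden pairs''; this reduces the posterior analysis to the standard observation that a random matching on $[n]$ cannot be identified by $o(n^2)$ membership queries, while preserving both the OPT guarantee and the navigability-forcing property.
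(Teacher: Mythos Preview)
Your primary construction has a real gap. In a shortest-path metric on a sparse graph, a distance query $\sfd(u,v)=d$ is far more informative than an edge-membership query: to answer it under lazy configuration-model revelation you must certify that no path of length $<d$ exists, which forces exposure of the entire BFS ball of radius $d-1$ around $u$. A random $3$-regular graph has diameter $\Theta(\log n)$, so a \emph{single} query at a diametrically opposite pair can force revelation of $\Theta(n)$ half-edges; a handful of adaptive queries can expose the whole graph. Your sentence ``after $o(n^2)$ queries the set of touched half-edges has size $o(n^2)$'' is vacuous (there are only $3n$ half-edges total) and does not control this leakage. The subsequent independence/Chernoff step therefore has no foundation.

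Your fallback idea (plant a random matching on a fixed base metric) is pointing in the right direction but is still more machinery than needed, and you leave unspecified what base metric simultaneously (i) keeps $\mathrm{OPT}$ at degree $3$, (ii) forces every matching edge into any navigable graph, and (iii) prevents a distance query from leaking information about \emph{other} matching edges. The paper resolves all three issues with a much simpler device: take all distances in $[1,2]$ (so the triangle inequality is automatic and each query reveals exactly one matrix entry), start from the deterministic ``path-like'' metric $\sfd(i,j)=1+\tfrac{|i-j|}{n-1}$, and overwrite a \emph{single} uniformly random pair $\{\bi^\ast,\bj^\ast\}$ to distance $1$. The path plus this one edge is navigable with max degree $3$, while $\{\bi^\ast,\bj^\ast\}$ is the unique minimum-distance pair and hence mandatory in any navigable graph. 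With $o(n^2)$ queries the hidden pair is unqueried with probability $1-o(1)$; conditioned on that event the algorithm's output is fixed, the hidden pair is uniform over the unqueried pairs, and an output with $o(n^2)$ edges misses it with probability $1-o(1)$. One hidden pair, not a matching or a whole $3$-regular graph, already suffices.
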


To prove Theorem~\ref{thm:query-lb}, we construct a distribution over a family of path-like metrics, with a random hidden ``shortcut'' that must appear as an edge in order to satisfy navigability. By Yao's minimax principle, it will suffice to show the stated lower bound for deterministic algorithms $\calA$.

\begin{definition}[Perturbed Path Metric]\label{def:perturbed-path}
    We specify a distribution $\calD$ over metrics $\bsfd$ on $[n]$ as follows.
    \begin{itemize}
        \item For all distinct $i,j\in[n]$, set \[\bsfd(i,j) = \dpath(i,j) := 1+\frac{|i-j|}{n-1}\]
        
        \item Sample $\{\bi^\ast, \bj^\ast\}\sim\binom{[n]}{2}$ uniformly at random, and overwrite $\bsfd(\bi^\ast, \bj^\ast) = \bsfd(\bj^\ast,\bi^\ast) = 1$
    \end{itemize}
    
    Since all nonzero distances are in $[1,2]$, the distance function $\bsfd$ automatically satisfies the triangle inequality and is a metric.
        
\end{definition}

\begin{lemma}\label{lem:sparse-nav-path}
    There is a navigable graph on $\bsfd$ with maximum degree $\le 3$.
\end{lemma}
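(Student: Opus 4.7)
The plan is to exhibit an explicit navigable graph $H = ([n], E)$ by taking the two directed path edges at each vertex, together with the two directed ``shortcut'' edges realizing the random pair $\{\bi^\ast,\bj^\ast\}$. Concretely, I would let $E$ consist of the edges $(i, i+1)$ for $1 \le i \le n-1$, the edges $(i, i-1)$ for $2 \le i \le n$, and the two edges $(\bi^\ast, \bj^\ast)$ and $(\bj^\ast, \bi^\ast)$. Each vertex then has at most two path neighbors plus at most one shortcut neighbor, so the maximum out-degree is at most $3$.

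The bulk of the argument is to verify that $H$ is $1$-navigable on $\bsfd$. I would split into cases based on whether the source–target pair $(s,t)$ is the special pair. If $\{s,t\} = \{\bi^\ast, \bj^\ast\}$, then $\bsfd(s,t) = 1$ and the shortcut edge $(s,t) \in E$ gives $u = t$ with $\bsfd(u,t) = 0 < 1$, satisfying the navigability constraint. Otherwise, $\bsfd(s,t) = \dpath(s,t) = 1 + |s-t|/(n-1)$. Let $u$ be the path neighbor of $s$ in the direction of $t$, i.e.\ $u = s+1$ if $s < t$ and $u = s-1$ if $s > t$; then $|u-t| = |s-t|-1$.

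To finish this case, I would bound $\bsfd(u,t)$. If $u = t$, then $\bsfd(u,t) = 0 < \bsfd(s,t)$. If $\{u,t\} = \{\bi^\ast, \bj^\ast\}$, then $\bsfd(u,t) = 1 < 1 + |s-t|/(n-1)$ since $|s-t|\ge 1$. Otherwise $\bsfd(u,t) = \dpath(u,t) = 1 + (|s-t|-1)/(n-1) < \bsfd(s,t)$. In every subcase we obtain $\bsfd(u,t) < \bsfd(s,t)$, as required.

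The only subtlety I anticipate is making sure the shortcut can never corrupt navigability when the moving step $u$ is adjacent to (or equal to) one endpoint of the shortcut; the case analysis above explicitly handles both configurations ($u$ equals $\bi^\ast$ or $\bj^\ast$, and $t$ equals the other endpoint) and shows that the shortcut can only decrease $\bsfd(u,t)$ relative to $\dpath(u,t)$, which is already strictly less than $\bsfd(s,t)$. No other step requires a nontrivial computation, so the proof should be short and self-contained.
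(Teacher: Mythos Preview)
Your proposal is correct and takes essentially the same approach as the paper: both construct the graph consisting of the directed path $P_n$ together with the shortcut edges $(\bi^\ast,\bj^\ast)$ and $(\bj^\ast,\bi^\ast)$, and verify $1$-navigability by a short case analysis. The only cosmetic difference is that the paper organizes its cases around which constraints are affected by the perturbation (namely those with $t\in\{\bi^\ast,\bj^\ast\}$), whereas you split first on whether $\{s,t\}=\{\bi^\ast,\bj^\ast\}$ and then on whether $\{u,t\}=\{\bi^\ast,\bj^\ast\}$; the underlying observation that $\bsfd(u,t)\le\dpath(u,t)<\dpath(s,t)=\bsfd(s,t)$ is identical.
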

\begin{proof}
    Let $G = P_n\cup \{\bi^\ast, \bj^\ast\}$, where $P_n$ denotes the (undirected) path graph on $[n]$. Observe that $G$ is navigable on $\dpath$, since for any distinct $s,t$, we have that $|u-t| < |s-t|$ for some $u = s\pm 1$.

    Now consider $\bsfd$, the perturbed metric where a single pair $\{\bi^\ast, \bj^\ast\}$ has its distance reduced to $1$. The only navigability constraints $(s,t)$ that may differ between $\bsfd$ and $\dpath$ are those where $t \in \{\bi^\ast, \bj^\ast\}$. We perform a brief casework on these constraints. \begin{itemize}
        \item If $t = \bj^\ast$ and $s = \bi^\ast$, then there is a direct edge $(\bi^\ast, \bj^\ast)$ in $G$.
        \item Suppose $t = \bj^\ast$ and $s\neq \bi^\ast$. Then, by navigability of $G$ on $\dpath$, there is an edge $(s,u)$ in $G$ such that \[\bsfd(u,t) \le \dpath(u,t) < \dpath(s,t) = \bsfd(s,t).\] 
        \item The case that $t = \bi^\ast$ can be handled identically to the above cases.
    \end{itemize}

    Hence, $G$ is navigable on $\bsfd$, and it has maximum degree $\le 3$.
\end{proof}

\begin{lemma}\label{lem:quadratic-lb}
    Let $\calA$ be any deterministic algorithm that makes $o(n^2)$ queries to $\bsfd$ and outputs a graph $G$ with $o(n^2)$ edges. With probability $1-o(1)$ over $\bsfd \sim \calD$, the graph $\bG$ is not navigable on $\bsfd$.
\end{lemma}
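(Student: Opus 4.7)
The overall plan is an indistinguishability argument. A deterministic algorithm $\calA$ cannot tell the perturbed metric $\bsfd$ apart from the baseline path metric $\dpath$ unless it directly queries the hidden pair $\{\bi^\ast, \bj^\ast\}$. Since $\calA$ makes only $o(n^2)$ queries while $\{\bi^\ast, \bj^\ast\}$ is uniform over $\binom{[n]}{2}$, with probability $1-o(1)$ the perturbation goes undetected, and the algorithm outputs exactly the graph it would have produced on $\dpath$. That graph is oblivious to the shortcut, so it cannot satisfy the strict navigability constraint from $\bi^\ast$ to $\bj^\ast$ on $\bsfd$.

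Concretely, first consider the (fully determined) execution of $\calA$ on the fixed input $\dpath$: let $Q \subseteq \binom{[n]}{2}$ denote the unordered pairs it queries and $G^{\circ} = ([n], E^{\circ})$ the graph it outputs, with $|Q|, |E^{\circ}| = o(n^2)$ by hypothesis. A standard coupling argument, by induction on the query count, shows that if $\bsfd \sim \calD$ has perturbed pair $\{\bi^\ast, \bj^\ast\} \notin Q$, then the executions of $\calA$ on $\bsfd$ and on $\dpath$ agree step for step: as long as all previously received answers have been identical, determinism forces $\calA$ to issue the same next query, and that query (being different from $\{\bi^\ast, \bj^\ast\}$) receives the same answer from $\bsfd$ as from $\dpath$. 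Hence $\calA$ outputs $G^{\circ}$ on $\bsfd$ whenever $\{\bi^\ast, \bj^\ast\} \notin Q$.

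Next, I would show that on $\bsfd$ the only vertex $u$ capable of witnessing $1$-navigability from $\bi^\ast$ to $\bj^\ast$ is $u = \bj^\ast$ itself. Indeed, for any $u \in [n] \setminus \{\bi^\ast, \bj^\ast\}$, the pair $\{u, \bj^\ast\}$ is untouched by the perturbation, so $\bsfd(u, \bj^\ast) = \dpath(u, \bj^\ast) \geq 1 + \tfrac{1}{n-1} > 1 = \bsfd(\bi^\ast, \bj^\ast)$, ruling out the strict inequality $\bsfd(u, \bj^\ast) < \bsfd(\bi^\ast, \bj^\ast)$. Thus $G^{\circ}$ is $1$-navigable on $\bsfd$ only if the edge $(\bi^\ast, \bj^\ast)$ lies in $E^{\circ}$. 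By the union bound,
\[
\Pr_{\bsfd \sim \calD}\!\bigl[\{\bi^\ast, \bj^\ast\} \in Q \cup E^{\circ}\bigr] \;\leq\; \frac{|Q| + |E^{\circ}|}{\binom{n}{2}} \;=\; o(1).
\]
On the complementary event, the coupling forces $\calA$ to output $G^{\circ}$, and $G^{\circ}$ is missing the shortcut edge, so it fails navigability from $\bi^\ast$ to $\bj^\ast$. This gives the desired $1-o(1)$ failure probability.

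The only delicate point is the step-by-step coupling that ``decouples'' $\calA$'s behavior from the perturbation whenever the hidden pair is never queried; this is routine but must be argued carefully so that no adaptive query pattern accidentally reveals $\{\bi^\ast, \bj^\ast\}$. Once that is in place, everything else reduces to a direct counting argument combined with the trivial distance lower bound $\bsfd(u, \bj^\ast) > 1$ for $u \notin \{\bi^\ast, \bj^\ast\}$.
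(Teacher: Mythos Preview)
Your proposal is correct and follows essentially the same approach as the paper's proof: both argue that with probability $1-o(1)$ the algorithm never queries the hidden pair $\{\bi^\ast,\bj^\ast\}$, hence behaves exactly as on $\dpath$ and outputs a fixed sparse graph that (again with probability $1-o(1)$) omits the mandatory edge $(\bi^\ast,\bj^\ast)$. Your version is slightly more explicit about the step-by-step coupling with the run on $\dpath$, whereas the paper folds this into the conditioning on the event $\{\{\bi^\ast,\bj^\ast\}\notin\bS\}$, but the substance is the same.
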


\begin{proof}
Let $\bS \subset \binom{[n]}{2}$ denote the set of distance queries made by $\calA$. Define the event
\[
\calbE := \left\{ \{\bi^\ast, \bj^\ast\} \notin \bS \right\}.
\]
Since $\{\bi^\ast, \bj^\ast\}$ is drawn uniformly at random from $\binom{[n]}{2}$ and $|\bS| = o(n^2)$, we have $\Pr[\calbE] = 1 - o(1)$.

Let $\bG = ([n], \bE)$ be the output graph of $\calA$, which is entirely determined by the responses to the queries in $\bS$. Conditioning on $\calbE$, the pair $\{\bi^\ast,\bj^\ast\}$ is uniform over $\binom{[n]}{2}\setminus \bS$. Since $|\bE| = o(n^2)$, the algorithm $\calA$ includes the pair $(\bi^\ast, \bj^\ast)$ in $\bE$ only with probability $o(1)$. Therefore,
\[
\Prx_{\bsfd} \left[ (\bi^\ast, \bj^\ast) \in \bE \right] \leq \Prx_{\sfd}\left[\bar{\calbE}\right] + \Prx_{\bsfd} \left[ (\bi^\ast, \bj^\ast) \in \bE \mid \calbE \right] = o(1).\]

Since $\bsfd(\bi^\ast, \bj^\ast) = 1$ is the minimum distance in the metric, any navigable graph on $\bsfd$ must contain the edge $(\bi^\ast, \bj^\ast)$. Therefore,
\[
\Prx_{\bsfd} \left[ \bG \text{ is navigable on } \bsfd \right] = o(1),
\]
as claimed.
\end{proof}

\begin{proof}[Proof of Theorem~\ref{thm:query-lb}]
By Yao’s minimax principle, it suffices to prove the lower bound against deterministic algorithms under the random input metric $\bsfd\sim\calD$ from Definition~\ref{def:perturbed-path}. Let $\calA$ be any deterministic algorithm that makes $o(n^2)$ queries to the metric $\bsfd$ and outputs a graph $\bG = ([n], \bE)$ that, with constant probability, is an $o(n)$-approximation to the sparsest navigable graph on $\bsfd$, under either the $\maxdeg$ or $\avgdeg$ objective.

By Lemma~\ref{lem:sparse-nav-path}, there exists a navigable graph on $\bsfd$ with maximum degree at most $3$, and hence $O(n)$ total edges. Therefore, any $o(n)$-approximation to the sparsest navigable graph on $\bsfd$ (under either objective) must contain $o(n^2)$ edges.

However, Lemma~\ref{lem:quadratic-lb} shows that no deterministic algorithm making $o(n^2)$ queries can, with constant probability, output a navigable graph on $\bsfd$ with $o(n^2)$ edges. Therefore, any such algorithm $\calA$ must make $\Omega(n^2)$ queries.
\end{proof}

\bibliographystyle{alpha} 
\bibliography{references, waingarten} 

\appendix 

\section{Missing Proofs from Section~\ref{sec:upper-bounds}} \label{appendix:missing-proofs-ub}

\harmonicbalancing*
\begin{proof}
    Let $\ell \in [L]$ maximize the quantity $\ell\cdot\alpha_{\ell}$. For each $j\in[L]$, we have
    \[
        \alpha_j \leq \frac{\ell\cdot\alpha_{\ell}}{j}.
    \]
    Summing this inequality over all $j \in [L]$ yields
    \[
        A = \sum_{j=1}^{L}\alpha_j \leq \sum_{j=1}^{L}\frac{\ell\cdot\alpha_{\ell}}{j} 
        = (\ell\cdot\alpha_{\ell})\cdot H_L,
    \]
    as desired.
\end{proof}

\sizetesting*
\begin{proof}
    Suppose that $|S|\geq \alpha N$. Then, $\Ex[|S\cap \bX|] \geq \alpha T$, and by a lower-tail multiplicative Chernoff bound, \[\Prx[|S\cap \bX| < \alpha T / 2] \leq\exp\del{-\frac{\alpha T}{8}} \leq \exp\del{-\frac{\alpha T}{12}}.\] Suppose that $|S|\leq \alpha N / 4.$ Since $\Pr[|S\cap \bX| \geq \alpha T/2]$ increases monotonically with $|S|$, it suffices to consider the setting of $|S| = \alpha N / 4$, for which $\Ex[|S\cap \bX|] = \alpha T / 4$. By an upper-tail multiplicative Chernoff bound, \[\Pr[|S\cap \bX| \geq \alpha T / 2] \leq \exp\del{-\frac{\alpha T}{12}},\] completing the proof.
\end{proof}

\fastuncoveredsampling*
\begin{proof}
We implement $\Ualive$ using a complete binary search tree on $2^{\ceil*{\log_2 n}}$ leaves, where each element of $\calU$ corresponding to one leaf. To initialize $\Ualive = \calU$, we store at each internal node $v$ a counter
\[
  c_v = \abs{\calU\cap \text{(leaves in the subtree of }v)},
\]
which takes time $O(n)$.

\begin{itemize}
    \item \textbf{Checking emptiness.} $\Ualive=\emptyset$ if and only if the root $r$ has $c_r=0$, which can be tested in $O(1)$ time.

    \item \textbf{Deletion.} To delete an element $x$ from $\Ualive$, we first confirm that $x\in\Ualive$ by searching for its leaf in the tree and checking that $c_x = 1$. If so, then for each vertex $v$ in the unique path from the root to that leaf, we decrement $c_v$ by $1$.  Since the tree has height $O(\log n)$, this takes $O(\log n)$ time.

    \item \textbf{Random sampling.} To sample a uniformly random element of $\Ualive$, we traverse a random root-to-leaf path as follows. Suppose we are at node $v$ with children $v_{\textup{left}},v_{\textup{right}}$.  We choose to go to
\[
    \begin{cases}
        v_{\textup{left}} & \text{with probability } c_{v_{\textup{left}}}/c_v \\
        v_{\textup{right}} & \text{with probability } c_{v_{\textup{right}}} / c_v
    \end{cases}
\]
Since $c_v=c_{v_{\textup{left}}}+c_{v_{\textup{right}}}$, this selects a leaf in proportion to the number of elements in $\Ualive$ beneath each child.  Upon reaching a leaf, we return its corresponding element, which is a uniformly random sample from $\Ualive$.  Each step takes $O(1)$ time, and there are $O(\log n)$ steps, so sampling runs in $O(\log n)$ time.
\end{itemize}
This completes the proof.
\end{proof}

\end{document}